\documentclass[a4paper,12pt]{article}
\pdfoutput=1
\usepackage{jheppub}
\usepackage[T1]{fontenc}
\usepackage{amsthm}
\graphicspath{{figures/}}
\makeatletter
\renewcommand{\@fpheader}{}
\makeatother
\newtheorem{proposition}{Proposition}[section]
\newtheorem{theorem}[proposition]{Theorem}
\newtheorem{corollary}[proposition]{Corollary}
\renewcommand{\beforetochook}{\pagestyle{myplain}\pagenumbering{roman}\small}
\renewcommand{\afterTocSpace}{\normalsize\bigskip\medskip}
\renewcommand{\afterTocRuleSpace}{\clearpage}
\hypersetup{pdftitle={Relativistic elastic shells: material support and cavity geometry},
 pdfauthor={An T. Le},pdflang={en-US},bookmarksdepth=2}

\title{Relativistic elastic shells: material support and cavity geometry}
\author[a,b,c]{An T. Le}
\affiliation[a]{Center for Environmental Intelligence, VinUniversity, Hanoi, Vietnam}
\affiliation[b]{College of Engineering and Computer Sciences, VinUniversity, Hanoi, Vietnam}
\affiliation[c]{Intelligent Autonomous Systems, TU Darmstadt, Germany}
\emailAdd{an.lt@vinuni.edu.vn}
\emailAdd{an@robot-learning.de}
\keywords{relativistic elasticity, self-gravitating shells, frame dragging, tidal fields}

\abstract{
We derive the source moments that determine the leading tidal field and
rotational clock shift inside a finite-thickness relativistic elastic shell. In weak
gravity and slow rotation, these moments depend on the elastic response
of the material and its two free vacuum faces. Two choices of elastic
moduli, each causal near relaxation, give opposite signs for both observables.
Both shells have oblate faces and gain rotational mass. The tide depends on
differential flattening, bulk deformation, and rotational stress and energy;
the central clock shift measures a distinct inverse-radius moment.
A single constitutive action determines the material support and response.
Nonnegative isotropic pressure cannot support a static spherical shell
of positive-density matter with two free vacuum faces, whereas tangential
elastic stress admits equilibria near relaxation. For sufficiently weak
gravity these equilibria obey the dominant energy condition and have
subluminal sound speeds. Their static cavities are locally flat, with
clocks redshifted relative to infinity. At leading weak-gravity order,
static redshift and linear frame dragging probe the same inverse-radius
mass moment. Energy estimates
establish radial linear stability and exclude exponentially growing
odd-parity modes for weakly gravitating spherical equilibria. Nonlinear
compression can nevertheless produce unbounded gradients while the material
remains causal. Nonspherical even-parity and rotating stability, together
with quantitative bounds at finite rotation, remain open. The cavity
observables distinguish material responses that surface shape and total
mass alone do not determine.
}

\begin{document}
\maketitle

\section{Introduction}
\label{sec:intro}

Inside a static spherical shell, a clock can remain at rest without support
while ticking more slowly than a clock at infinity. The regular vacuum
cavity is locally flat; the redshift compares clocks in different parts of
the same spacetime. Slow rotation adds a relative rotation of the cavity's
inertial frame, followed at second order by tidal curvature that changes
the separation of freely falling tracers. We ask how the matter surrounding
the cavity determines these measurements.

The support problem comes first. Gravity pulls the shell inward, but neither
vacuum face exerts a force on the material. A static spherical shell cannot
meet these conditions with nonnegative isotropic pressure alone. We use a
finite-thickness elastic annulus, whose tangential compression supplies the
missing support. One elastic action determines both the equilibrium stress
and its response to rotation. The cavity contains only test clocks and
freely falling tracers; their stress-energy is neglected.

Our main result is an explicit relation between the elastic response and
two cavity observables: the leading tide and the rotational correction to
the central clock rate. Their source moments probe different features of
the material. In weak gravity, the static redshift and linear frame dragging
weight each spherical mass layer by its inverse radius. The leading cavity tide is more sensitive
to matter near the inner face. It depends on how the two faces flatten
relative to one another, as well as on deformation and stress within the
material. Two choices of elastic moduli, each causal near relaxation,
give oblate faces, flattened
along the rotation axis, yet interchange the stretching and focusing
directions of nearby tracers.
The shape of either surface alone therefore does not determine the cavity
geometry. Rotation also changes the central clock rate. Its correction
depends on another source moment, so an increase in total mass need not
make that clock run more slowly.

Finite-thickness elastic shells and their two free boundaries were studied
by Losert-Valiente Kroon~\cite{losert2006}, including a Newtonian existence
argument. Brito, Carot and Vaz~\cite{brito2020} constructed relativistic
elastic thick shells with causal propagation and dominant-energy bounds,
matched to cosmological regions. Here both adjacent regions are vacuum,
and the exterior is asymptotically flat. Einstein--Vlasov shells can also
have vacuum
centers~\cite{rein1999}, and Einstein clusters provide a circular-orbit
support mechanism~\cite{boehmer2007}. We use the material framework of
relativistic elasticity~\cite{karlovini2003,alho2024} and established static and
rotating elastic-body theory~\cite{andersson2008,andersson2010}.
The new calculation identifies the source moments for the cavity tide and
clock shift and evaluates them for two materials with opposite responses,
both causal near relaxation. Static shell existence and the rotating elastic-body framework
provide the foundations; the rotating existence result is an application
of that established theory.

The connection to subluminal warp shells is the geometry of a moving body
with a vacuum interior~\cite{fuchs2024constvel}. Recent analyses emphasize
the distinction between coordinate motion and material motion, and between
a prescribed stress tensor and a matter model with its own equations of
motion~\cite{barzegar2026classification,buchert2026realizations}.
Here the action specifies those equations and the stresses at both faces.
The solutions describe gravitating elastic bodies with cavities. Their
nonconstant lapse and curved spatial slices lie outside the unit-lapse,
spatially flat ansatz of many warp-metric restrictions~\cite{lobo2004,santiago2021};
Olum's superluminal-travel criterion~\cite{olum1998} has different hypotheses.
Neither propulsion nor a travel-time advantage follows from the cavity.

Sections~\ref{sec:isotropic}--\ref{sec:centrifugal} develop this sequence:
material support, static redshift, frame dragging, and the changes in
curvature and clock rate caused by rotation.
Section~\ref{sec:stability-summary} separates linear stability from
nonlinear smooth evolution and discusses uniform motion and additional
loads. The appendices give the coefficient derivations, dynamical proofs,
and exact transformation to a uniformly moving shell.

The signature is $(-+++)$,
$R^a{}_{bcd}=\partial_c\Gamma^a_{db}-\partial_d\Gamma^a_{cb}
+\Gamma^a_{ce}\Gamma^e_{db}-\Gamma^a_{de}\Gamma^e_{cb}$, and the
cosmological constant vanishes. We use $c=1$ and usually $G=1$, restoring
$G$ in existence and dynamical arguments. The mass function $m$, total
mass $M$, angular momentum $J$, and four-momentum $P^\mu$ are geometrized;
when $G$ is explicit, their physical values are $m/G$, $M/G$, $J/G$, and
$P^\mu/G$.

\section{Free boundaries and material support}
\label{sec:isotropic}

Let $r$ be the areal radius, so that each sphere has area $4\pi r^2$.
Write the static spherical metric as
\begin{equation}
 ds^2=-e^{2\Phi(r)}dt^2+\frac{dr^2}{1-2m(r)/r}+r^2d\Omega^2,
 \qquad F(r)=1-\frac{2m(r)}r>0.
 \label{eq:spherical}
\end{equation}
For $T^{\hat a}{}_{\hat b}=\operatorname{diag}(-\rho,p_r,p_t,p_t)$,
the independent equations can be written
\begin{align}
 m'&=4\pi r^2\rho,&
 \Phi'&=\frac{m+4\pi r^3p_r}{r(r-2m)},\label{eq:spherical-einstein}\\
 p_r'&=-(\rho+p_r)\Phi'+\frac{2(p_t-p_r)}r.
 &&\label{eq:anisotropic-tov}
\end{align}
Here $\rho$ is the energy density, $p_r$ the radial pressure, and $p_t$ the
tangential pressure, all measured in the material rest frame. The function
$m(r)$ is the enclosed geometrized mass; $e^\Phi$ sets the rate of a
stationary clock relative to the time coordinate $t$.
The last equation is radial stress conservation. Together with the first two,
it supplies the angular Einstein equation in the smooth bulk.
For a shell occupying $a\le r\le b$, with an empty regular cavity,
$m(a)=0$. Matching to vacuum without a surface layer requires
$p_r(a)=p_r(b)=0$, with continuous induced metric and extrinsic curvature
on each timelike boundary~\cite{israel1966}. The induced metric gives
distances and times along the boundary; the extrinsic curvature describes
how it is embedded in spacetime. Continuity of both is called Darmois
matching and excludes an additional sheet of stress-energy on either face.
For the metric above, the nontrivial mixed extrinsic curvatures of a static
face, with the same increasing-$r$ normal on its two sides, are
$\mathcal K^t{}_t=\sqrt F\,\Phi'$ and
$\mathcal K^\theta{}_\theta=\mathcal K^\phi{}_\phi=\sqrt F/r$.
Continuity of $m$ and $e^\Phi$, together with $p_r=0$, makes these equal
to their vacuum values. The density itself may jump: it enters $m'$, but
does not create a surface layer when these matching conditions hold.

The support mechanism is visible directly in stress conservation:
\begin{equation}
 -p_r'-(\rho+p_r)\Phi'+\frac{2(p_t-p_r)}r=0,\qquad
 \int_a^b\frac{2(p_t-p_r)}r\,dr
   =\int_a^b(\rho+p_r)\Phi'\,dr.
 \label{eq:integrated-support}
\end{equation}
Multiplying the local balance by $\sqrt F$ gives proper radial force
densities. The integral identity follows from $p_r(a)=p_r(b)=0$. If the gravitational
term on the right is positive, tangential pressure must exceed radial
pressure somewhere. Here positive $p_i$ denotes compression. Its anisotropic
contribution can point outward even though gravity points inward; a
traction-free face requires $p_r=0$, not $p_t=0$.

\begin{proposition}[Isotropic inner-boundary obstruction]
\label{prop:isotropic}
Let $0<a<b$, and let a horizon-free static spherical shell have continuous,
nonnegative density and pressure, isotropic stress, and nonzero density
somewhere in $(a,b)$. Assume the bulk Einstein equations, differentiability
there, $m(a)=0$, and vacuum matching without a surface layer at $a$.
These assumptions are incompatible.
\end{proposition}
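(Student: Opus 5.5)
The plan is to argue directly from the isotropic form of stress conservation \eqref{eq:anisotropic-tov}, using the inner-face conditions. The first step shows that the pressure must vanish identically. The second shows that a vanishing pressure cannot hold up any positive density.

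\emph{Step 1: the pressure vanishes.} Set $p_r=p_t=p\ge0$. Then the anisotropic term drops out of \eqref{eq:anisotropic-tov}, leaving $p'=-(\rho+p)\Phi'$ on $(a,b)$. From $m'=4\pi r^2\rho\ge0$ and $m(a)=0$ we get $m\ge0$. Horizon-freeness gives $r-2m>0$, so \eqref{eq:spherical-einstein} yields $\Phi'=(m+4\pi r^3p)/(r(r-2m))\ge0$. With $\rho+p\ge0$ this gives $p'\le0$ throughout $(a,b)$. Vacuum matching without a surface layer at $a$ gives $p(a)=0$. Since $p$ is continuous on $[a,b)$ and differentiable inside, the mean value theorem shows $p(r)\le p(a)=0$ for $a<r<b$. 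Combined with $p\ge0$, this gives $p\equiv0$ on the shell.

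\emph{Step 2: the density vanishes.} With $p\equiv0$, the balance equation reduces to $0=-\rho\,\Phi'=-\rho\, m/(r(r-2m))$. Hence $\rho(r)\,m(r)=0$ at every $r\in(a,b)$. By hypothesis there is a point $r_1\in(a,b)$ with $\rho(r_1)>0$. By continuity, $\rho>0$ on some interval $[r_1-\delta,r_1+\delta]\subset(a,b)$. Because $m$ is nondecreasing with $m(a)=0$,
\begin{equation*}
m(r_1)\ \ge\ \int_{r_1-\delta}^{r_1}4\pi r^2\rho\,dr\ >\ 0 .
\end{equation*}
So $\rho(r_1)m(r_1)>0$, which contradicts Step 2. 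Therefore the assumptions are incompatible.

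The only delicate point is regularity at the face $r=a$. Differentiability is assumed only in the bulk, so the monotonicity argument must use continuity of $p$ up to $a$ together with the mean value theorem on $[a,r]$, rather than differentiating at the endpoint. Continuity of $\rho$ is used only to get positive density on an interval, which makes the enclosed mass strictly positive where the density is. I also note that the argument uses only the inner face and $m(a)=0$, not the outer face. This matches the proposition's name: the inner boundary alone obstructs isotropic support.
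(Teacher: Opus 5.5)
Your proposal is correct and follows the paper's proof. Both use $m\ge0$ and the isotropic stress-conservation equation to get $p'\le0$, force $p\equiv0$ from $p(a)=0$ and $p\ge0$, and then reach a contradiction at an interior point where $\rho>0$ and $m>0$; your use of the mean value theorem at $r=a$ just spells out the paper's appeal to monotonicity.
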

\begin{proof}
Equations~\eqref{eq:spherical-einstein}--\eqref{eq:anisotropic-tov} give
$m\ge0$ and
\begin{equation}
 p'= -\frac{(\rho+p)(m+4\pi r^3p)}{r(r-2m)}\le0.
 \label{eq:isotropic-tov}
\end{equation}
Since $p(a)=0$ and $p\ge0$, monotonicity forces $p=0$ throughout the shell.
Continuity and nontriviality of $\rho$ provide an interior point with both
$\rho>0$ and $m>0$. Equation~\eqref{eq:isotropic-tov} then gives $p'<0$,
contradicting $p=0$.
\end{proof}
Tangential stress can supply the required support while obeying the
dominant energy condition; Appendix~\ref{sec:exact} gives an explicit example
with $p_r=0$. To determine how a shell responds when its load changes,
we need a law relating stress to deformation.

\subsection{A material law for support and response}

Material coordinates $X^A$ are labels carried by the particles. A fixed
Euclidean material metric $k_{AB}$ records their separations in the relaxed,
unstressed body. The spacetime metric gives their actual proper separations.
In the three principal directions, let $n_i^{-1}$ be the stretch relative
to relaxation. Thus $n_i=1$ at relaxation and compression increases $n_i$;
the product $n=n_1n_2n_3$ is the relative particle number density.

These quantities have a covariant definition. Set
$H^{AB}=g^{ab}\partial_aX^A\partial_bX^B$; the $n_i^2$ are the eigenvalues
of $H^{AC}k_{CB}$. The material map has rank three and a timelike kernel,
spanned by the future unit material velocity $u^a$, so $u^a\partial_aX^A=0$.
We choose a compressional energy and the quasi-Hookean shear invariant of
Ref.~\cite{karlovini2003}:
\begin{align}
 s^2&=\frac1{12}\sum_{i<j}\left(\frac{n_i}{n_j}-\frac{n_j}{n_i}\right)^2,
 \nonumber\\
 \rho(n_i)&=m_0n+\frac{\kappa}{2}(n-1)^2+\mu_0 n s^2,
 \qquad p_i=n_i\frac{\partial\rho}{\partial n_i}-\rho.
 \label{eq:elastic-law}
\end{align}
The three terms in $\rho$ account for rest mass, volume change, and shear.
The constant $m_0$ is the relaxed energy density, while $\kappa$ and $\mu_0$
are the bulk and shear moduli at relaxation. All three have units of energy
density; away from relaxation the shear coefficient is $\mu_0n$.
The material action is
$S_m=-\int_{\mathcal B}\rho\sqrt{-g}\,d^4x$ over the occupied material
worldtube $\mathcal B$; its exterior is vacuum. Its variations give
\begin{equation}
 T_{ab}=2\rho_{H^{AB}}\partial_aX^A\partial_bX^B-\rho g_{ab},\qquad
 \nabla_a\!\left(2\rho_{H^{AB}}\nabla^aX^B\right)=0
 \label{eq:elastic-source}
\end{equation}
in Cartesian material coordinates, together with zero physical traction on
each free face. Since $u^a\partial_aX^A=0$, the tensor obeys
$T_{ab}u^b=-\rho u_a$; its spatial eigenvalues are the $p_i$ in
Eq.~\eqref{eq:elastic-law}. The law is used only near relaxation. It describes
a homogeneous, isotropic, perfectly elastic solid: there is no plasticity,
viscosity, or fracture criterion. Material isotropy means that the relaxed
body has no preferred local direction. A deformation can still produce
unequal principal pressures, including the difference between $p_r$ and
$p_t$ needed for shell support.
There is no separate surface energy or surface tension in this action.
The solid can have nonzero density at a free face. In the distributional
conservation law, extending its stress by zero into vacuum produces a
surface term proportional to $T^{ab}n_a$; zero traction removes that term.
Thus a sharp material boundary is consistent with stress conservation,
without prescribing an additional transition layer.
The state $n_i=1$ has zero stress. Its transverse and longitudinal
squared characteristic speeds are respectively
$\mu_0/m_0$ and $(\kappa+4\mu_0/3)/m_0$.
Transverse waves probe shear, while longitudinal waves also compress the
material. Positive shear stiffness allows neighboring radial layers to
support one another while both faces remain free.

The stress tensor has a material rest frame and three principal stresses,
so it is of Hawking--Ellis Type I~\cite{martinmoruno2018core}.
Its rest-frame energy density is $\rho$, and its principal pressures are
$p_1,p_2,p_3$. The null (NEC), weak (WEC), and dominant (DEC) energy
conditions therefore reduce to
\begin{align}
 \mathrm{NEC}&\iff \rho+p_i\ge0\quad\hbox{for each }i,\nonumber\\
 \mathrm{WEC}&\iff \rho\ge0\ \hbox{and NEC},\nonumber\\
 \mathrm{DEC}&\iff \rho\ge |p_i|\quad\hbox{for each }i.
 \label{eq:ec}
\end{align}
These rest-frame inequalities characterize the energy conditions for all
causal observers. In particular, the DEC requires nonnegative energy density
and a causal energy flux for every timelike observer.

\section{Elastic equilibrium, mass, and cavity redshift}
\label{sec:elastic}

We distinguish material radii $A\le R\le B$ in the relaxed reference
annulus from physical areal radii $a\le r\le b$ in the equilibrium shell.
The strain determines the map between them. The dimensionless static
parameters are
\begin{equation}
 \beta=B/A,\qquad \chi=Gm_0A^2,\qquad
 k=\kappa/m_0,\qquad \mu=\mu_0/m_0.
 \label{eq:dimensionless-parameters}
\end{equation}
At fixed $(\beta,k,\mu)$, weak gravity means $\chi\to0$.
Rotation introduces $\varepsilon_\Omega=(A\varpi)^2$, where $\varpi$ is
the angular velocity measured relative to time at infinity.
These parameters separate geometry, gravitational strength, and material
stiffness; the individual choices of length and density units do not select
a special shell. The examples below use $(\beta,k,\mu)=(2,1/5,1/10)$.
Their relaxed sound speeds are $c_T=1/\sqrt{10}$ and $c_L=1/\sqrt3$,
so this is a relativistically stiff ideal material.
Weak gravity at fixed $(k,\mu)$ is therefore not a nonrelativistic
material limit. In particular, the leading rotational calculation retains
special-relativistic strain and stress contributions even as $\chi\to0$.
All small-parameter limits below keep $(\beta,k,\mu)$ fixed, with positive
moduli. Slow rotation must also keep the strain small. Softer materials
deform more at a given rim speed $B|\varpi|$, so a small speed alone is
not an error bound for the deformation expansion.

For a spherical material shell $A\le R\le B$, put
$\nu=n_r=\sqrt F\,dR/dr$ and $\eta=n_t=R/r$. In material radius the full
static equations are
\begin{equation}
 \begin{gathered}
 \begin{aligned}
 r_R&=\frac{\sqrt F}{\nu}, &
 m_R&=4\pi r^2\rho\,r_R,\\
 \eta_R&=\frac1r-\frac\eta r r_R, &
 \Phi_R&=\frac{m+4\pi r^3p_r}{r(r-2m)}r_R,
 \end{aligned}\\
 \nu_R=\frac{[-(\rho+p_r)\Phi_r+2(p_t-p_r)/r]r_R
                    -(\partial_\eta p_r)\eta_R}{\partial_\nu p_r}.
 \end{gathered}
 \label{eq:elastic-ode}
\end{equation}
Here $p_r,p_t$ are evaluated from Eq.~\eqref{eq:elastic-law} at
$(n_1,n_2,n_3)=(\nu,\eta,\eta)$. The boundary conditions are
\begin{equation}
 m(A)=0,\quad p_r(A)=p_r(B)=0,\quad
 \Phi(B)=\tfrac12\log[1-2m(B)/r(B)].
 \label{eq:elastic-bc}
\end{equation}
The material strain determines the physical radii.

\begin{samepage}
\begin{theorem}[Nonlinear spherical elastic shells]
\label{thm:elastic-existence}
Fix $0<A<B$ and positive $m_0,\kappa,\mu_0$ with
$\kappa+4\mu_0/3<m_0$. For all sufficiently small $G>0$, the material
law~\eqref{eq:elastic-law} admits a locally unique spherical static shell
near $r=R$, with a flat vacuum cavity, a Schwarzschild exterior, and zero
traction at both faces. The material satisfies strict DEC and has real,
strictly subluminal characteristic speeds.
\end{theorem}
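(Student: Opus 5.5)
The plan is an implicit function theorem argument in the gravitational coupling $G$ around the relaxed configuration. At $G=0$ the configuration $r=R$, $\nu=\eta=1$, $m\equiv0$, $\Phi\equiv0$ solves the system~\eqref{eq:elastic-ode}--\eqref{eq:elastic-bc} exactly: all stresses vanish at $n_i=1$, so both faces are traction free. I would first reduce the free-boundary problem to a shooting problem on the fixed material interval $[A,B]$. The unknowns are the inner areal radius $a=r(A)$ and the inner radial compression $\nu(A)$. The condition $p_r(A)=0$ determines $\nu(A)$ as a smooth function of $\eta(A)=A/a$ near $1$, since $\partial_\nu p_r=\kappa+4\mu_0/3>0$ at relaxation. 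With $m(A)=0$, I then integrate the regular ODE system in $R$ from $A$ to $B$. The nondegeneracy $\partial_\nu p_r\neq0$ persists along the solution by continuity, so the system stays nonsingular. $\Phi$ decouples up to an additive constant, which is fixed afterwards by the exterior condition in~\eqref{eq:elastic-bc}. This leaves a single scalar equation $\mathcal F(a,G)\equiv p_r(B;a,G)=0$, smooth in $(a,G)$ by smooth dependence of ODE solutions on parameters.

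The key step is to show $\partial_a\mathcal F(A,0)\neq0$. At $G=0$ the system is exactly flat Newtonian-free linear elasticity in spherical symmetry once linearized. The perturbation $r=R+\xi(R)$ satisfies the Navier equation $\xi''+2\xi'/R-2\xi/R^2=0$, with general solution $\xi=c_1R+c_2/R^2$. Imposing zero radial traction at $A$ fixes $c_2$ in terms of $c_1$. The radial traction at $B$ is then a nonzero multiple of $c_1$, since the only traction-free solution on both faces is the trivial one whenever $\kappa>0$ and $\mu_0>0$. This is the same linear-algebra fact that rules out homogeneous rigid dilations of a free annulus. Hence $\partial_a\mathcal F\neq0$, and the implicit function theorem gives a locally unique $a(G)$ for small $G$, with $a(G)\to A$ and the whole solution $O(G)$-close to $r=R$ in $C^1$. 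I expect this transversality computation, done carefully in the $(\nu,\eta)$ variables so that the linearization of the full nonlinear law~\eqref{eq:elastic-law} is seen to be the Navier operator, to be the main technical point. Uniqueness is local in the sense of the implicit function theorem applied to $(a,\nu(A))$ near $(A,1)$.

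The remaining properties follow by continuity. $m(A)=0$ with $p_r=0$ and continuous $e^\Phi$ give Darmois matching to the flat cavity. The constant $\Phi$ inside is fixed by matching at $B$ to Schwarzschild with mass $m(B)>0$, and $F>0$ holds since $Gm=O(G)$. Near relaxation the stresses are $O(G)$ while $\rho\to m_0>0$, so $\rho>|p_i|$ strictly: this is strict DEC by~\eqref{eq:ec}. The characteristic speeds are continuous in $n_i$. At $n_i=1$ they equal $\mu_0/m_0$ and $(\kappa+4\mu_0/3)/m_0$, which are positive and, by hypothesis, less than $1$ (note $\mu_0<\kappa+4\mu_0/3<m_0$). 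The acoustic tensor stays positive definite and below the light cone for strains that are $O(G)$-small, so the speeds remain real and strictly subluminal for sufficiently small $G$.
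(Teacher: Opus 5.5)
Your proposal is correct and follows essentially the same route as the paper: you shoot on the inner areal radius with $\nu(A)$ fixed by inner traction through $\partial_\nu p_r=\kappa+4\mu_0/3>0$, get transversality of the outer traction from the $G=0$ equation $\xi''+2\xi'/R-2\xi/R^2=0$, apply the implicit function theorem, and obtain strict DEC and the acoustic bounds by continuity from relaxation. The only difference is that the paper evaluates the derivative explicitly, $\partial_q\mathcal R(A,0)=-12\kappa\mu_0(1-A^3/B^3)/[A(4\mu_0+3\kappa)]$, whereas you infer that it is nonzero because no nontrivial spherical displacement is traction free on both faces (the normalization $\xi(A)=1$, implicit in differentiating with respect to $a$, is what guarantees $c_1\neq0$ there).
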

\end{samepage}
\begin{proof}
Restore $G$ by replacing $4\pi\rho$ and $4\pi p_r$ in the mass and lapse
equations by $4\pi G\rho$ and $4\pi Gp_r$. Choose the inner physical radius
$r(A)=q$ and integrate outward from $m(A)=0$.
Since $\partial_\nu p_r=\kappa+4\mu_0/3>0$ at relaxation, the inner
traction condition determines $\nu(A)$ smoothly from $q$ near $A$.
The bulk equations are smooth on a neighborhood of the relaxed solution
throughout $[A,B]$. Let $\mathcal R(q,G)$ be the resulting outer traction;
the remaining boundary condition is $\mathcal R(q,G)=0$.

At $G=0$, the linear displacement $u=\partial r/\partial q$ solves
$u''+2u'/R-2u/R^2=0$, $u(A)=1$, with zero inner traction. Hence
\begin{equation}
 u(R)=\frac{4\mu_0 R/A+3\kappa A^2/R^2}{4\mu_0+3\kappa},\qquad
 \partial_q\mathcal R(A,0)=
 -\frac{12\kappa\mu_0}{A(4\mu_0+3\kappa)}
       \left(1-\frac{A^3}{B^3}\right)<0.
 \label{eq:shooting-derivative}
\end{equation}
The implicit function theorem gives $q(G)$ and local uniqueness in the
spherical class; exterior normalization fixes the lapse constant.
For small positive $G$, the density remains positive, $F>0$, and the
material map remains invertible. The free-traction conditions give Darmois
matching. Strict DEC and the characteristic gaps hold at relaxation and
persist uniformly on the compact material interval and sphere of wave
directions. More explicitly, the kinetic acoustic matrix is positive at
relaxation, and the symmetric spatial acoustic matrix lies strictly between
zero and the kinetic matrix for every unit wave covector. These strict
matrix inequalities persist under small strain, including at repeated
principal densities; no differentiable choice of eigenvectors is needed.
\end{proof}
This spherical construction complements the Newtonian shell
construction~\cite{losert2006} and general relativistic static-body
theory~\cite{andersson2008}.
Its small-coupling threshold is not quantified.
The zero-shear limit is singular for this construction: the shooting
derivative in Eq.~\eqref{eq:shooting-derivative} vanishes as $\mu_0\to0$.
The theorem therefore cannot be extended to a perfect fluid by
setting the shear modulus to zero.

\subsection{A weakly gravitating example}

For $A=10$, $B=20$, $m_0=10^{-6}$, $\kappa=m_0/5$, and $\mu_0=m_0/10$,
with $G=1$ in a chosen geometrized length unit, shooting gives
\begin{equation}
 r(A)\simeq9.98529,\quad r(B)\simeq19.98318,\quad M\simeq0.0292968.
 \label{eq:elastic-example}
\end{equation}
The computed profile has $n_r,n_t\in[0.99,1.01]$, the constitutive
domain treated in Appendix~\ref{app:material-bounds}. Its smallest computed
DEC margin $\rho-\max(|p_r|,|p_t|)$ is about $1.0007m_0$.
The decimal values illustrate the equilibrium; the existence proof above
is independent of this calculation.
Figure~\ref{fig:design} shows how this shell is supported. Near the
inner face, radial pressure rises outward, so its gradient force points
inward. Tangential compression supplies the outward term in
Eq.~\eqref{eq:integrated-support}; farther out, the pressure gradient
reverses and also helps oppose gravity.
Stress depends on all three strains: radial extension can coexist with
positive radial pressure because the tangential directions are compressed.

The density has a finite jump to vacuum at each free surface. The traction
conditions give Darmois matching without a surface stress; this elastic
example is piecewise smooth. The prescribed density in
Appendix~\ref{sec:exact} instead vanishes smoothly at the faces.

Appendix~\ref{app:material-bounds} derives the acoustic matrices from the
action and proves the uniform bounds
\begin{equation}
 \frac1{16}\le c_{\rm sound}^2\le\frac25,
 \qquad \rho-\max_i|p_i|>\frac{24}{25}m_0,
 \qquad n_i\in[0.99,1.01],
 \label{eq:material-box-bounds}
\end{equation}
for the selected material ratios.
The sound speeds are measured in the local material rest frame.
These conservative bounds hold for every state in the displayed strain
range. They apply to an equilibrium as long as its strain remains in that
range. They constrain local signal propagation; the growth of perturbations
of the whole gravitating shell is a separate question, treated in
Section~\ref{sec:stability-summary}.

To give the example a physical scale, choose a length unit $L_0$.
A quoted mass $\widehat M$ then corresponds to
$M_{\rm phys}=(c^2L_0/G_N)\widehat M$; energy density and stress scale as
$c^4/(G_NL_0^2)$, and angular frequency as $c/L_0$.
Weak compactness alone does not fix a laboratory mass or stress scale.
The chosen elastic action has not been identified with a laboratory material.

\subsection{The cavity clock and the mass at infinity}
\label{sec:mass-clock}

With the equilibrium fixed, the lapse determines the clock comparison.
We keep $G$ explicit in this subsection and write $e^{2\Lambda}=F^{-1}$.
In the regular cavity, $m=0$ and $\Phi=\Phi_c$ is constant. Rescaling time
there by $T=e^{\Phi_c}t$ gives the Minkowski metric. The rate relative to a
clock at infinity is fixed by matching through the shell:
\begin{equation}
 \alpha_c=e^{\Phi_c}=\sqrt{1-2M/b}\,
 \exp\!\left[-\int_a^b\frac{m+4\pi G r^3p_r}{r(r-2m)}\,dr\right].
 \label{eq:cavity-clock-rate}
\end{equation}
For the numerical example, $\alpha_c\simeq0.99811$. A stationary cavity clock has
$d\tau=\alpha_c\,dt$, and light of locally measured frequency $\nu_c$
received by an observer at rest at infinity has $\nu_\infty=\alpha_c\nu_c$,
because the photon energy associated with the stationary time symmetry is
conserved along the ray. The static observer's
proper acceleration is $\sqrt F\,\Phi'$ and vanishes throughout the cavity.
A local experiment can detect no static tidal field in this cavity,
whereas comparison with a distant clock detects the redshift. This is a
property of the isolated shell; an imposed external field would require a
different solution.

The pressure that supports the shell also contributes to the mass measured
at infinity. To relate that mass to the material energy, the spherical
field equations give the exact identity
\begin{equation}
 \left(r^2e^{\Phi-\Lambda}\Phi'\right)'
 =4\pi G r^2e^{\Phi+\Lambda}(\rho+p_r+2p_t).
 \label{eq:komar-derivative}
\end{equation}
It follows by adding the orthonormal time and spatial Einstein equations,
or by differentiating the left side and substituting
Eq.~\eqref{eq:spherical-einstein} and stress conservation. Its inner value
vanishes and its outer value is $M$. Therefore
\begin{equation}
 \frac MG=4\pi\int_a^b e^{\Phi+\Lambda}
                  (\rho+p_r+2p_t)r^2\,dr.
 \label{eq:komar-mass}
\end{equation}
This is the Komar mass, defined by the stationary time symmetry. It equals
the Arnowitt--Deser--Misner (ADM) mass at spatial infinity. In equilibrium
it also equals $M/G=4\pi\int_a^b\rho r^2dr$.
The pressure and redshift factors in the Komar integral therefore cannot
be varied independently.

The proper energy $\mathcal M_{\rm prop}=4\pi\int_a^b\rho r^2dr/\sqrt F$
instead sums the energies measured in local material rest frames.
Write $\epsilon=\rho/n$ for energy per relaxed material volume. Since
$n=\sqrt F\,R^2/(r^2r_R)$, the same integrals become
\begin{equation}
 \frac MG=4\pi\int_A^B R^2\epsilon\sqrt F\,dR,
 \qquad \mathcal M_{\rm prop}=4\pi\int_A^B R^2\epsilon\,dR.
 \label{eq:material-mass-identities}
\end{equation}
Thus $\mathcal M_{\rm prop}-M/G>0$ for a nontrivial positive-density
static shell. The ADM mass already includes gravitational binding.
The reference rest energy $\mathcal M_0=4\pi m_0(B^3-A^3)/3$ is fixed by
material content; $\mathcal M_{\rm prop}$ also includes strain energy.

On the weak-coupling branch, strain is $O(G)$ and the relaxed specific
energy is stationary, so $\epsilon=m_0+O(G^2)$. With
$\mathfrak m_0(R)=4\pi m_0(R^3-A^3)/3$,
expanding Eq.~\eqref{eq:material-mass-identities} gives
\begin{align}
 \frac MG&=\mathcal M_0-G\mathcal B_0+O(G^2),\nonumber\\
 \mathcal B_0&=4\pi m_0\int_A^B R\mathfrak m_0(R)\,dR
 =\frac{16\pi^2m_0^2}{3}
   \left[\frac{B^5-A^5}{5}-\frac{A^3(B^2-A^2)}2\right]>0.
 \label{eq:weak-binding}
\end{align}
The coefficient is the Newtonian binding energy of the relaxed annulus.
The error statement is at fixed annulus and moduli; it gives no bound at
arbitrary compactness.

\begin{figure}[tbp]
 \centering
 \includegraphics[width=\linewidth]{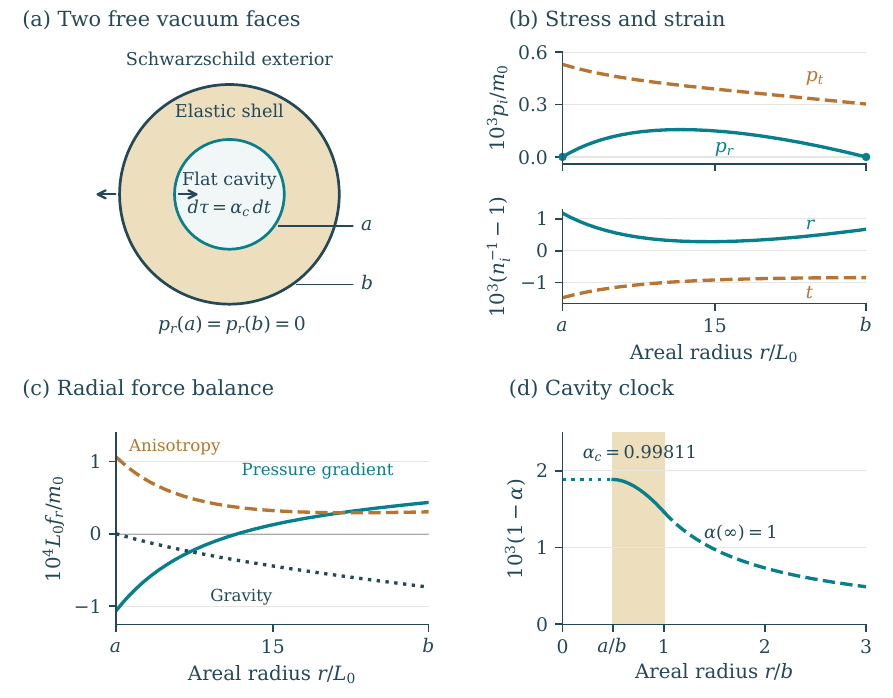}
 \caption{Static support for the elastic shell of Eq.~\eqref{eq:elastic-example}.
 (a) Areal-coordinate section with outward material normals at both free
 faces. The flat cavity has lapse $\alpha_c$ relative to infinity.
 (b) Radial and tangential pressures (positive in compression) and departures
 of the proper stretches $1/n_r=d\ell/dR$ and $1/n_t=r/R$ from unity.
 Free radial traction permits nonzero tangential stress.
 (c) Pressure-gradient, gravity and anisotropy terms of
 Eq.~\eqref{eq:integrated-support}, per unit areal radius. Outward is
 positive; their sum vanishes. Anisotropy supplies outward support near $a$.
 (d) Proper clock rate $\alpha=d\tau/dt$, with $t$ normalized at infinity.
 Shading marks the material.}
 \label{fig:design}
\end{figure}

\section{Rotation and inertial-frame dragging}
\label{sec:circulation}

Rotation changes the comparison between the cavity's inertial axes and
those at infinity. On the branch considered here, reversing the angular
velocity reverses frame dragging and leaves the shape unchanged. Dragging
therefore begins at first order, while deformation begins at second order.
We treat the dragging here and the deformation in Section~\ref{sec:centrifugal}.

An isolated shell whose particles follow its globally timelike stationary
symmetry is static under the hypotheses of Appendix~\ref{app:stationary},
even if its shape is not spherical. Internal motion is therefore necessary
for the stationary dragging studied here. The appendix also restricts
stationary label motion near relaxation to Euclidean rotations, but that
kinematic result does not establish rigid rotation in physical space.
For rigid rotation we use the following existence result.

In rigid rotation all particles have the same angular velocity, and their
strain is constant along their worldlines. The equilibrium shape can still
depend on that angular velocity.

\begin{samepage}
\begin{theorem}[Slowly rotating elastic shells~\cite{andersson2010}]
\label{thm:rotating}
Fix the law~\eqref{eq:elastic-law} with positive moduli and
$\kappa+4\mu_0/3<m_0$, and a round relaxed material annulus $A<|X|<B$.
For sufficiently small $G>0$ and $|\varpi|$, there is a stationary,
axisymmetric Einstein--elastic shell near relaxation with regular vacuum
cavity and asymptotically flat exterior. Both faces have zero traction, and
\begin{equation}
 u\parallel K+\varpi L,\qquad [K,L]=0,
 \label{eq:rigid-rotation}
\end{equation}
where $K$ is normalized at infinity and $L$ has $2\pi$-periodic axial orbits.
The matter obeys strict DEC and has causal material characteristics. The
solution is locally unique within the rigid-rotation construction after
fixing its gauge and rigid motions.
\end{theorem}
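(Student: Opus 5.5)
The plan is to obtain Theorem~\ref{thm:rotating} as an application of the implicit-function-theorem construction of Andersson, Beig and Schmidt~\cite{andersson2010}, with the static shell of Theorem~\ref{thm:elastic-existence} as the base point. The first step is to write the stationary axisymmetric Einstein--elastic system in the rigid-rotation ansatz~\eqref{eq:rigid-rotation}. In a frame co-rotating with $K+\varpi L$ the particles are at rest, so the material configuration is a map $f$ from the spatial quotient to the relaxed annulus. The field equations then become an elliptic system for $(f,g)$, with $\varpi$ entering smoothly through $|K+\varpi L|$ and the shift. I would fix the gauge exactly as in~\cite{andersson2010}: harmonic-type gauge for the metric with asymptotically flat weighted Hölder spaces, and a normalization of rigid motions by fixing the center of mass and the orientation of the labels. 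After these choices the relevant linearization is Fredholm.

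The second step is to treat $(G,\varpi)$ as parameters and linearize at $(G,\varpi)=(0,0)$. At this point the metric is Minkowski and the elastic problem is the flat-space, free-boundary Newtonian elasticity problem on the annulus $A<|X|<B$, with zero traction on both faces. Because the relaxed state is stress free, the linearized operator is the Lamé operator with Neumann (traction) data. By Korn's inequality and positivity of $\kappa,\mu_0$, its kernel is exactly the infinitesimal Euclidean motions, and the normalization removes this kernel.

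The third step is the compatibility condition for the load. The solvability of the traction Neumann problem requires the load to be orthogonal to rigid motions, which amounts to zero net force and torque; this is the condition~\cite{andersson2010} handles by the equilibration argument. The axisymmetric, equatorially symmetric gravitational and centrifugal loads of a round annulus have zero net force and zero torque about the axis by symmetry. The remaining components are fixed by the rigid-motion parameters through the equilibrated implicit function theorem. The gravitational field is recovered by solving the Einstein constraints, which is elliptic and invertible on decaying functions in weighted spaces. The implicit function theorem then gives a smooth family of solutions for small $(G,\varpi)$, unique in a neighborhood after gauge and rigid-motion fixing.

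The final step checks the physical conditions. The cavity is regular because the vacuum Einstein equations are solved smoothly across $r<A$ in the construction. Zero traction is built into the boundary operator. The strain is $O(G+\varpi^2)$, so the argument given for Theorem~\ref{thm:elastic-existence} carries over: the strict DEC and the strict acoustic-matrix inequalities hold at relaxation under $\kappa+4\mu_0/3<m_0$ and persist on the compact body for small perturbations. The 4-velocity is normalized along $K+\varpi L$, which remains timelike on the body for small $\varpi B$.

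The main obstacle is the free-boundary elliptic setup. The unknown physical domain has to be pulled back to the fixed material annulus, and one must verify that the linearized boundary operator, including both faces and the metric coupling, satisfies the complementing (Lopatinskii--Shapiro) condition. I would rely on~\cite{andersson2010} for that verification rather than redo it, and check only that the annular topology and the two boundary components do not change its hypotheses.
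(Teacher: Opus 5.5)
Your proposal follows the same route as the paper: invoke the implicit-function construction of \cite{andersson2010} (its Theorem~4.6), check that the annulus with two boundary components and the isotropic law with $m_0>0$ and a positive-definite Lam\'e form meet its hypotheses, and obtain strict DEC, timelike material flow and causal characteristics by continuity from relaxation. The differences are cosmetic. The paper fixes rigid motions by placing the gauge anchor on the axis inside the body, e.g.\ $X_0=(0,0,(A+B)/2)$, since the origin lies in the cavity, rather than at the centroid, and it works in $W^{2,p}$ and weighted $W^{2,p}_\delta$ spaces rather than H\"older spaces. Its base point is the relaxed flat configuration at $(G,\varpi)=(0,0)$, as in your second step, not the static shell you mention at the outset.
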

\end{samepage}
The annular-domain and constitutive hypotheses, including smooth
parameter dependence in material coordinates, are given in Appendix~\ref{app:rotating-existence}.

The cavity remains regular vacuum but need not remain flat; the rotating
exterior is generally not Schwarzschild. The existence neighborhood is
unquantified and need not contain the numerical shell at its stated
parameters. The first-order response about that sphere can nevertheless
be calculated without assuming a finite rotating continuation.

The linear stationary rotational equation about a spherical equilibrium is
more explicit. With $\varpi$ the angular velocity and $\omega(r)=O(\varpi)$,
\begin{equation}
 ds^2=-e^{2\Phi}dt^2+e^{2\Lambda}dr^2+r^2d\Omega^2
       -2r^2\sin^2\theta\,\omega(r)\,dt\,d\phi+O(\varpi^2).
 \label{eq:slow-rotation}
\end{equation}
The axial momentum equation is
\begin{equation}
 (r^4j\omega')'=-16\pi G r^4e^{\Lambda-\Phi}
                          (\rho+p_t)(\varpi-\omega),
 \qquad j=e^{-\Phi-\Lambda}.
 \label{eq:dragging}
\end{equation}
To obtain this equation, the shift in the convention of
Appendix~\ref{sec:source} is $\beta^\phi=-\omega$. Its only independent
nonzero extrinsic-curvature component is
\[
 K_{r\phi}=-\tfrac12 r^2\sin^2\theta\,e^{-\Phi}\omega'.
\]
Its trace vanishes. The material momentum density is
\[
 S_\phi=(\rho+p_t)r^2\sin^2\theta\,e^{-\Phi}(\varpi-\omega).
\]
Substituting these expressions into $D_jK^j{}_{\phi}=8\pi G S_\phi$
gives Eq.~\eqref{eq:dragging}, including its sign and normalization.
The source contains the tangential enthalpy $\rho+p_t$: the material's
energy density and tangential pressure both contribute to rotational inertia.
Regularity gives $\omega'=0$ in the cavity; asymptotic nonrotation gives
$\omega=2J/r^3$ in the exterior at this order, with $J$ the geometrized
ADM angular momentum.

Let $f(a)=1$, $f'(a)=0$ solve
\begin{equation}
 (r^4jf')'=16\pi G r^4e^{\Lambda-\Phi}(\rho+p_t)f,
 \qquad C=f(b)+\tfrac b3 f'(b).
 \label{eq:dragging-normalization}
\end{equation}
Then $\varpi-\omega=\varpi f/C$ in the material and
\begin{equation}
 \frac{\omega_{\rm cav}}{\varpi}=1-\frac1C,\qquad
 \frac J\varpi=\frac{b^4f'(b)}{6C}>0.
 \label{eq:dragging-response}
\end{equation}
Indeed, the positive right-hand side gives $f'>0$ beyond the inner face and
$C>f(b)>1$. Hence $0<\omega<\varpi$ for $\varpi>0$: the cavity's inertial
frame turns in the same sense as the material, at a smaller angular rate.
Within the cavity, $T=\alpha_c t$ and
$\widetilde\phi=\phi-\omega_{\rm cav}t$ make the metric Minkowskian through
first order. Its relative rotation with respect to infinity is fixed by
matching through the shell. Local flatness at this order therefore coexists
with a measurable dragging response; curvature enters at the next order.
For the numerical spherical example,
\begin{equation}
 \omega_{\rm cav}/\varpi\simeq0.002515,\qquad
 J/\varpi\simeq5.186\,L_0^3.
 \label{eq:dragging-example}
\end{equation}
Figure~\ref{fig:rotation}(a) shows this linear response.

The weak-gravity limit connects dragging directly to the static redshift.
At fixed $(\beta,k,\mu)$ the relaxed density is $m_0$, the equilibrium
pressures are $O(\chi m_0)$, and $a=A+O(\chi A)$,
$b=B+O(\chi A)$. To leading order, the cavity potential and the dragging
equation reduce to
\begin{equation}
 \Phi_c=-4\pi Gm_0\int_A^B R\,dR+O(\chi^2),\qquad
 f'(r)=\frac{16\pi Gm_0}{5r^4}(r^5-A^5)+O(\chi^2/A).
 \label{eq:weak-cavity-integrals}
\end{equation}
The first expression is the Newtonian potential at any interior point of
a spherical annulus. The second follows by integrating
$(r^4f')'=16\pi Gm_0r^4$ with $f'(A)=0$.
Substitution into $C=f(B)+Bf'(B)/3$ gives
\begin{align}
 1-\alpha_c&=2\pi\chi(\beta^2-1)+O(\chi^2),\nonumber\\
 \frac{\omega_{\rm cav}}{\varpi}
 &=\frac{8\pi}{3}\chi(\beta^2-1)+O(\chi^2)
 =\frac43(1-\alpha_c)+O(\chi^2).
 \label{eq:weak-clock-dragging}
\end{align}
Here $\omega_{\rm cav}/\varpi$ denotes the linear rotational response;
both angular rates use time normalized at infinity.
For a relaxed mass layer $d\mathcal M_0=4\pi m_0R^2dR$, the common radial
weight is $G\,d\mathcal M_0/R$. Thus both observables measure an
inverse-radius mass moment.
This is the weak-gravity form of the shell-induced dragging discussed in
Ref.~\cite{brill1966}.
The factor $4/3$ comes from the rotational field equation and vacuum
matching; it is independent of the selected elastic moduli at this order.
It is not an exact relation at finite compactness. For $\chi=10^{-4}$
and $\beta=2$, the leading redshift and dragging fraction are
$0.001885$ and $0.002513$, respectively.

The dragging solution also determines the moment of inertia and the
leading rotational energy. Let $\mathcal M=M/G$ and $\mathcal J=J/G$ be
the physical mass and angular momentum.
Along the smooth branch at fixed material content,
$d\mathcal M=\varpi\,d\mathcal J$ (Appendix~\ref{app:first-law}).
Reversal of rotation makes $\mathcal M$ even and $\mathcal J$ odd. Hence
\begin{equation}
 \begin{gathered}
 \mathcal J=\mathcal I_G\varpi+O(\varpi^3),\qquad
 \mathcal M=\mathcal M_G+\tfrac12\mathcal I_G\varpi^2+O(\varpi^4),\\
 \mathcal I_G=\frac{8\pi}{3}\int_a^b
 r^4e^{\Lambda-\Phi}(\rho+p_t)\frac fC\,dr>0.
 \end{gathered}
 \label{eq:rotating-first-law}
\end{equation}
Here $\mathcal I_G$ is the moment of inertia measured from the asymptotic
angular momentum. The last identity follows by integrating
Eq.~\eqref{eq:dragging-normalization}.
At $G=0$, $\mathcal I_0=8\pi m_0(B^5-A^5)/15$.
Thus the dragging response also fixes the leading rotational energy on
the existence branch. The first-order calculation at the numerical example
does not, by itself, establish such a branch there.

\subsection{A circular clock comparison}

A difference in arrival times gives an operational measure of this
rotation. Send two light signals in opposite directions around a fixed
loop, and compare their returns on the emitter's own clock. To describe
the experiment, use coordinates adapted to the guide. Where their
stationary Killing field is timelike, write
\begin{equation}
 g=-e^{2U}(dt+\psi_i dx^i)^2+e^{-2U}h_{ij}dx^i dx^j,
 \qquad e^{2U}>0,\quad h>0.
 \label{eq:stationary-quotient}
\end{equation}
For two light signals sent in opposite directions around the same fixed
spatial loop $\mathcal C$, the future-directed null condition gives
\begin{equation}
 dt=-\psi_i dx^i+e^{-2U}\sqrt{h_{ij}dx^i dx^j},\qquad
 \Delta\tau=-2e^{U(x_0)}\oint_{\mathcal C}\psi_i dx^i,
 \label{eq:proper-sagnac}
\end{equation}
where $x_0$ is the emitter and stationary mirrors or guides maintain the
loop. The symmetric path-length term cancels between the two circuits.
The delay is invariant under $t\mapsto t+f(x)$ and vanishes for a guide
at rest in the static shell, where $\psi=0$.

For an equatorial circular guide at radius $r$, let $\Omega_g$ be its
angular velocity relative to infinity. In coordinates rotating with the guide,
$g_{t\phi}$ becomes $g_{t\phi}+\Omega_g g_{\phi\phi}$ and
$g_{tt}$ becomes $g_{tt}+2\Omega_g g_{t\phi}+\Omega_g^2g_{\phi\phi}$.
Equation~\eqref{eq:proper-sagnac} therefore gives
\begin{equation}
 \Delta\tau=
 \frac{4\pi(g_{t\phi}+\Omega_g g_{\phi\phi})}
 {\sqrt{-g_{tt}-2\Omega_g g_{t\phi}-\Omega_g^2g_{\phi\phi}}},
 \label{eq:circular-clock}
\end{equation}
provided $g_{\phi\phi}>0$ and the guide worldlines are timelike.
The sign compares the positive
azimuthal circuit with the negative one. To first order in $\varpi$ and
$\Omega_g$, this is $4\pi r^2e^{-\Phi}(\Omega_g-\omega)$.
A guide corotating with the material has $\Omega_g=\varpi$ and includes the
kinematic Sagnac
delay. A guide fixed relative to infinity has $\Omega_g=0$ and measures
the term proportional to $-\omega$. In the cavity these are ideal test
guides whose stress-energy is neglected. A guide following the local
inertial axes, $\Omega_g=\omega_{\rm cav}$, has zero delay at this order.

\section{Deformation, cavity tides, and the central clock}
\label{sec:centrifugal}

At first order, rotation drags the cavity's inertial frame without producing
tidal curvature there. At second order, deformation and stress determine
the leading tide and change the central clock rate. We calculate both
from the same material response.

\subsection{Centrifugal deformation}

The leading deformation in weak gravity follows from the elastic problem
at $G=0$. This is a special-relativistic calculation: distances between
neighboring material particles are measured in their local rest frames.
The resulting correction enters at order $\varpi^2$, the same order as
the displacement~\cite{beig2005rotation}. It must be retained even though
the angular velocity is small.
On the reference annulus $\mathcal D=\{X:A<|X|<B\}$, let
$x=R_z(\varpi t)\phi(X)$, where $R_z$ is a rotation about the $z$ axis.
For this calculation, put $F=D\phi$, $v=\varpi e_z\times\phi$, and
$X_\perp=(X^1,X^2,0)$. The rest-space strain and energy per reference
volume are
\begin{equation}
 C=F^{\mathsf T}\left(I+\frac{vv^{\mathsf T}}{1-|v|^2}\right)F,
 \qquad \epsilon(C)=\sqrt{\det C}\,\rho(n_i),
 \qquad n_i=\lambda_i(C)^{-1/2}.
 \label{eq:rotating-rest-strain}
\end{equation}
The extra term is the Lorentz correction along the local direction of
motion: the matrix in parentheses is the inverse of $I-vv^{\mathsf T}$,
the spatial part of the inverse material strain. The stationary material
action per unit time is $-\int_{\mathcal D}\sqrt{1-|v|^2}\epsilon(C)\,dX$.

\begin{proposition}[Leading centrifugal response]
\label{prop:centrifugal}
At $G=0$, write the centered, axially aligned branch as
$\phi(X)=X+\varpi^2u(X)+O(\varpi^4)$. Put
$\lambda=\kappa-2\mu_0/3$, $w=e_z\times X$ and
$\tau(E)=\lambda\operatorname{tr}(E)I+2\mu_0E$.
The coefficient $u$, modulo rigid motions, is the unique solution of
\begin{align}
 \mu_0\Delta u+(\lambda+\mu_0)\nabla\operatorname{div}u
   &=-(m_0+\lambda-\mu_0)X_\perp,\nonumber\\
 \tau(e(u))n&=-\tfrac\lambda2|X_\perp|^2n
       \quad (|X|=A,B),
 \qquad e(u)=\tfrac12(Du+Du^{\mathsf T}).
 \label{eq:centrifugal-boundary-problem}
\end{align}
Here $\tau$ is positive in tension; the physical principal pressures have
the opposite sign. The source excites only the spherical monopole and
spheroidal quadrupole.
\end{proposition}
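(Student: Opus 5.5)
Expand the stationary action per unit time, $\mathcal S[\phi]=-\int_{\mathcal D}\sqrt{1-|v|^2}\,\epsilon(C)\,dX$, in powers of $\varpi^2$ about $\phi=X$. The Euler--Lagrange equations and natural boundary conditions at order $\varpi^2$ are then the stated linear boundary-value problem.

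Write $\phi=X+\varpi^2u$. Then $v=\varpi w+O(\varpi^3)$ and $|v|^2=\varpi^2|X_\perp|^2+O(\varpi^4)$. The strain is
\[
 C=I+2\varpi^2 e(u)+\varpi^2 ww^{\mathsf T}+O(\varpi^4).
\]
At relaxation, $\rho$ is stationary in the strain apart from the rest-mass term, and $\epsilon(C)=\sqrt{\det C}\,\rho(n_i)$. To second order in a small symmetric strain $D=C-I$, this gives
\[
 \epsilon=m_0+\tfrac12\lambda\,(\operatorname{tr}E)^2+\mu_0\operatorname{tr}(E^2)+\dots,
 \qquad E=\tfrac12 D.
\]
The term linear in $D$ vanishes because $p_i=0$ at relaxation. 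Here $\epsilon$ is the energy per reference volume, so $m_0$ multiplies the volume Jacobian only through $\sqrt{\det C}\,n=1$, and no linear term appears. I will verify this using $n=(\det C)^{-1/2}$, $s^2=O(D^2)$, and the identification of $\kappa$ and $\mu_0$ as the relaxed bulk and shear moduli, which gives Lam\'e constants $\lambda=\kappa-2\mu_0/3$ and $\mu_0$.

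The Lagrangian density to the relevant order is then
\[
 -\Bigl(1-\tfrac12\varpi^2|X_\perp|^2\Bigr)m_0-W(E),
 \qquad E=\varpi^2\Bigl(e(u)+\tfrac12 ww^{\mathsf T}\Bigr).
\]
Because $u$ enters at order $\varpi^2$, the order-$\varpi^4$ part of the action contains the following pieces:
\begin{itemize}
 \item the quadratic elastic energy $W(e(u)+\tfrac12ww^{\mathsf T})$;
 \item a cross term from $\sqrt{1-|v|^2}$ multiplying the $m_0$ part, in which $|v|^2$ depends on $\phi$ and so produces $m_0\varpi^2 X_\perp\cdot u$-type terms;
 \item the $u$-independent remainder.
\end{itemize}
A cleaner route is to vary $\mathcal S$ in $\phi$ exactly and then linearize.

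Varying $u$ gives two conditions:
\[
 \operatorname{div}\tau(e(u))+\operatorname{div}\tau\bigl(\tfrac12ww^{\mathsf T}\bigr)+m_0X_\perp=0
\]
in the bulk, and zero total traction $\tau\bigl(e(u)+\tfrac12ww^{\mathsf T}\bigr)n=0$ on the faces. The relevant identities are:
\begin{itemize}
 \item $\operatorname{tr}(ww^{\mathsf T})=|X_\perp|^2$;
 \item $\operatorname{div}(|X_\perp|^2I)=2X_\perp$;
 \item $\operatorname{div}(ww^{\mathsf T})=(\operatorname{div}w)w+(w\cdot\nabla)w=-X_\perp$.
\end{itemize}
These give $\operatorname{div}\tau(\tfrac12ww^{\mathsf T})=(\lambda-\mu_0)X_\perp$, so the body force becomes $-(m_0+\lambda-\mu_0)X_\perp$ after using $\operatorname{div}\tau(e(u))=\mu_0\Delta u+(\lambda+\mu_0)\nabla\operatorname{div}u$.

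On the spheres, $w\cdot n=0$, so $ww^{\mathsf T}n=0$. The boundary term reduces to $\tfrac\lambda2|X_\perp|^2n$, which matches the stated traction condition. The main obstacle is this bookkeeping: getting the cross term and the Lorentz factor right, including the sign conventions for $\tau$ versus pressure and the relativistic $m_0X_\perp$ inertia term. I would check it against the known Newtonian centrifugal problem, where $m_0$ plays the role of mass density.

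For uniqueness modulo rigid motions, the compatibility condition requires the total force and torque of the data to vanish. This holds by the symmetry $X_\perp\mapsto -X_\perp$ and by axial symmetry. Korn's inequality with $\mu_0>0$ and $3\lambda+2\mu_0=3\kappa>0$ makes the elastic form coercive on the quotient by rigid motions, and Lax--Milgram then gives existence and uniqueness.

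For the last claim, decompose the data using $|X_\perp|^2=\tfrac23|X|^2\bigl(1-P_2(\cos\theta)\bigr)$ and $X_\perp=\tfrac23X-\tfrac13\nabla\bigl(|X|^2P_2\bigr)$. Both the body force and the traction lie in the span of $\ell=0$ and $\ell=2$ axisymmetric even vector harmonics with no toroidal part. Since the Navier operator and the traction operator commute with rotations and preserve spheroidal/toroidal type, uniqueness forces $u$ into the same span: a radial monopole plus a spheroidal quadrupole.
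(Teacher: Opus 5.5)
Your proposal is correct and follows the paper's proof essentially step for step. The shared steps are: the expansion $C=I+2\varpi^2E$ with $E=e(u)+\tfrac12 w\otimes w$; the quadratic Lam\'e energy, which follows from $D\epsilon(I)=0$; the cross term $-m_0X_\perp\cdot u$ from the Lorentz factor; the identities $\operatorname{div}\tau(\tfrac12 w\otimes w)=(\lambda-\mu_0)X_\perp$ and $w\cdot n=0$; Korn's inequality on the rigid-motion quotient; and the decomposition of the loads into $\ell=0$ and $\ell=2$ harmonics.

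The only slip is in your displayed Lagrangian density. There, $\sqrt{1-|v|^2}$ must use $|\phi_\perp|^2=|X_\perp|^2+2\varpi^2X_\perp\cdot u+O(\varpi^4)$ rather than $|X_\perp|^2$. That correction is exactly where the $m_0X_\perp\cdot u$ term comes from, and you restore it correctly in the Euler--Lagrange equation.
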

\begin{proof}
Equation~\eqref{eq:rotating-rest-strain} gives
$C=I+2\varpi^2E+O(\varpi^4)$, where
$E=e(u)+w\otimes w/2$.
Since $D\epsilon(I)=0$, the $u$-dependent coefficient of $\varpi^4$ in the
reduced energy is
\begin{equation}
 \int_{\mathcal D}\left[
 \frac\lambda2(\operatorname{tr}E)^2+\mu_0|E|^2
                    -m_0X_\perp\cdot u\right]dX.
 \label{eq:centrifugal-functional}
\end{equation}
Variation gives $\operatorname{div}\tau(E)=-m_0X_\perp$ and
$\tau(E)n=0$. Direct differentiation yields
$\operatorname{div}\tau(w\otimes w/2)=(\lambda-\mu_0)X_\perp$;
on either sphere, $w\cdot n=0$. These identities give the displayed
equations. The loads have zero total force and torque by axial and
reflection symmetry. The positive elastic form and Korn's inequality~\cite{nitsche1981}
give existence and uniqueness on the rigid-motion quotient.
The angular decomposition of $X_\perp$ and $|X_\perp|^2n$ contains only
the stated two harmonics.
\end{proof}

Write $s=|X|/A$ and $P_2=(3\cos^2\theta-1)/2$. The displacement has the form
\begin{equation}
 u/A^3=(U_0+U_2P_2)e_r+V_2\partial_\theta P_2\,e_\theta.
 \label{eq:centrifugal-decomposition}
\end{equation}
The monopole $U_0$ changes the mean radius; the quadrupole $U_2$ changes the
polar radius relative to the equatorial radius, and
$V_2$ describes tangential material motion. Appendix~\ref{app:centrifugal}
reduces the boundary problem to six linear equations and gives their exact
solution for the selected moduli and $B=2A$.
A useful measure of each face's deformation is
\begin{align}
 \frac{r_{\rm eq}-r_{\rm pole}}{As}&
 =\varepsilon_\Omega\,\mathcal F(s)+O(\varepsilon_\Omega^2),\qquad
 \mathcal F(s)=-\frac{3U_2(s)}{2s},\nonumber\\
 \mathcal F(1)&\simeq17.56,\quad \mathcal F(2)\simeq8.54.
 \label{eq:face-flattening}
\end{align}
Both faces are oblate, with the larger fractional flattening at the inner
face. The surfaces are not scaled copies of one another. These response
coefficients depend on thickness and elastic moduli; they are not universal
numbers. Their strict signs persist in a neighborhood of the selected
parameters by continuous dependence of the invertible boundary problem.
At small positive $G$, the shape increment has corrections
$O(G\varpi^2+\varpi^4)$ in fixed units.

\subsection{The tide measured in the cavity}

At first order in $G$, write $g_{00}=-(1+2\Phi_{\rm lin})$.
The leading quadrupolar part of the potential in the regular cavity is
$\Phi_2=q r^2P_2(\cos\theta)$. Its measurable content is the electric tidal
tensor, which determines relative free-fall acceleration. In a locally
inertial orthonormal frame at the center,
$R_{\hat0\hat i\hat0\hat j}=\partial_i\partial_j\Phi_2=
q\operatorname{diag}(-1,-1,2)$, up to
$O(G^2\varpi^2+G\varpi^4)$ on compact subsets of the cavity.
Let $\xi^{\hat i}$ be an initially comoving tracer's separation from the
central geodesic
and $\tau$ the latter's proper time. Geodesic deviation gives, to this order,
\begin{equation}
 \frac{d^2}{d\tau^2}
 \begin{pmatrix}\xi^{\hat x}\\\xi^{\hat y}\\\xi^{\hat z}\end{pmatrix}
 =q\begin{pmatrix}\xi^{\hat x}\\\xi^{\hat y}\\-2\xi^{\hat z}\end{pmatrix}.
 \label{eq:cavity-tracer-deviation}
\end{equation}
Thus the sign of $q$ distinguishes stretching along the rotation axis
from stretching in the equatorial plane.
The corresponding error in the dimensionless tensor
$A^2R_{\hat0\hat i\hat0\hat j}$ is
$O(\chi^2\varepsilon_\Omega+\chi\varepsilon_\Omega^2)$ at fixed
$(\beta,k,\mu)$. No numerical bound on its coefficient is assumed.

The stationary Einstein equation for this potential is
$\Delta\Phi_{\rm lin}=4\pi G(T_{00}+T_{ii})$.
Indeed, stationarity gives $R_{00}=\Delta\Phi_{\rm lin}$ at linear
order, and the trace-reversed Einstein equation gives
$R_{00}=4\pi G(T_{00}+T_{ii})$ at the same order.
Here $T_{ii}$ is summed over the three spatial directions: pressure as well
as energy density sources this potential. Pulling the source back to
the reference annulus gives
\begin{equation}
 (T_{00}+T_{ii})(\phi)\det F
 =m_0+\varpi^2\left[-3\kappa\operatorname{div}u
             +\tfrac32(m_0-\kappa)|X_\perp|^2\right]+O(\varpi^4).
 \label{eq:rotating-active-density}
\end{equation}
This includes rest density, stress, kinetic energy and the moving volume.
It follows from $\rho=m_0[1-\varpi^2\operatorname{tr}E]+O(\varpi^4)$,
$\sum p_i=-3\kappa\varpi^2\operatorname{tr}E+O(\varpi^4)$ and the
two boost contributions $m_0\varpi^2|w|^2$ to $T_{00}$ and $T_{ii}$.

To determine $q$ from the material, expand the gravitational Green function.
For $r<|x'|$,
$|x-x'|^{-1}=\sum_{\ell\ge0}r^\ell|x'|^{-\ell-1}P_\ell(\cos\gamma)$,
where $\gamma$ is the angle between $x$ and $x'$. Axisymmetry therefore gives
\begin{equation}
 q=-G\int_{\rm shell}(T_{00}+T_{ii})(x')
                   \frac{P_2(\cos\theta')}{|x'|^3}\,d^3x'
 \label{eq:cavity-inverse-source}
\end{equation}
at first order in $G$. This is an inverse radial moment of the active
source. At the same weak-field order, the exterior mass quadrupole weights the
source by $|x'|^2$;
the cavity is more sensitive to matter near the inner face.
The spatially constant part of the cavity potential changes the central
clock rate and drops out of the tidal tensor. We first isolate the
quadrupole, then calculate that independent clock correction in
Section~\ref{sec:rotating-clock}.

Pull the integral back to the material annulus and expand its kernel as
well as its source. Source transport contributes
$m_0u\cdot\nabla(R^{-3}P_2)$. The angular identities
$\int P_2^2d\Omega=4\pi/5$ and
$\int|\nabla_{S^2}P_2|^2d\Omega=24\pi/5$ give
\begin{equation}
 \begin{gathered}
 \mathcal T=\int_1^\beta\frac{-3U_2+6V_2}{s^2}\,ds,
 \qquad\mathcal C=\int_1^\beta\frac{D_2}{s}\,ds,
 \qquad D_2=U_2'+2U_2/s-6V_2/s,\\
 q=-\frac{4\pi}{5}Gm_0A^2\varpi^2
 \left[\mathcal T-3k\mathcal C
              -(1-k)\frac{\beta^2-1}{2}\right].
 \end{gathered}
 \label{eq:cavity-source-parts}
\end{equation}
The first term accounts for the redistribution of relaxed mass by the
displacement. The second comes from the
pressure trace in Eq.~\eqref{eq:rotating-active-density}. The last includes
both motion and the pressure induced by the relativistic rest-space strain:
$|X_\perp|^2=2R^2(1-P_2)/3$ supplies its quadrupolar part.
Here $D_2$ is the quadrupolar coefficient of the displacement divergence;
positive $D_2$ denotes local volume expansion in that harmonic.

The definition of $D_2$ also gives
$(-3U_2+6V_2)/s^2=(U_2/s)'-D_2/s$. Integration and
Eq.~\eqref{eq:face-flattening} now express the transport in terms of
the faces:
\begin{equation}
 \mathcal T=\left[\frac{U_2}{s}\right]_1^\beta-\mathcal C
 =\frac23\bigl[\mathcal F(1)-\mathcal F(\beta)\bigr]-\mathcal C.
 \label{eq:cavity-face-moment}
\end{equation}
Combining the terms gives the quadrupole at order $G\varpi^2$ for any fixed
annulus and positive causal moduli in the small-parameter branch:
\begin{align}
 q&=-Gm_0A^2\varpi^2\mathcal Q_c,\nonumber\\
 \mathcal Q_c&=\frac{4\pi}{5}\left\{
 \frac23\bigl[\mathcal F(1)-\mathcal F(\beta)\bigr]
 -(1+3k)\mathcal C-(1-k)\frac{\beta^2-1}{2}\right\}.
 \label{eq:cavity-quadrupole}
\end{align}
The face term measures differential fractional flattening; the bulk term
combines volume transport with its induced pressure. Shear stiffness enters
through $U_2,V_2$. The response is fixed by the annulus and its elastic moduli.
This decomposition uses the reference material coordinates of the
weak-field construction. Its separate terms describe contributions to one
source integral; the observable is the resulting tidal tensor.

If $D_2=0$ and the faces have equal fractional flattening, the transport
term vanishes even when both faces are oblate. This cancellation concerns
transport only; the remaining active stresses still contribute to the tide.
Oblateness alone does not determine its sign.

\subsection{Two materials, opposite tides}

At $(\beta,k,\mu)=(2,1/5,1/10)$, the three terms in braces in
Eq.~\eqref{eq:cavity-quadrupole} are approximately
$6.011$, $-1.322$, and $-1.200$. Differential flattening dominates, so
$\mathcal Q_c\simeq8.77>0$ and $q<0$ at leading order.
Appendix~\ref{app:centrifugal} evaluates the integrals exactly.
The coefficient depends on thickness and moduli; its strict sign persists
locally by continuity of the invertible elastic boundary problem.

The sign can reverse within the same family of elastic laws while
maintaining causality near relaxation.
Keep $\beta=2$ and choose $k=\mu=2/5$. The relaxed squared sound speeds
are $2/5$ and $14/15$, so the small-coupling existence argument still
applies. The two faces remain oblate, with
$\mathcal F(1)\simeq3.06$ and $\mathcal F(2)\simeq1.45$, but now
$\mathcal Q_c\simeq-0.866$ and $q>0$. The exact coefficients are given in
Appendix~\ref{app:centrifugal}. Differential flattening is too small to
overcome the bulk and motion terms. A linear interpolation between
these two choices stays within the causal relaxed-modulus domain and
therefore passes through a zero of the leading quadrupole coefficient.
That cancellation leaves higher-order curvature undetermined; it does not
establish an exactly flat rotating cavity.

For either explicit material, smooth dependence and the nonzero leading
tidal coefficient rule out a flat cavity in a sufficiently small
two-parameter neighborhood with $G>0$ and $\varpi\ne0$.
The uniform remainder follows by factoring the central electric curvature
by $G\varpi^2$: it vanishes at $G=0$ and at $\varpi=0$, is even in
$\varpi$, and depends smoothly on both parameters in the regular cavity.
The remaining coefficient is continuous and has the nonzero limit computed
above. This establishes a neighborhood, rather than a numerical range of
angular velocities.
The first material stretches tracer separations along the rotation axis
and focuses them in the equatorial plane; the stiffer material reverses
these directions. This curvature cannot be removed by a change of
coordinates.

The zero trace of the tidal tensor expresses $R_{\hat0\hat0}=0$ in vacuum.
For either sign of nonzero $q$, at least one direction stretches rather
than restores the separation. The leading tide therefore supplies no
three-dimensional passive confinement near the center. This is a statement
about freely falling test particles; it does not decide whether the shell's
elastic modes grow. At $q=0$, higher orders must be examined.

Figure~\ref{fig:rotation}(b--d) shows both oblate faces, these relative
accelerations, and the source contributions. Its exaggerated display amplitude illustrates the leading coefficients;
it supplies no finite-rotation bound on strain or curvature.

\begin{figure}[tbp]
 \centering
 \includegraphics[width=\linewidth]{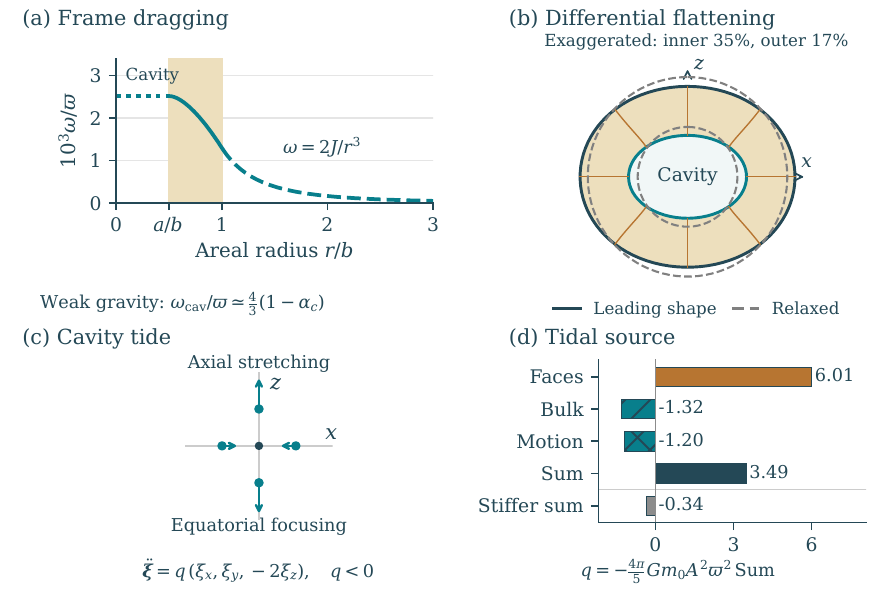}
 \caption{Rotational response of the shell in Eq.~\eqref{eq:elastic-example}.
 (a) Frame dragging $\omega$ at first order in the material angular
 velocity $\varpi$, matched across both vacuum faces and measured relative
 to infinity.
 (b) Leading centrifugal displacement at $G=0$, $B=2A$, drawn at equal
 scales with exaggerated amplitude $\varepsilon_\Omega=(A\varpi)^2=0.02$.
 The fractional flattenings $\varepsilon_\Omega\mathcal F$ are 35\% (inner)
 and 17\% (outer), relative to the relaxed radii. Solid faces include
 radial and tangential displacement; dashed circles are relaxed faces,
 and interior curves follow material meridians.
 (c) Local relative free-fall acceleration at order $G\varpi^2$.
 Dots mark equal initial separations from the central geodesic;
 arrows share an arbitrary scale.
 (d) Dimensionless source terms in Eq.~\eqref{eq:cavity-quadrupole}:
 Faces denotes differential fractional flattening; Bulk includes volume
 transport and pressure; Motion includes kinetic energy and the
 relativistic rest-space correction. Only the sum determines $q$.
 The stiffer material, $k=\mu=2/5$, gives $q>0$ at the same thickness
 despite two oblate faces. Panels (b--d) show leading coefficients
 without a finite-rotation error bound.}
 \label{fig:rotation}
\end{figure}

\subsection{The rotation-induced central clock shift}
\label{sec:rotating-clock}

Let $\alpha_{\rm cen}(G,\varpi)=\sqrt{-g(K,K)}$ at the center of the
reflection-symmetric rotating shell, with $K$ normalized at infinity.
The center follows a geodesic by symmetry. Its proper clock rate is
$d\tau/dt=\alpha_{\rm cen}$, and $\alpha_{\rm cen}(G,0)=\alpha_c$.
At leading order in gravity the Green function at the center gives
\begin{equation}
 \Phi_{\rm lin}(0)=-G\int_{\rm shell}
                 \frac{T_{00}+T_{ii}}{|x'|}\,d^3x'.
 \label{eq:central-clock-source}
\end{equation}
Pull back to material coordinates, as for the tide. The kernel changes by
$\delta(1/|\phi|)=-\varpi^2u_r/R^2$.
Only the spherical averages contribute: $\langle u_r\rangle=A^3U_0$,
$\langle\operatorname{div}u\rangle=A^2D_0$, where
$D_0=U_0'+2U_0/s$, and $\langle\sin^2\theta\rangle=2/3$.
Using Eq.~\eqref{eq:rotating-active-density}, we obtain
\begin{align}
 \alpha_{\rm cen}(G,\varpi)-\alpha_c
 &=\chi\varepsilon_\Omega\mathcal Z_c
   +O(\chi^2\varepsilon_\Omega+\chi\varepsilon_\Omega^2),\nonumber\\
 \mathcal Z_c&=4\pi\int_1^\beta
       \left[U_0+3k sD_0-(1-k)s^3\right]ds.
 \label{eq:rotating-clock-response}
\end{align}
The terms arise from displacement of the mass, the pressure trace, and
motion including the relativistic rest-space strain. Here a positive
$\mathcal Z_c$ means that rotation makes the central clock run faster
relative to infinity than in the static shell with the same material
content. The asymptotic error is at fixed $(\beta,k,\mu)$.

The original material has $\mathcal Z_c=16472\pi/2625\simeq19.71$.
The stiffer choice $k=\mu=2/5$ instead gives
$\mathcal Z_c=-15929\pi/7350\simeq-6.81$.
Appendix~\ref{app:centrifugal} gives the elementary integral evaluation.
Both materials gain rotational mass, yet their clock shifts have opposite
signs. There is no conflict: the mass uses the unweighted active source,
the central potential weights it by $1/|x'|$, and the tide weights its
quadrupole by $P_2/|x'|^3$. Moving material outward can weaken the central
potential even while the total energy increases. These leading coefficients
do not give a finite-rotation error bound or a universal relation between
the signs of the clock shift and the tide.

\section{Dynamics and physical scope}
\label{sec:stability-summary}

An equilibrium and its response to steady rotation do not determine how
the shell evolves after a disturbance. We now distinguish linear mode
stability from smooth nonlinear evolution, then consider what uniform
translation changes in the cavity measurements.

\subsection{Linear perturbations}

Radial perturbations
expand or contract spherical layers. Odd-parity
perturbations twist them, while nonspherical even-parity perturbations
change their shape and density. These sectors decouple on the static
spherical background, so stability in one does not decide the others.
Stability requires an energy estimate for the coupled matter and
gravitational field, including both moving vacuum faces. The interface
conditions are
\begin{equation}
 \Delta(T_{\mu\nu}n^\nu)=0,\qquad
 [\Delta h_{ab}]=0,\qquad [\Delta\mathcal K_{ab}]=0,
 \label{eq:general-linear-matching}
\end{equation}
where $h_{ab}$ and $\mathcal K_{ab}$ are the induced metric and extrinsic
curvature, brackets use the same normal on both sides, and $\Delta$ includes
the surface displacement and normal variation. In spherical motion the
material condition is $\Delta p_r=0$, rather than a condition at a fixed
coordinate radius.

For radial perturbations, eliminating the metric constraints gives a
self-adjoint equation for the material displacement, with gravitational
backreaction included. A positive energy bound then gives a lower bound
on squared oscillation frequencies. The sufficient criterion in
Eq.~\eqref{eq:radial-gap} holds at sufficiently small coupling;
its sampled value for
Eq.~\eqref{eq:elastic-example} is about $1.8\times10^{-3}L_0^{-2}$.
For odd parity, positive tangential enthalpy and shear stiffness give a
nonnegative coupled energy for every angular harmonic, including exterior
gravitational waves. This excludes exponentially growing physical modes
in that sector, but gives no positive lower bound on radiative frequencies
or decay rate. These results are derived in
Appendices~\ref{sec:radial-stability} and \ref{sec:axial-energy}.

Nonspherical even-parity and rotating stability remain unresolved.
For perfect-fluid thin shells with a flat interior, Ref.~\cite{pitre2026}
finds growing even-parity shape modes at every sampled compactness,
adiabatic index, and multipole $\ell\ge2$. Elastic membranes
on a prescribed Schwarzschild background can also be radially stable but
unstable to dipolar displacements~\cite{mourao2025}. Neither result decides
the stability of this finite-thickness, self-gravitating solid. Resolving
its even-parity sector requires the coupled metric and elastic perturbations,
with both faces displaced and matched to the two vacuum regions.
The limit of vanishing thickness at fixed surface mass is singular:
it changes the density scale, elastic response and interface equations.
Neither these thin-shell instabilities nor the present stability estimates
can be transferred across that limit without a separate derivation.

\subsection{Smooth evolution and wave breaking}

Conservation of ADM mass bounds a nonlinear radial energy norm while
strain and velocity remain near equilibrium. Control over a prescribed
finite time also requires small spatial derivatives and initial data
compatible with the gravitational constraints and free faces. Under these
conditions, the higher-regularity estimate controls the solution through
reflections. The permitted size of the initial data depends on the chosen
time interval.

Small amplitudes alone do not ensure this control. A sufficiently narrow
pulse can have small strain and velocity but large gradients.
Theorem~\ref{thm:shell-steepening} constructs such radial pulses whose first
derivatives diverge before either face is reached. Their steepening dominates
the bounded gravitational sources while the state remains causal and obeys
strict DEC. This is consistent with the finite-time estimate, whose $H^s$
norm controls several spatial derivatives. The theorem concerns the first
loss of smoothness, not an exponentially growing linear mode.
Discontinuous continuation and long-time smooth
evolution remain open.

\subsection{Uniform motion and additional loads}

Uniform translation gives an exact comparison beyond these perturbative
calculations. Apply an asymptotic Lorentz boost to both the metric and
material map. This gives a
coasting shell with $V^i=P^i/P^0$ and the same comoving cavity measurements.
Appendix~\ref{sec:coasting} gives the construction and its asymptotic
hypotheses. The boost constructs uniform motion, not an acceleration
process. Changing the shell's momentum requires exchange with another
system or with outgoing radiation; the total ADM momentum of the complete
isolated system remains conserved. Radiative steering~\cite{le2026recoil}
introduces photon emission and a different material law.

A massive payload changes the vacuum-cavity problem; supports holding it
relative to the shell also introduce tractions. An external tidal field
changes the gravitational boundary data even when the cavity stays empty.
The response must be found from the coupled metric and elastic equations,
as in relativistic tidal calculations for elastic
crusts~\cite{gittins2020}. Love numbers characterize the induced exterior
multipoles in response to an applied tidal field; the rotation-induced
cavity tide calculated here answers a different question. The locally flat cavity of
the isolated static solution does not establish gravitational shielding.

\section{Conclusions}

The leading cavity tide and rotational clock shift measure distinct source
moments of an elastic shell. The tide depends on differential flattening
of the two faces, deformation within the shell, and the stress and energy
of rotation. The clock shift depends on an inverse-radius moment of the
active source. For two choices of elastic moduli, each causal near
relaxation, both observables reverse sign while both faces remain oblate
and the total mass increases.
The cavity geometry therefore resolves differences in material response
that qualitative shape and total mass do not determine.

These results follow from the same constitutive action that supplies the
shell's static support: tangential compression balances gravity while
radial traction vanishes at both free faces. The static cavity is locally
flat, with clocks redshifted relative to infinity. In weak gravity, this
redshift and linear frame dragging measure the same inverse-radius mass
moment: the dragging fraction $\omega_{\rm cav}/\varpi$ is $4/3$ times the
clock-rate deficit $1-\alpha_c$ at leading order. Tidal curvature first
appears at second order in rotation.
The leading vacuum tide has stretching and focusing directions and
provides no three-dimensional passive confinement near the center.

The equilibrium and response calculations hold near relaxation. A complete
linear stability analysis still requires the nonspherical even-parity
sector with both moving faces and gravitational backreaction. Predictions
at a specified angular velocity require a bound on the rotational
remainder and on the resulting strain; the leading clock and tidal
coefficients do not supply those bounds. A payload or an external tide
changes the boundary-value problem. Energy conservation does not prevent
wave breaking; evolution beyond that point requires an
admissible weak-solution theory or additional material physics.
The construction describes the support, geometry, and partial stability
of an ideal gravitating solid. A material realization and a process that
forms or accelerates the shell require additional physics.

The sign change raises a concrete question: can the leading cavity tide
vanish on a rotating branch stable to nonspherical perturbations?
At a zero of $\mathcal Q_c$, the higher-order cavity curvature must also
be determined. Together with bounds on strain and the rotational
remainder, this would establish how nearly flat the cavity can remain
over a finite range of angular velocities.
A complementary inverse problem asks which combinations of bulk and shear
response can be inferred from joint clock and tidal measurements for a
known reference annulus and density. The observables probe distinct source
moments, but uniqueness of this inference and its sensitivity to the
assumed constitutive law remain to be determined.

\appendix
\section{An exact smooth anisotropic shell}
\label{sec:exact}

\begin{theorem}[Tangentially supported shell]
\label{thm:cluster}
Let $\rho\ge0$ be a $C^2$ function supported in $[a,b]$, with $0<a<b$,
and define $m(r)=4\pi\int_0^r\rho(s)s^2ds$, $M=m(b)$.
Suppose $2m(r)<r$ for $r>0$. Set
\begin{equation}
 p_r=0,\qquad p_t=\frac{\rho m}{2(r-2m)},\qquad
 \Phi(r)=-\int_r^\infty\frac{m(s)}{s[s-2m(s)]}\,ds.
 \label{eq:cluster}
\end{equation}
Then Eq.~\eqref{eq:spherical} solves all Einstein equations. The cavity is
flat and redshifted, the exterior is Schwarzschild, and neither interface
has a surface layer. Where $\rho>0$, DEC holds if and only if $5m\le2r$.
NEC and WEC hold everywhere.
\end{theorem}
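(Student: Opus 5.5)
The plan is to check the three independent equations \eqref{eq:spherical-einstein}--\eqref{eq:anisotropic-tov} directly; by the remark after \eqref{eq:anisotropic-tov}, these imply the angular Einstein equation in the smooth region. Then I will handle the interfaces and the energy conditions separately.

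\emph{Field equations.} The mass equation $m'=4\pi r^2\rho$ holds by the definition of $m$. Differentiating the given $\Phi$ gives $\Phi'=m/[r(r-2m)]$, which is the lapse equation with $p_r=0$. The integral converges because $m=M$ for $r\ge b$, so the integrand is $O(s^{-2})$. With $p_r\equiv0$, stress conservation reduces to $0=-\rho\Phi'+2p_t/r$. Substituting the chosen $p_t$ gives $2p_t/r=\rho m/[r(r-2m)]=\rho\Phi'$, so conservation holds identically. Regularity is needed for the Einstein tensor to be computed classically. Since $\rho\in C^2$, $m$ is $C^3$, $\Phi$ is $C^3$, and $p_t$ is $C^2$, so the metric is at least $C^3$ and all three equations hold pointwise on $(0,\infty)$. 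The horizon-free hypothesis $2m<r$ keeps $F>0$ and all denominators nonzero.

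\emph{Cavity, exterior, interfaces.} For $r\le a$ we have $m=0$, so $F=1$ and $\Phi$ equals the constant $\Phi_c=-\int_a^\infty m/[s(s-2m)]\,ds<0$ (strictly negative when $\rho\not\equiv0$). The cavity is therefore flat Minkowski space after rescaling $t$, with $e^{\Phi_c}<1$, which is the stated redshift. For $r\ge b$ we have $m=M$ and $\Phi'=M/[r(r-2M)]$ with $\Phi\to0$ at infinity, so $e^{2\Phi}=1-2M/r$ and the exterior is Schwarzschild. No surface layer appears because $\rho$ is continuous and vanishes at $a$ and $b$. Hence $m$, $\Phi$ and $\Phi'$ are continuous, and the extrinsic curvatures $\sqrt F\Phi'$ and $\sqrt F/r$ quoted in Section~\ref{sec:isotropic} are continuous, together with $p_r=0$. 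In fact the whole metric is $C^3$ across each face, so Darmois matching holds trivially.

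\emph{Energy conditions.} The stress tensor is Type I with $(\rho,0,p_t,p_t)$, so I will apply \eqref{eq:ec}. Here $\rho\ge0$, $m\ge0$ and $r-2m>0$ give $p_t\ge0$. NEC then follows from $\rho+0\ge0$ and $\rho+p_t\ge0$, and WEC additionally uses $\rho\ge0$. For DEC at points with $\rho>0$ the only nontrivial inequality is $\rho\ge p_t$. This is equivalent to $m\le 2(r-2m)$, i.e.\ $5m\le2r$. Where $\rho=0$, DEC holds trivially.

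The main obstacle is only bookkeeping: confirming that the three reduced equations really imply every component of the Einstein equations for this metric. I would verify this by noting that the angular equation follows from differentiating the $rr$ equation and using the $tt$ equation together with conservation. The contracted Bianchi identity guarantees this whenever $m'\neq0$ or the metric is smooth, and the $C^2$ assumption provides smoothness everywhere.
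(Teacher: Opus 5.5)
Your proposal is correct and follows the paper's proof essentially step for step. The mass and lapse equations hold by construction, and stress conservation with $p_r=0$ is exactly the defining formula for $p_t$, which supplies the angular equation. The cavity and exterior are read off from $m=0$ and $m=M$, the $C^2$ density gives smooth matching, and DEC reduces to $p_t\le\rho$, i.e.\ $5m\le2r$.

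One small correction to your closing remark: the angular Einstein equation enters the $r$-component of the contracted Bianchi identity with the factor $2/r$. It therefore follows at every $r>0$ from the $tt$ and $rr$ equations and conservation, with no condition such as $m'\neq0$.
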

\begin{proof}
The first two equations in Eq.~\eqref{eq:spherical-einstein} follow directly.
With $p_r=0$, Eq.~\eqref{eq:anisotropic-tov} becomes
$2p_t/r=\rho\Phi'$, which is exactly Eq.~\eqref{eq:cluster}.
Thus the angular equation holds as well. In the cavity $m=0$ and $\Phi$ is
constant. Outside $b$, $m=M$ and $e^{2\Phi}=1-2M/r$.
The prescribed regularity and $p_r=0$ give smooth matching. Both $\rho$ and
$p_t$ are nonnegative, and $p_t\le\rho$ reduces to $5m\le2r$ where
$\rho>0$. At zero density all stresses vanish.
\end{proof}

A compact example is obtained directly from the mass profile
\begin{equation}
 m(r)=M H\!\left(\frac{r-a}{b-a}\right),\qquad
 H(s)=35s^4-84s^5+70s^6-20s^7\quad(0<s<1),
 \label{eq:profile}
\end{equation}
extended by $H=0$ to the left and $H=1$ to the right.
Since $H'=140s^3(1-s)^3$, the density is nonnegative and $C^2$ at both
interfaces. The sufficient global bound $5M<2a$ ensures DEC.
A $C^\infty$ example follows by replacing $H'$ by a normalized positive
bump $\exp[-1/(s(1-s))]$ on $(0,1)$; the same compactness bound applies.
For example, $(a,b,M)=(10,20,1)$ satisfies this bound. This exact
construction establishes that a vacuum interface need not violate DEC.

An Einstein-cluster interpretation is available when the supporting particles
have timelike circular orbits. Their local squared speed is
\begin{equation}
 v_{\mathrm{circ}}^2=r\Phi'=\frac{m}{r-2m}<1
 \quad\Longleftrightarrow\quad 3m<r.
 \label{eq:orbits}
\end{equation}
An isotropic distribution of orbital planes has
$p_t=\rho v_{\mathrm{circ}}^2/2$, reproducing Eq.~\eqref{eq:cluster}.
This ideal distribution is concentrated on circular orbits; it is not a
smooth distribution on the full particle phase space. The stability results in Section~\ref{sec:stability-summary} concern
the elastic action~\eqref{eq:elastic-law}, which specifies the response
to perturbations.

\subsection{An inner pressure jump}

The obstruction in Proposition~\ref{prop:isotropic} cannot be removed by
truncating a positive-pressure fluid solution. Integrating Eq.~\eqref{eq:isotropic-tov}
inward from $p(b)=0$ generally produces $p(a)>0$. Replacing that value by
zero in the cavity creates a new matching problem. At a sharp inner interface,
$m(a)=0$ gives $\Phi'(a^+)=4\pi a p(a^+)$ while the vacuum cavity has
$\Phi'(a^-)=0$. With the normal pointing toward increasing $r$, the Israel surface tensor
has $\sigma=0$ and tangential pressure $P_\Sigma=a p(a^+)/2$. Thus a
nonzero positive inner pressure produces a massless surface layer that
violates surface DEC. Smoothing the jump
instead produces a transition stress that must be included explicitly; it
does not retain the original isotropic equilibrium.

\section{Explicit constitutive bounds}
\label{app:material-bounds}

The local sound speeds follow from two quadratic forms: the kinetic matrix
$W$ gives the inertia of a material disturbance, and $H(k)$ gives its
restoring stiffness for wave covector $k$. Their generalized eigenvalues
are the squared sound speeds. We derive both matrices and bound them
uniformly over the strain range used in the example.

Fix $\kappa/m_0=1/5$, $\mu_0/m_0=1/10$, and divide all densities,
stresses, and acoustic matrices in this appendix by $m_0$.
Write $(a,b,c)=(n_1,n_2,n_3)$ and $n=abc$. For a positive strain matrix $\mathsf B$,
\begin{equation}
 s^2=\frac{\operatorname{tr}\mathsf B\operatorname{tr}(\mathsf B^{-1})-9}{12},
 \qquad \widehat\rho=n+\frac{(n-1)^2}{10}+\frac{ns^2}{10}.
 \label{eq:invariant-material-law}
\end{equation}
Here $\mathsf B$ has eigenvalues $n_i^2$; the deformation tensor $C$ in
Eq.~\eqref{eq:rotating-rest-strain} has reciprocal eigenvalues. The invariant
is unchanged by inversion and is smooth also at repeated eigenvalues.
Its principal stress, with cyclic permutations for the others, is
\begin{equation}
 \widehat p_1=\frac{a^3b/c+a^3c/b+6a^2b^2c^2-6-b^3c/a-bc^3/a}{60}.
 \label{eq:explicit-principal-pressure}
\end{equation}
For the spherical equations, the same law reduces to
\begin{align}
 \widehat\rho(\nu,\eta)&=\frac{\nu^4+6\nu^3\eta^4+46\nu^2\eta^2+6\nu+\eta^4}{60\nu},\nonumber\\
 \widehat p_r(\nu,\eta)&=\frac{\nu^4+3\nu^3\eta^4-3\nu-\eta^4}{30\nu},\qquad
 \widehat p_t(\nu,\eta)=-\frac{\nu^4-6\nu^3\eta^4+6\nu-\eta^4}{60\nu}.
 \label{eq:explicit-spherical-law}
\end{align}

The spatial acoustic Hessian follows by substituting
$F=\operatorname{diag}(a,b,c)+\epsilon\xi k^{\mathsf T}$ in
Eq.~\eqref{eq:invariant-material-law}, with $\mathsf B=FF^{\mathsf T}$, and taking
$\partial_\epsilon^2\widehat\rho|_{\epsilon=0}=\xi^{\mathsf T}H(k)\xi$.
For the kinetic term use $\mathsf B=FF^{\mathsf T}-\dot X\dot X^{\mathsf T}$.
Then $W$ is diagonal, $W_{ii}=(\widehat\rho+\widehat p_i)/n_i^2$.
Define
\begin{align}
 L(a,b,c)&=\frac{3ab/c+3ac/b+12b^2c^2+b^3c/a^3+bc^3/a^3}{60},\nonumber\\
 T(a,b,c)&=\frac{ab/c+2ac/b+2bc/a+c^3/(ab)}{60},\nonumber\\
 B(a,b,c)&=\frac{a^2/c+b^2/c+12abc^2+2c-a^2c/b^2-b^2c/a^2}{60},\nonumber\\
 W_1(a,b,c)&=\frac{1}{120}\bigl[3ab/c+3ac/b+24b^2c^2+b^3/(ac)\nonumber\\
 &\hspace{4em}+90bc/a+c^3/(ab)-b^3c/a^3-bc^3/a^3\bigr].
 \label{eq:explicit-acoustic-coefficients}
\end{align}
The entries are $H_{11}=L(a,b,c)k_1^2+T(a,b,c)k_2^2+T(a,c,b)k_3^2$,
$H_{12}=B(a,b,c)k_1k_2$, and $W_{11}=W_1(a,b,c)$; simultaneous
permutations of densities and indices give the remaining entries.
In particular, $H_0=|k|^2I/10+7kk^{\mathsf T}/30$ and $W_0=I$ at relaxation.

Here are rational bounds valid throughout $l\le a,b,c\le u$, where
$l=99/100$ and $u=101/100$:
\begin{equation}
\begin{array}{c|cc}
 \text{quantity}&\text{lower bound}&\text{upper bound}\\ \hline
 L&19/60&7/20\\
 T&19/200&21/200\\
 B&11/50&37/150\\
 W_1&24/25&26/25\\
 \widehat p_1&-1/100&1/100
\end{array}
\label{eq:rational-material-table}
\end{equation}
The table follows by bounding each Laurent monomial separately.
For a positive monomial $a^ib^jc^k$, put $l$ in factors with positive
exponents and $u$ in those with negative exponents for its lower bound;
reverse the choices for its upper bound. Reverse the bounds for a negative
coefficient and add. For example,
\begin{equation}
 \frac{19}{60}\le
 \frac{6l^2/u+12l^4+2l^4/u^3}{60}
 \le L\le
 \frac{6u^2/l+12u^4+2u^4/l^3}{60}\le\frac7{20}.
 \label{eq:material-bound-example}
\end{equation}
The other rows follow by the same substitutions in the displayed formulas.
The resulting inequalities hold simultaneously for all states in the box.

For $|k|=1$, the table bounds each diagonal entry of $H-H_0$ in magnitude
by $1/60$ and each off-diagonal entry by $|k_i k_j|/75$.
Since $\sum_{j\ne i}|k_i k_j|\le1$, every absolute row sum is at most
$1/60+1/75=3/100$. The symmetric matrix therefore satisfies
\begin{equation}
 \frac7{100}I\preceq H\preceq\frac{109}{300}I,\qquad
 \frac{24}{25}I\preceq W\preceq\frac{26}{25}I.
 \label{eq:material-matrix-bounds}
\end{equation}
The generalized Rayleigh quotient lies between $7/104>1/16$ and
$109/288<2/5$, proving the characteristic bound in
Eq.~\eqref{eq:material-box-bounds} for every polarization and direction.
Also $\widehat\rho\ge n\ge(99/100)^3$ and
$|\widehat p_i|\le1/100$, so the DEC slack exceeds $24/25$.

\section{Stationary material motion and staticity}
\label{app:stationary}

Stationarity means that the metric and stress do not change along a time
symmetry. Individual particles can still move, as they do in a rotating
body~\cite{andersson2010}. For an elastic solid, the material metric
restricts this motion because it records which particles are neighbors.

Let $q_{ab}=g_{ab}+u_au_b$ be the material rest metric and
$b_{ab}=k_{AB}\partial_aX^A\partial_bX^B$ the material metric pulled back
to spacetime. On the three-dimensional rest space, define
$\mathsf B^a{}_b=q^{ac}b_{cb}$ and the spatial stress
$\mathsf S^a{}_b=q^{ac}T_{cd}q^d{}_b$.
The eigenvalues of $\mathsf B$ are $n_i^2$.
Isotropy gives a smooth matrix function $\mathsf S=\mathcal S(\mathsf B)$.

\begin{proposition}[Stationary motion of material labels]
\label{thm:material-symmetry}
Consider an Einstein--elastic solution with the
law~\eqref{eq:elastic-law} and $\kappa,\mu_0>0$.
Let the particle worldlines of a single body be bijectively labeled by a
bounded connected Euclidean domain $\mathcal D$ with smooth boundary.
Suppose a complete spacetime Killing field $K$ preserves the material
worldtube, $\rho+p_i>0$, and all strains lie in a sufficiently small fixed
rotation-invariant neighborhood of $\mathsf B=I$.
Then, in material coordinates centered at the centroid of $\mathcal D$,
\begin{equation}
 KX^A=\Omega^A{}_B X^B,\qquad \Omega_{AB}=-\Omega_{BA},
 \label{eq:material-killing}
\end{equation}
where $\Omega$ is constant and its rotations preserve $\mathcal D$.
If $\mathcal D$ has no continuous rotational symmetry, then $KX^A=0$.
\end{proposition}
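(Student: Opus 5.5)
The plan is to show that the Killing flow induces a one-parameter group of diffeomorphisms of the label domain $\mathcal D$ that preserve the Euclidean material metric $k_{AB}$. Such maps are restricted Euclidean motions, and the centroid then fixes the translation part.

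\emph{Step 1: induced flow on labels.} Because $K$ is complete and preserves the material worldtube, its flow $\varphi_s$ maps particle worldlines to particle worldlines. The flow preserves $g$, and the Einstein equations give $\mathcal L_KT_{ab}=0$. Since $\rho+p_i>0$, the timelike eigenvector of $T^a{}_b$ is unique, so $\mathcal L_Ku^a=0$ and worldlines go to worldlines. Bijective labeling then defines diffeomorphisms $\psi_s$ of $\mathcal D$ with $X\circ\varphi_s=\psi_s\circ X$. This gives $KX^A=\xi^A(X)$ for a vector field $\xi$ on $\mathcal D$ tangent to $\partial\mathcal D$.

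\emph{Step 2: $\psi_s$ is a $k$-isometry.} Invariance gives $\mathcal L_K q_{ab}=0$ and $\mathcal L_K\mathsf S=0$. Since $\mathsf S=\mathcal S(\mathsf B)$, the key point is that near $\mathsf B=I$ the map $\mathcal S$ is locally invertible on symmetric matrices. Its linearization at relaxation is the isotropic Hooke map $E\mapsto\lambda\operatorname{tr}E\,I+2\mu_0E$, which is invertible because $\kappa>0$ and $\mu_0>0$. The inverse function theorem, applied in a rotation-invariant neighborhood where the inverse is single-valued and smooth, then gives $\mathcal L_K\mathsf B=0$, hence $\mathcal L_Kb_{ab}=0$. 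Writing $b_{ab}=k_{AB}\partial_aX^A\partial_bX^B$ and using that $X$ has rank three, we obtain $\psi_s^*k=k$. I expect this step to be the main obstacle, since I must verify that a fixed neighborhood, uniform along the flow, suffices.

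\emph{Step 3: classify isometries.} A $k$-isometry of a connected open Euclidean domain is the restriction of a Euclidean motion $X\mapsto RX+c$, by Liouville rigidity or because Killing fields of flat space are affine. Differentiating in $s$ gives $\xi^A=\Omega^A{}_BX^B+c^A$ with $\Omega$ antisymmetric and constant. The motion preserves the bounded domain $\mathcal D$, and a Euclidean motion preserves the Lebesgue measure and therefore the centroid. Taking the centroid as origin then forces $c=0$, which gives Eq.~\eqref{eq:material-killing}, with the rotations $e^{s\Omega}$ preserving $\mathcal D$.

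\emph{Step 4: no continuous rotational symmetry.} If $\Omega\neq0$, then $\{e^{s\Omega}\}$ is a continuous rotation group preserving $\mathcal D$. Hence $\Omega=0$ when $\mathcal D$ has no such symmetry, and then $KX^A=0$.
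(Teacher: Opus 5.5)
Your proof is correct and follows the paper's route. Uniqueness of the timelike eigenvector gives $\mathcal L_K u=0$, and injectivity of the isotropic stress map near $\mathsf B=I$ gives $\mathcal L_K\mathsf B=0$, hence a $k$-isometric flow on $\mathcal D$; flat Killing fields are affine, and the centroid removes the translation. The one difference is that you obtain injectivity from the inverse function theorem and then shrink to a Frobenius ball about $I$, which is conjugation invariant, whereas the paper integrates the positivity of $\langle E,D\mathcal S[E]\rangle$ along segments of a convex ball. Your uniformity concern is resolved because the isometric flow preserves the eigenvalues of $\mathsf B$, so both strains being compared lie in the same fixed rotation-invariant neighborhood.
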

\begin{proof}
First, the stress determines the strain on this constitutive branch.
Differentiating Eq.~\eqref{eq:elastic-law} at relaxation gives, for every
symmetric matrix $E$,
\begin{equation}
 D\mathcal S_I[E]=\mu_0 E+
 \left(\frac\kappa2-\frac{\mu_0}3\right)\operatorname{tr}(E)I.
 \label{eq:stress-inverse}
\end{equation}
For diagonal $E$, substitute $n_i=1+\epsilon E_{ii}/2$ to first order;
orthogonal diagonalization and isotropy give the formula for general $E$.
The eigenvalues of this linear map on traceless and isotropic strains are
$\mu_0$ and $3\kappa/2$. Thus
$\langle E,D\mathcal S_I[E]\rangle\ge c|E|^2$ for some $c>0$, using the
matrix inner product $\langle E,F\rangle=\operatorname{tr}(E^{\mathsf T}F)$.
Continuity preserves a positive lower bound in a sufficiently small ball
about $I$. Integrating the derivative along each straight segment in that
ball proves that $\mathcal S$ is injective there. The ball can be chosen
rotation invariant and contained in the positive strain matrices.

The Einstein equations imply $\mathcal L_KT=0$. Since $\rho+p_i>0$,
$-\rho$ is distinct from every spatial stress eigenvalue; hence the future
unit timelike eigenvector $u$ is unique and $\mathcal L_Ku=0$.
The rest metric and spatial stress are therefore stationary. Isotropy and
injectivity of $\mathcal S$ imply $\mathcal L_K\mathsf B=0$, and consequently
$\mathcal L_Kb=0$.

Because $[K,u]=0$, the flow of $K$ maps entire particle worldlines to
particle worldlines. It descends through their labels to a flow on
$\mathcal D$. The equality $\mathcal L_K(X^*k)=0$ makes this a Euclidean
isometry. Its generator $\xi$ obeys
$\partial_A\xi_B+\partial_B\xi_A=0$. Differentiating and cycling the
indices gives $\partial_A\partial_B\xi_C=0$; connectedness then gives
$\xi=c+\Omega X$ with constant antisymmetric $\Omega$.
These isometries preserve the bounded domain and its volume, so they fix
its centroid. In centered coordinates $c=0$, proving the claim.
\end{proof}

For a round material annulus, the allowed label motion is a rotation.
In stationary coordinates $K=\partial_t$, Eq.~\eqref{eq:material-killing}
integrates to $X(t,x)=e^{t\Omega}f(x)$. Where $f$ is invertible, particle
paths are images under $f^{-1}$ of material circles. Their spatial shapes
and speeds need not be those of rigid rotation. The proposition therefore
imposes no axial spacetime symmetry and supplies no additional circulating
equilibrium.

The sector $KX=0$ has a stronger geometric restriction. Here the particles
follow the stationary time symmetry itself.

\begin{proposition}[Staticity of a comoving isolated shell]
\label{thm:staticity}
Suppose an isolated shell governed by the Einstein--elastic equations admits
the decomposition
\eqref{eq:stationary-quotient} globally on $\mathbb R\times\mathbb R^3$,
with time-independent fields, $K=\partial_t$, $e^{2U}>0$, and
positive-definite $h$.
Assume a regular vacuum cavity, piecewise smooth fields with Darmois
matching and no surface layers, and
$U,h-\delta,\psi=O(r^{-1})$ with first derivatives $O(r^{-2})$ at infinity.
If $KX^A=0$ throughout the material, then $d\psi=0$ and the spacetime is
static.
\end{proposition}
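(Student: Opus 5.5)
The plan is to show that the twist of $K$ satisfies a conservation law in every region, and then to run a global integration-by-parts argument on $\mathbb R^3$ that forces the twist to vanish. Since $KX^A=0$, the material velocity is $u=e^{-U}K$. Because $T_{ab}u^b=-\rho u_a$, we get $T_{ab}K^b=-\rho K_a$ in the material. The same relation holds trivially in both vacuum regions. Hence $R_{ab}K^b=8\pi G(T_{ab}-\tfrac12Tg_{ab})K^b$ is everywhere proportional to $K_a$, so
\[
 K\wedge\operatorname{Ric}(K)=0
\]
on each smooth piece. I would then invoke the standard Killing identity
\[
 d\,\varpi_K=2\,{*}\bigl(K\wedge\operatorname{Ric}(K)\bigr),\qquad
 \varpi_K={*}(K\wedge dK),
\]
which gives $d\varpi_K=0$ on each smooth piece. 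In the quotient decomposition \eqref{eq:stationary-quotient}, with $K_\flat=-e^{2U}(dt+\psi)$, a direct computation gives $\varpi_K=\pm e^{4U}{*_h}d\psi$ as a form on the orbit space, up to a factor fixed by conventions. So the content is
\[
 d\bigl(e^{4U}{*_h}\mathcal F\bigr)=0,\qquad \mathcal F=d\psi,
\]
away from the faces.

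Next I will handle the faces. Darmois matching without surface layers means the induced metric and extrinsic curvature are continuous. In the adapted gauge this makes $U$, $h$, $\psi$ continuous, with continuous normal derivatives of the components entering $\mathcal K_{ab}$. In particular the tangential components of $\mathcal F$ and the normal component of $e^{4U}{*_h}\mathcal F$ are continuous. Then $e^{4U}{*_h}\mathcal F$ is closed in the distributional sense on all of $\mathbb R^3$. The cavity is regular vacuum, so there is no excised region, and $\mathbb R^3$ is simply connected. Hence there is a globally defined, piecewise smooth, $C^0$ function $\chi$ with $d\chi=e^{4U}{*_h}\mathcal F$. The decay assumptions give $\mathcal F=O(r^{-2})$ and therefore $d\chi=O(r^{-2})$, so $\chi=\chi_\infty+O(r^{-1})$, and we may normalize $\chi_\infty=0$.

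Now I integrate over a large ball $B_r$:
\[
 \int_{B_r}e^{4U}\,\mathcal F\wedge{*_h}\mathcal F
 =\int_{B_r}\mathcal F\wedge d\chi
 =\pm\int_{\partial B_r}\chi\,\mathcal F .
\]
Applying Stokes piecewise, the interface contributions cancel by the continuity just established, and the integral over the regular cavity carries no extra boundary. Since $\chi=O(r^{-1})$ and $\mathcal F=O(r^{-2})$ on a sphere of area $O(r^2)$, the boundary term is $O(r^{-1})\to0$. The left side is nonnegative because $e^{4U}>0$ and $h$ is positive definite. Therefore $\mathcal F=d\psi=0$. Then $\psi=d f$ globally on $\mathbb R^3$, and $t\mapsto t+f$ removes the cross term, so the spacetime is static.

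The step I expect to be hardest is the interface regularity: showing that Darmois matching with $p_r=0$ really yields continuity of exactly the components of $\mathcal F$ and of $e^{4U}{*_h}\mathcal F$ needed for distributional closedness and for the cancellation in Stokes. I would first try to do this in Gaussian normal coordinates about each face. There I would express $\mathcal K_{ab}$ in terms of the normal derivatives of $U$, $h$ and $\psi$, and check that its mixed $t$–tangential components control $n\cdot e^{4U}{*_h}d\psi$. Verifying the sign and normalization of the twist identity is routine by comparison.
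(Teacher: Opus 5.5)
Your route is valid and differs from the paper's in how the key equation is obtained. The paper observes that with $\partial_tX=0$ the strain and material volume term do not involve $\psi$. Varying $\psi$ in the reduced action, via Eq.~\eqref{eq:stationary-curvature}, then gives the source-free equation $D_i(e^{4U}F^{ij})=0$. The paper handles the faces through the weak form of this equation: absence of a surface layer means the flux $e^{4U}F^{ij}n_i$ does not jump. It then multiplies by $\psi_j$ and integrates, using $\psi=O(r^{-1})$ directly. You obtain the same equation, $d(e^{4U}{*_h}\mathcal F)=0$, from the twist identity and $T_{ab}K^b=-\rho K_a$. This avoids the action and the reduced curvature formula. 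It also shows that the argument needs only that the comoving Killing field is an eigenvector of $T$. The price is a separate interface analysis. Your potential $\chi$ is harmless but unnecessary: $\int e^{4U}\mathcal F\wedge{*_h}\mathcal F=\int d\psi\wedge e^{4U}{*_h}\mathcal F$ can be integrated by parts against $\psi$ itself.

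The interface step is wrong as written. Continuity of the normal component of the 1-form $e^{4U}{*_h}\mathcal F$ is equivalent to continuity of the pullback $\mathcal F_{AB}$, which the induced metric already gives. That condition removes a surface term from a distributional divergence, not from an exterior derivative. Distributional closedness of $e^{4U}{*_h}\mathcal F$ instead needs continuity of its tangential components, i.e.\ of the flux $e^{4U}\mathcal F_{Ai}\hat n^i$, with $\hat n$ the $h$-unit normal to the face. The extrinsic curvature supplies exactly this. Take $\mathcal K(X,Y)=g(\nabla_XN,Y)$ and spacetime unit normal $N_a\,dx^a=e^{-U}\hat n_i\,dx^i$; then
\[
 \mathcal K_{tt}=-e^{3U}\partial_{\hat n}U,\qquad
 \mathcal K_{tA}-\psi_A\mathcal K_{tt}=\tfrac12e^{3U}\mathcal F_{Ai}\hat n^i .
\]
Continuity of $U$ and $\psi_A$ (from the induced metric) and of $\mathcal K_{tt},\mathcal K_{tA}$ therefore gives the flux condition. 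The mixed $t$--tangential components you planned to examine thus control the tangential components of the twist form, not its normal component. Your Stokes cancellation then uses both facts: closedness makes $\chi$ continuous, and the induced metric makes the pullback of $\mathcal F$ continuous. With this correction the proof closes.
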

\begin{proof}
Put $F_{ij}=\partial_i\psi_j-\partial_j\psi_i$, with its indices raised
using $h$. The inverse metric and volume element are
\begin{equation}
 g^{-1}=-e^{-2U}\partial_t^2+
 e^{2U}h^{ij}(\partial_i-\psi_i\partial_t)
                  (\partial_j-\psi_j\partial_t),\qquad
 \sqrt{-g}=e^{-2U}\sqrt h.
 \label{eq:stationary-inverse-metric}
\end{equation}
When $\partial_tX=0$, both the strain
$H^{AB}=e^{2U}h^{ij}\partial_iX^A\partial_jX^B$ and the material volume
term are independent of $\psi$. The curvature identity for
Eq.~\eqref{eq:stationary-quotient} is~\cite{andersson2010}
\begin{equation}
 R_g\sqrt{-g}=\sqrt h\left(R_h+2\Delta_hU-2|DU|_h^2
                       +\tfrac14e^{4U}F_{ij}F^{ij}\right).
 \label{eq:stationary-curvature}
\end{equation}
Varying $\psi$ in the Einstein--material action therefore gives the
source-free mixed equation
\begin{equation}
 D_i(e^{4U}F^{ij})=0.
 \label{eq:stationary-sourcefree}
\end{equation}
Multiply by $\psi_j$ and integrate over the full spatial domain, including
the cavity. Antisymmetry gives
$(D_i\psi_j)F^{ij}=F_{ij}F^{ij}/2$, hence
\begin{equation}
 \frac12\int_{\mathbb R^3}e^{4U}F_{ij}F^{ij}\,dV_h
 =\lim_{R\to\infty}\int_{S_R}\psi_j e^{4U}F^{ij}n_i\,dA_h=0.
 \label{eq:staticity-energy}
\end{equation}
The surface term is $O(R^{-1})$. At each material face the two fluxes
cancel by the weak Einstein equation and the absence of a surface layer.
The regular center contributes no boundary term. Positivity of $h$ and
$e^{4U}$ implies $F=0$. Since the spatial domain is $\mathbb R^3$,
$\psi=df$ globally, and $T=t+f(x)$ removes the cross term in
Eq.~\eqref{eq:stationary-quotient}.
\end{proof}

This result allows arbitrary shell shape and anisotropic stress. It uses
strict stationarity, a regular center, the global product topology, and
asymptotically vanishing connection. It does not require stress injectivity.
The rotating branch evades its comoving hypothesis: the particles follow
$K+\varpi L$, not $K$. That helical generator becomes spacelike far from
the axis, so it cannot replace $K$ in the global positive metric
decomposition above.

\section{The small rotating branch}
\label{app:rotating-existence}

\begin{proof}[Proof of Theorem~\ref{thm:rotating}]
Theorem~4.6 of Ref.~\cite{andersson2010} applies to a bounded connected
axisymmetric material domain with smooth boundary. It does not require that
the domain contain the origin or that its boundary be connected. The annulus
therefore qualifies. The law depends only on the principal densities, so
it is isotropic and invariant under rotations of the material reference
frame, as required in the rotating construction. The relaxed specific energy
is $m_0>0$, and its elastic quadratic form is
$\kappa(\operatorname{tr}e)^2+2\mu_0|\operatorname{dev}e|^2>0$ for
nonzero symmetric $e$. These are the required constitutive conditions.
The construction's gauge anchor can be chosen on the rotation axis within
the annulus, for example at $X_0=(0,0,(A+B)/2)$.
The theorem solves the full equations and restores the projected force and
gauge equations. Its small deformation preserves both boundary components
and the vacuum cavity. DEC, the timelike material flow, and the characteristic
gaps persist from relaxation by continuity on the compact body.
The branch depends smoothly on $G$ and $\varpi$ in material coordinates.
Indeed, the fixed-domain residual in Section~3 of Ref.~\cite{andersson2010}
is smooth for this law: its coefficients use smooth strain invariants,
matrix inverses and determinants near positive matrices. A bounded extension
of the material displacement to all of $\mathbb R^3$, equal to zero outside
a compact set, fixes both interfaces when the equations are pulled back.
The unknowns lie in $W^{2,p}$ on the body and weighted $W^{2,p}_\delta$
spaces for the gravitational fields, with $p>3$ and $-1<\delta<-1/2$.
Their first derivatives are bounded, so the products of first derivatives
and coefficients multiplying second derivatives define smooth maps into
the corresponding $L^p$ spaces. The traction residual takes values in
$W^{1-1/p,p}$ on both faces. The material indicator is fixed under
parameter differentiation; one does not differentiate a moving density jump
in a fixed spatial chart. The projected derivative is an
isomorphism after removing rigid motions, so the smooth implicit-function
theorem applies; the subsequent equilibration restores the full equations.

The chosen point fixes translations for the existence proof. For the
cavity observables we translate along the axis to center the image of the
material annulus, fixing $\int_{\mathcal D}\phi(X)\,dX=0$, and fix its
orientation. These conditions remove the same rigid-motion kernel.
Equatorial reflection preserves the centered data, so local uniqueness
makes it a symmetry of the branch. Together with axial symmetry it fixes
the central worldline and sets its spatial acceleration to zero.
Time reversal changes $\varpi$ to $-\varpi$; scalar clock rates and the
central electric tidal eigenvalues are consequently even in $\varpi$.
On compact subsets of the cavity, the stationary vacuum equations are
elliptic in the chosen gauge. Interior regularity applied to their
parameter derivatives promotes the smooth dependence above to the
curvature and clock observables used in the main text.
\end{proof}

\section{Centrifugal response coefficients}
\label{app:centrifugal}

The angular decomposition reduces the displacement problem to radial
equations. The tractions on both faces fix their integration constants.
An integrated force balance first relates the stresses to the rotational
mass. Dot $\operatorname{div}\tau(E)=-m_0X_\perp$ with $X$
and integrate over the reference annulus. The free-traction terms vanish
at both faces, giving
\begin{equation}
 3\kappa\int_{\mathcal D}\operatorname{div}u\,dX
 =\left(m_0-\frac{3\kappa}{2}\right)
   \int_{\mathcal D}|X_\perp|^2\,dX.
 \label{eq:centrifugal-virial}
\end{equation}
Substitution in Eq.~\eqref{eq:rotating-active-density} gives the integrated
order-$\varpi^2$ term
\[
 \tfrac12m_0\varpi^2\int_{\mathcal D}|X_\perp|^2dX
 =\tfrac12\mathcal I_0\varpi^2,
\]
as required by the first law.
The spatial stresses cancel in the unweighted integral of $T_{ii}$ at
this order. They need not cancel in the inverse moment that determines
the cavity tide.

Use the dimensionless moduli $k,\mu$ of
Eq.~\eqref{eq:dimensionless-parameters}, and put
$\lambda_*=k-2\mu/3$, $f_*=1+\lambda_*-\mu$.
Substitution of Eq.~\eqref{eq:centrifugal-decomposition} into
Eq.~\eqref{eq:centrifugal-boundary-problem} gives
\begin{align}
 (\lambda_*+2\mu)(U_0''+2U_0'/s-2U_0/s^2)&=-2f_*s/3,\nonumber\\
 \mu(U_2''+2U_2'/s-8U_2/s^2+12V_2/s^2)
                  +(\lambda_*+\mu)D_2'&=2f_*s/3,\nonumber\\
 \mu(V_2''+2V_2'/s-6V_2/s^2+2U_2/s^2)
                  +(\lambda_*+\mu)D_2/s&=f_*s/3,
 \label{eq:centrifugal-radial-odes}
\end{align}
where $D_0=U_0'+2U_0/s$ and $D_2=U_2'+2U_2/s-6V_2/s$.
At both $s=1,\beta$, the three traction conditions are
\begin{equation}
 \lambda_*D_0+2\mu U_0'=-\lambda_*s^2/3,\qquad
 \lambda_*D_2+2\mu U_2'=\lambda_*s^2/3,\qquad
 V_2'+(U_2-V_2)/s=0.
 \label{eq:centrifugal-six-tractions}
\end{equation}
The radial equations have the general solution
\begin{align}
 U_0&=h_0s+j_0/s^2-\frac{f_*s^3}{15(\lambda_*+2\mu)},\nonumber\\
 U_2&=2a_2s+\frac{6\lambda_*b_2+f_*/3}{5\lambda_*+7\mu}s^3
       +\frac{(3\lambda_*+5\mu)c_2}{\mu s^2}-\frac{3d_2}{s^4},\nonumber\\
 V_2&=a_2s+b_2s^3+c_2/s^2+d_2/s^4.
 \label{eq:centrifugal-general-profiles}
\end{align}
The homogeneous radial powers are $s,s^{-2}$ in the monopole and
$s,s^3,s^{-2},s^{-4}$ in the quadrupole; a cubic particular solution
accounts for the source. Positivity of $k,\mu$
keeps the displayed denominators nonzero. The six traction conditions
fix the six integration constants: the homogeneous problem has zero
elastic energy and hence only a rigid displacement, absent in these
monopole and spheroidal quadrupole sectors.
For $(\beta,k,\mu)=(2,1/5,1/10)$ this gives
\begin{align}
 U_0&=\frac{5053}{3150}s+\frac{326}{175s^2}-\frac{31}{150}s^3,\nonumber\\
 U_2&=2a_2s+\left(\frac{24b_2}{41}+\frac{31}{123}\right)s^3
                       +\frac{9c_2}{s^2}-\frac{3d_2}{s^4},\nonumber\\
 V_2&=a_2s+b_2s^3+\frac{c_2}{s^2}+\frac{d_2}{s^4},\nonumber\\
 (a_2,b_2,c_2,d_2)&=
 \frac{(-11413363,\,2029314,\,-3398520,\,-4645536)}{3202122}.
 \label{eq:centrifugal-profiles}
\end{align}
At $s=1,2$, with $D_2=U_2'+2U_2/s-6V_2/s$, the quadrupole tractions reduce to
$2D_2/15+U_2'/5=2s^2/45$ and $V_2'+(U_2-V_2)/s=0$.
With $D_0=U_0'+2U_0/s$, the monopole traction is
$2D_0/15+U_0'/5=-2s^2/45$ at both faces.

In particular, $U_2(1)=-6246978/533687$ and
$U_2(2)=-6078274/533687$.
Direct integration of $D_2/s$ gives $\mathcal C=125979/152482$.
Thus the face and bulk terms of Eq.~\eqref{eq:cavity-quadrupole} are
$\tfrac23[\mathcal F(1)-\mathcal F(2)]=458263/76241$ and
$-(1+3k)\mathcal C=-503916/381205$, respectively.
The transport, pressure, and motion terms in
Eq.~\eqref{eq:cavity-source-parts} evaluate to
\begin{align}
 I_{\rm disp}&=\frac{790547}{152482},\qquad
 I_{\rm stress}=-\frac{377937}{762410},\qquad
 I_{\rm kin}=-\frac65,\nonumber\\
 \mathcal Q_c&=\frac{4\pi}{5}(I_{\rm disp}+I_{\rm stress}+I_{\rm kin})
             =\frac{5319812\pi}{1906025}>0.
 \label{eq:exact-cavity-response}
\end{align}
These fractions specify the solution of this particular boundary problem.
The physical conclusions use their signs and relative size. In particular,
$\mathcal Q_c$ is dimensionless; $q$ has dimensions of inverse length squared.

For the contrasting material $k=\mu=2/5$ and $\beta=2$, the same
six traction equations instead give
\begin{align}
 (h_0,j_0)&=\left(\frac{1891}{4410},\frac{61}{245}\right),\nonumber\\
 (a_2,b_2,c_2,d_2)&=\left(-\frac{1271651}{2185071},
 \frac{303495}{2913428},-\frac{193040}{728357},-\frac{150784}{728357}\right).
 \label{eq:stiffer-centrifugal-coefficients}
\end{align}
Substitute these constants in Eq.~\eqref{eq:centrifugal-general-profiles},
with $\lambda_*=2/15$ and $f_*=11/15$. The face flattenings and volume
moment are
\begin{equation}
 \mathcal F(1)=\frac{1272675}{416204},\qquad
 \mathcal F(2)=\frac{150630}{104051},\qquad
 \mathcal C=\frac{24495}{104051}.
 \label{eq:stiffer-face-response}
\end{equation}
Both flattenings are positive. The inverse moment, however, is
\begin{equation}
 \mathcal Q_c=\frac{4\pi}{5}\left(
 \frac{223385}{208102}-\frac{53889}{104051}-\frac9{10}\right)
 =-\frac{716848\pi}{2601275}<0.
 \label{eq:stiffer-cavity-response}
\end{equation}
Thus the sign reversal follows from the same boundary problem and source
moment as the original example. The uniform constitutive box in
Appendix~\ref{app:material-bounds} applies only to the original moduli;
for this second choice, causality and DEC follow in a sufficiently small
neighborhood of relaxation from their strict relaxed inequalities.

The monopole gives the central clock coefficient without any additional
field equation. Write $c_0=f_*/[15(\lambda_*+2\mu)]$ in
Eq.~\eqref{eq:centrifugal-general-profiles}; then
$U_0=h_0s+j_0/s^2-c_0s^3$ and $D_0=3h_0-5c_0s^2$.
Termwise integration in Eq.~\eqref{eq:rotating-clock-response} gives
\begin{equation}
 \frac{\mathcal Z_c}{4\pi}
 =\frac{(1+9k)h_0}{2}(\beta^2-1)
  +j_0(1-\beta^{-1})
  -\frac{(1+15k)c_0+1-k}{4}(\beta^4-1).
 \label{eq:central-clock-coefficient}
\end{equation}
For the first material, substitute Eq.~\eqref{eq:centrifugal-profiles}
and $c_0=31/150$ to obtain $\mathcal Z_c/(4\pi)=4118/2625$.
For the second material, $c_0=11/210$; inserting
Eq.~\eqref{eq:stiffer-centrifugal-coefficients} gives
$\mathcal Z_c/(4\pi)=-15929/29400$.

\section{Rotational first law}
\label{app:first-law}

Consider a regular, horizon-free member of the rigidly rotating branch.
Keep the material annulus, reference metric and constitutive law fixed.
The first law relates the change in mass to the change in angular momentum
between neighboring equilibria. Its derivation must include the moving
faces: although they do no traction work, their displacement changes the
domain occupied by the material.
For a variation $h_{\mu\nu}=\delta g_{\mu\nu}$, $h=g^{\mu\nu}h_{\mu\nu}$,
the Einstein--Hilbert and material actions have boundary potentials
\begin{equation}
 \theta_g^\mu=\frac{\sqrt{-g}}{16\pi G}
       (\nabla_\nu h^{\mu\nu}-\nabla^\mu h),\qquad
 \theta_m^\mu=-2\sqrt{-g}\,\rho_{H^{AB}}\nabla^\mu X^B\delta X^A.
 \label{eq:material-symplectic-potential}
\end{equation}
Here the indices on $h_{\mu\nu}$ are raised with the background metric.
At a moving face with displacement $\xi$, fixed labels give
$\delta X^A=-\xi^\nu\partial_\nu X^A$. The stress identity
\[
 2\rho_{H^{AB}}\nabla^\mu X^B\partial_\nu X^A
 =T^\mu{}_\nu+\rho\delta^\mu{}_{\nu}
\]
then yields
\begin{equation}
 n_\mu\theta_m^\mu-\sqrt{-g}\rho\,n\cdot\xi
   =\sqrt{-g}\,n_\mu T^\mu{}_\nu\xi^\nu=0.
 \label{eq:first-law-face}
\end{equation}
The second term is the variation of the material domain. Zero traction
removes its work at each face; Darmois matching cancels the gravitational
interface terms. The regular cavity supplies no inner boundary charge.

Antisymmetrizing the variation of the potentials gives the symplectic form
$\mathcal W_\Sigma$. Its sign is fixed by
$\mathcal W(\delta_1,\delta_2)=\int(\delta_1\pi\,\delta_2q-
\delta_2\pi\,\delta_1q)$ over the material and gravitational canonical pairs.
For variations satisfying the linearized constraints and matching, the
Hamiltonian identity~\cite{iyer1994} for a fixed generator
$k=K+\varpi L$ is
\begin{equation}
 \delta H_k=\mathcal W_\Sigma(\delta,\mathcal L_k(g,X)),
 \qquad H_k=\mathcal M-\varpi\mathcal J.
 \label{eq:rotational-hamiltonian}
\end{equation}
The charge follows from the ADM surface term at infinity; standard
asymptotically flat decay and parity conditions make it finite.
At a rigidly rotating equilibrium, $\mathcal L_k g=0$ and $kX=0$, so
$\delta\mathcal M-\varpi\delta\mathcal J=0$. Applying this identity at each
point of the equilibrium branch, with its generator fixed during that
variation, gives $d\mathcal M=\varpi\,d\mathcal J$ at fixed $G$. Differentiating
the smooth odd expansion of $\mathcal J$ and integrating this identity gives
Eq.~\eqref{eq:rotating-first-law}.

\section{Uniform translation}
\label{sec:coasting}

The velocity of a stationary translation is already restricted at infinity,
independently of the constitutive law or internal flow.

\begin{proposition}[Asymptotic translation~\cite{beig1996}]
\label{prop:asymptotic}
Let a spacetime have a Killing field $K=\ell n+Y$ on an asymptotically
Euclidean vacuum end of a spacelike hypersurface with future unit normal
$n$. For some $\alpha>1/2$,
assume
\begin{equation}
 \gamma_{ij}-\delta_{ij}=O_2(r^{-\alpha}),\quad
 K_{ij}=O_1(r^{-1-\alpha}),\quad
 \ell-a^0=O_1(r^{-\alpha}),\quad Y^i\to a^i,
 \label{eq:asymptotic-data}
\end{equation}
with $\ell,Y\in C^1$ and $a^0>0$. Here $O_k$ includes decay of derivatives
through order $k$. If the ADM four-momentum is future timelike, then
\begin{equation}
 P^\mu=\frac{P^0}{a^0}a^\mu,\qquad
 V^i:=\frac{a^i}{a^0}=\frac{P^i}{P^0},\qquad |\mathbf V|<1.
 \label{eq:velocity-momentum}
\end{equation}
\end{proposition}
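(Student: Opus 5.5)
The plan is the Beig--Chruściel argument: show that the asymptotic Killing vector $a^\mu$ is an eigenvector of the boost-translation structure generated by the ADM charges, so $a^\mu$ must be proportional to $P^\mu$. The first step rewrites the Killing equation in terms of the initial data. Writing $K=\ell n+Y$, the Killing equations reduce on the slice to the Killing initial data (KID) equations
\[
 \mathcal L_Y\gamma_{ij}=-2\ell K_{ij}\ (\text{up to sign convention}),
\]
together with a second-order equation for $\ell$ involving $D_iD_j\ell$, $\ell\,\mathrm{Ric}(\gamma)_{ij}$, $\ell K K$ terms and $\mathcal L_YK_{ij}$. In the vacuum end these hold with no matter terms.

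The second step extracts the asymptotics. The decay hypotheses \eqref{eq:asymptotic-data} with $\alpha>1/2$ imply $D_iD_j\ell=O(r^{-2-\alpha})$ and $\partial_{(i}Y_{j)}=O(r^{-1-\alpha})$. Integrating along rays shows that $\ell$ and $Y$ are asymptotic to $a^0+$ (possible linear boost/rotation terms) and $a^i+\Omega^i{}_jx^j$. The assumed limits $\ell\to a^0$ and $Y^i\to a^i$ exclude the linear pieces up to $o(1)$ errors. I would state these estimates as lemmas, using weighted ODE integration along radial lines.

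The third step, which is the key one, uses the Hamiltonian identity. The generator $K$ is Killing, so $\mathcal L_K(\gamma,\pi)=0$. In the Regge--Teitelboim/Iyer--Wald identity for the asymptotic generator with data $(a^0,a^i)$, the variation of the Hamiltonian equals the symplectic pairing with $\mathcal L_K$, which vanishes. Rather than varying, apply the identity to asymptotic boosts: for each boost generator $b_{(j)}=x^j n+x^j$-weighted lapse, the boost charge's time evolution along $K$ equals the Poincaré bracket $\{H_K,H_{b}\}$. Since $K$ is a symmetry, the flux integral of the boost charge along $K$'s flow vanishes. Computing that flux at infinity gives $a^0P^j-a^jP^0=0$ for each $j$ (the Poincaré algebra bracket $[P,\text{boost}]$). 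Concretely, I would compute
\[
 \lim_{R\to\infty}\oint_{S_R}\big(\text{boost surface integrand evaluated on }\mathcal L_K(\gamma,\pi)\big)=0,
\]
and expand $\mathcal L_K\gamma\approx\ell\cdot 2K_{ij}+\mathcal L_Y\gamma$ to leading order. The surviving terms are $a^0$ times the momentum surface integral and $a^j$ times the energy surface integral. The condition $\alpha>1/2$ makes the quadratic remainders integrate to zero at infinity.

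The last step uses the resulting relation $a^0P^i=a^iP^0$, which gives $P^\mu=(P^0/a^0)a^\mu$ since $a^0>0$ and $P^0>0$. Future timelikeness of $P$ then yields $|\mathbf V|=|\mathbf P|/P^0<1$. The main obstacle is the third step: justifying that the boost-charge flux is well defined and vanishes under only $\alpha>1/2$ decay. Boost charges are delicate and normally require parity conditions. To handle this, I would first try working directly with the linear momentum-type integrals $\oint \ell\,(\pi^{ij})n_j$ and the energy integrand contracted with $Y$, and derive the relation from the constraint equations integrated over large annuli, where $\mathcal L_K=0$ gives an exact divergence identity. This avoids needing the boost charge itself to converge.
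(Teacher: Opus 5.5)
There is a genuine gap in your Step 3, and Step 3 is where all the content lies. For comparison, the paper does not reprove the result. It checks the hypotheses of Proposition~3.1 of Ref.~\cite{beig1996}, namely the first KID equation $\mathcal L_Y\gamma_{ij}=2\ell K_{ij}$ (in its sign convention) and the vacuum constraints $E=S_i=0$ on the end, and then cites that proposition.

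Your main route, the flux of the boost charge along $K$, fails under the stated hypotheses, as you suspect. With lapse $x^j$, the $x^j$-weighted terms in the boost flux evaluated on $\mathcal L_K(\gamma,\pi)$ are individually of size $R\cdot R^{-2-\alpha}\cdot R^2=R^{1-\alpha}$. Even after you substitute $\mathcal L_Y\gamma=2\ell K$ and integrate by parts on the sphere, pieces such as $\oint(\partial_jY^i-\partial_iY^j)n^j\,dS$ and $\oint a^jU_in^j\,dS$ remain only $O(R^{1-\alpha})$. So your sentence that ``the surviving terms are $a^0$ times the momentum integral and $a^j$ times the energy integral'' is exactly what has to be proved. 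It does not follow from $\alpha>1/2$ killing quadratic remainders. Your fallback, integrating the constraints over annuli, would at best show that some flux has a limit as $R\to\infty$. It gives no reason for that limit to be zero.

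A direct argument using only the paper's two ingredients closes the gap. From Step 2 you need rates, not just $o(1)$: $Y^i-a^i=O(r^{-\alpha})$, $\partial Y=O(r^{-1-\alpha})$, and $\partial^2Y=O(r^{-2-\alpha})$. With these, replacing $\ell,Y,\gamma$ by $a^0,a,\delta$ in products costs $O(r^{-1-2\alpha})$, which integrates to zero on $S_R$ because $\alpha>1/2$. The second KID equation and the bound on $D_iD_j\ell$ are not needed.

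Put $h=\gamma-\delta$ and $U_i=\partial_kh_{ik}-\partial_ih_{kk}$. Substitute $2a^0K_{ij}\simeq\partial_iY^j+\partial_jY^i+a^k\partial_kh_{ij}$ into $8\pi P_i=\lim\oint(K_{ij}-K\gamma_{ij})n^j\,dS$. Then use the exact identity $\oint_{S_R}\partial_kW^{kj}n_j\,dS=0$ for antisymmetric $W$. This gives
\[
 16\pi\bigl(a^0P_i-a^iP^0\bigr)=\lim_{R\to\infty}\oint_{S_R}\Theta_{ij}n^j\,dS,\qquad
 \Theta_{ij}=a^jU_i-a^iU_j+\partial_jY^i-\partial_iY^j .
\]
Neither piece of $\Theta$ converges separately, but their sum has a good divergence. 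One finds
$\partial_j\Theta_{ij}=2a^0\partial_j(K_{ij}-\delta_{ij}K)-a^i\partial_jU_j+O(r^{-2-2\alpha})$. Here $\partial_jU_j$ is the linearized scalar curvature. The vacuum momentum and Hamiltonian constraints make both leading terms $O(r^{-2-2\alpha})$.

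It remains to show the flux vanishes, not merely converges. Let $\tilde\Theta$ be any antisymmetric extension of $\Theta$ into $B_R$. Then $J_i=\partial_j\tilde\Theta_{ij}$ is divergence-free, so
$\oint_{S_R}\Theta_{ij}n^j\,dS=\int_{B_R}J_i\,dV=\oint_{S_R}x_iJ_kn^k\,dS=O(R^{1-2\alpha})\to0$.

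This yields $a^0P^i=a^iP^0$ for every $i$, with no boost charge, parity condition, or rotation bracket. Your final step then completes the proof.
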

\begin{proof}
The Killing equation gives $\mathcal L_Y\gamma_{ij}=2\ell K_{ij}$ in
our sign convention, and the vacuum constraints give $E=S_i=0$ on the end.
Proposition~3.1 of Ref.~\cite{beig1996}, with the opposite sign for
extrinsic curvature, therefore gives $P^\mu\parallel a^\mu$.
The remaining conclusions follow from $P^0,a^0>0$ and timelikeness of $P$.
\end{proof}
This established result fixes the asymptotic translation speed even for a
circulating shell. It neither forces staticity nor determines an internal
clock comparison. The explicit family below realizes it with comoving matter.

\begin{proposition}[Exact coasting family]
\label{prop:coasting}
Let $(\bar g,\bar X)$ be a spherical elastic equilibrium constructed above,
with geometrized ADM mass $M$. For every $|v|<1$, set
\begin{equation}
 T=\Gamma_v(t-vx),\quad X=\Gamma_v(x-vt),\quad Y=y,\quad Z=z,
 \qquad \Gamma_v=(1-v^2)^{-1/2}.
 \label{eq:boost-map}
\end{equation}
Pulling back both $\bar g$ and $\bar X$ gives an exact Einstein--elastic
shell stationary under $K=\Gamma_v(\partial_t+v\partial_x)$, with a flat
vacuum cavity, zero surface traction, and the same material energy-condition
and characteristic bounds. In the corresponding asymptotic inertial frame,
\begin{equation}
 P^\mu=(\Gamma_v M,\Gamma_v Mv,0,0),\qquad
 (P^0)^2-|\mathbf P|^2=M^2.
 \label{eq:coasting-momentum}
\end{equation}
\end{proposition}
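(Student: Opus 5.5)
The plan is to treat the boost as a diffeomorphism acting on the Minkowski-like asymptotic chart of the static solution. Everything covariant is transported unchanged, so only the asymptotic bookkeeping of $K$ and $P^\mu$ needs computation.

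\textbf{Transport of the field equations.} Write the static equilibrium $(\bar g,\bar X)$ in asymptotically Cartesian coordinates $(\bar t,\bar x,\bar y,\bar z)$ in which $\bar g\to\eta$. Define the map $\Psi(t,x,y,z)=(T,X,Y,Z)$ by Eq.~\eqref{eq:boost-map}, a global diffeomorphism of $\mathbb R^4$, and set $g=\Psi^*\bar g$ and $X^A=\bar X^A\circ\Psi$. The Einstein and elastic equations~\eqref{eq:elastic-source} are diffeomorphism covariant, because the action $S_m$ and the Einstein--Hilbert action are invariant. Hence $(g,X)$ is again a solution, with $T_{ab}=\Psi^*\bar T_{ab}$. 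The occupied worldtube, its two timelike faces, the induced metrics and the extrinsic curvatures are carried along by $\Psi$. Darmois matching, zero traction $T^{ab}n_b=0$, flatness and regularity of the cavity, and the Schwarzschild exterior are therefore preserved. The principal densities $n_i$ and the rest-frame $\rho,p_i$ are scalars built from $H^{AB}$ and $u^a$, so they take the same values at corresponding points. The strict DEC in Eq.~\eqref{eq:ec} and the strict acoustic inequalities transfer pointwise, since they are statements in the material rest frame; this covers the bounds of Theorem~\ref{thm:elastic-existence} and Appendix~\ref{app:material-bounds}.

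\textbf{The Killing field.} Pushing $\partial_{\bar t}$ back gives $\Psi^*\partial_T$. Since $T=\Gamma_v(t-vx)$ and $X=\Gamma_v(x-vt)$, the inverse map is $t=\Gamma_v(T+vX)$, $x=\Gamma_v(X+vT)$. Thus $\partial_T=\Gamma_v(\partial_t+v\partial_x)=K$, which is Killing for $g$. The fact that $\bar X$ is independent of $\bar t$ gives $KX^A=0$, so the material is comoving with $K$.

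\textbf{The four-momentum.} This is the only step needing care, because the ADM momentum is defined on slices $t=$const of the new chart, and these are boosted relative to the static slices. Two routes are available.
\begin{itemize}
\item Use the Lorentz covariance of the ADM four-momentum for asymptotically flat data. The static frame has $\bar P^\mu=(M,0,0,0)$, since the exterior is Schwarzschild with mass $M$ by Eq.~\eqref{eq:komar-mass}. Under the boost, $P^\mu=\Lambda^\mu{}_\nu\bar P^\nu=(\Gamma_vM,\Gamma_vMv,0,0)$.
\item Avoid the covariance theorem and apply Proposition~\ref{prop:asymptotic}. Here $K$ has $a^\mu=\Gamma_v(1,v,0,0)$, so $P^\mu\parallel a^\mu$ and $V=v$. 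The norm $(P^0)^2-|\mathbf P|^2$ is the ADM mass $M$, which is invariant; it can also be computed directly from the linearized boosted Schwarzschild field $h_{\mu\nu}=2M(\eta_{\mu\nu}+2u_\mu u_\nu)/r'$ with $u=\Gamma_v(1,v,0,0)$.
\end{itemize}
Either route gives Eq.~\eqref{eq:coasting-momentum}. To use Proposition~\ref{prop:asymptotic}, one must verify its decay hypotheses on the new slices. These follow because the Schwarzschild exterior, written in isotropic coordinates, satisfies $O_2(r^{-1})$ decay, and the boost maps the boosted-slice radius to a quantity comparable to $r$ uniformly on each slice.

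I expect this decay verification, together with the finite-time overlap of the slices with the worldtube, to be the main technical point. Everything else is covariance.
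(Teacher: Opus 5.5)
Your proposal is correct and is essentially the paper's proof. General covariance transports the field equations, Darmois matching, zero traction, cavity flatness, and the rest-frame DEC and characteristic inequalities; the pulled-back static Killing field is $\partial_T=\Gamma_v(\partial_t+v\partial_x)=K$ with $KX^A=0$; and the Lorentz covariance of the ADM four-momentum under the asymptotic boost (your first route, the one the paper cites) turns $(M,0,0,0)$ into Eq.~\eqref{eq:coasting-momentum}.

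Your second route through Proposition~\ref{prop:asymptotic} is only a cross-check. That result fixes the direction of $P^\mu$ but not its norm, so $(P^0)^2-|\mathbf P|^2=M^2$ still rests on the covariance theorem or on your direct linearized computation. The decay verification you flag as the main technical point is therefore not needed for the proof.
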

\begin{proof}
General covariance preserves the full gravitational and material equations,
including the interface conditions, under the same pullback. The map sends
$K$ to $\partial_T$, so the fields are invariant along $K$ and the particles
follow it. Flatness, DEC, and local characteristic cones are invariant.
The asymptotic transformation is a Lorentz boost~\cite{beig1997}; applying it to the rest
ADM momentum $(M,0,0,0)$ gives Eq.~\eqref{eq:coasting-momentum}.
\end{proof}

The faces in the laboratory coordinates satisfy
$\Gamma_v^2(x-vt)^2+y^2+z^2=r(A)^2$ and $r(B)^2$, respectively.
Thus this is a uniformly moving material shell with nonzero momentum relative
to the chosen asymptotic frame. It is stationary with respect to $K$, not
with respect to laboratory $\partial_t$. Requiring both time symmetries at
nonzero $v$ would make the material support translation invariant along $x$,
which is incompatible with a nonempty compact shell. No small-velocity or
weak-field expansion is used in the boost.
Figure~\ref{fig:coasting} compares rest and laboratory worldtube sections. The straight
material paths in this coordinate diagram retain the static shell's proper
acceleration $\sqrt F\,\Phi'$; the central cavity tracer is geodesic.

\begin{figure}[t]
 \centering
 \includegraphics[width=\linewidth]{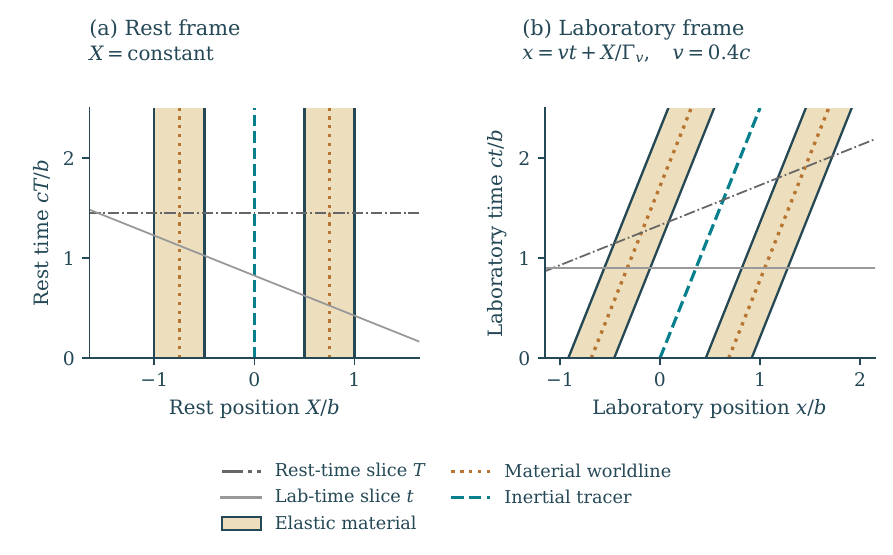}
 \caption{The same shell at rest (a) and coasting (b), in coordinate
 sections $Y=Z=0$ and $y=z=0$ with equal axis scales.
 The laboratory speed is $v/c=P^x/P^0=0.4$.
 Shading marks material; dotted paths are stress-supported material
 worldlines, and the dashed path is an inertial cavity tracer.
 The metric and material undergo the same exact Lorentz pullback,
 giving uniform motion with conserved ADM momentum. Lines of equal $T$
and equal $t$ show the different simultaneity slices. These are coordinate
sections, not measurements of proper shell thickness.}
 \label{fig:coasting}
\end{figure}

A test passenger initially comoving in the flat cavity remains inertial.
Pulling back the perturbations and their Cauchy surfaces carries the radial
and odd-parity energy estimates to the corresponding comoving slices.
The same pullback preserves the fixed-loop clock experiment of
Eq.~\eqref{eq:proper-sagnac}; a boost adds no asymmetry to that experiment.

\section{Gravitational constraints and stress conventions}
\label{sec:source}

With $G=1$, the ADM convention is
\begin{equation}
 ds^2=-\alpha^2dt^2+\gamma_{ij}(dx^i+\beta^i dt)(dx^j+\beta^j dt),
 \qquad n^a=\alpha^{-1}(\partial_t-\beta^i\partial_i)^a,
 \label{eq:adm}
\end{equation}
we use $K_{ij}=(-\partial_t\gamma_{ij}+D_i\beta_j+D_j\beta_i)/(2\alpha)$,
$E=T_{ab}n^an^b$, and $S_i=-\gamma_i{}^aT_{ab}n^b$. The constraints are
\begin{equation}
 {}^{(3)}R+K^2-K_{ij}K^{ij}=16\pi E,
 \qquad D_j(K^j{}_i-\delta^j_i K)=8\pi S_i.
 \label{eq:constraints}
\end{equation}
The spatial Einstein equations, material equations, and free-boundary
conditions complete this system.

The timelike material eigenvector makes the elastic stress Type I in every
observer frame. For a general source, causal eigenvectors and Jordan
structure must be identified at degeneracies~\cite{martinmoruno2018core}.

\section{Coupled radial dynamics}
\label{sec:radial-stability}

Radial motion changes the strain and gravitational field together. We
first eliminate the spherical metric constraints to obtain the oscillation
energy, then determine what its nonlinear counterpart controls.
Throughout this appendix $G$ is explicit.
For a displacement $\xi(t,r)$, the Eulerian variation $\delta$ compares
fields at fixed radius; the Lagrangian variation
$\Delta=\delta+\xi\partial_r$ follows the same material particle. Put
$w=\rho+p_r$, $q=p_t-p_r$, and define the material derivatives
\begin{equation}
 \mathcal A=n_r\partial_{n_r}p_r,\qquad
 \mathcal B=n_t\partial_{n_t}p_r,\qquad
 \mathcal D=n_t\partial_{n_t}p_t,
 \label{eq:radial-moduli}
\end{equation}
where $n_t$ varies both tangential densities together. The linearized Einstein
constraints, with no cavity mass perturbation, give
\begin{equation}
 \delta m=-4\pi G r^2w\xi,\qquad
 \Delta\Lambda=-\Phi'\xi,\qquad
 \Delta p_r=-\mathcal A(\xi'-\Phi'\xi)-\frac{\mathcal B}{r}\xi.
 \label{eq:radial-constraints}
\end{equation}
Both moving faces satisfy $\Delta p_r=0$. Holding their Eulerian pressure
perturbation at zero would give a different boundary-value problem.

Set $\zeta=r^2e^{-\Phi}\xi$ and
\begin{align}
 P&=\frac{\mathcal A e^{\Lambda+3\Phi}}{r^2},\qquad
 W=\frac{w e^{3\Lambda+\Phi}}{r^2},\qquad
 Y=\frac{2-\mathcal B/\mathcal A}{r},\nonumber\\
 Q&=-\frac{e^{\Lambda+3\Phi}}{r^2}\mathcal V,\qquad
 \mathcal V=\frac{2q+2\mathcal D-\mathcal B^2/\mathcal A}{r^2}
 +w\left(\frac{8\pi Gp_r}{F}-\Phi'^2-\frac{4\Phi'}r\right)
 -\frac{4q\Phi'}r.
 \label{eq:radial-coefficients}
\end{align}
For a mode $e^{i\sigma t}\zeta(r)$, the complete radial equations reduce to
\begin{equation}
 \mathfrak t=P(\zeta'-Y\zeta),\qquad
 \mathfrak t'=-Y\mathfrak t-(Q+\sigma^2W)\zeta,\qquad \mathfrak t(a)=\mathfrak t(b)=0.
 \label{eq:radial-system}
\end{equation}
This is the elastic radial pulsation system~\cite{karlovini2004}, here written
with two free surfaces. Appendix~\ref{app:radial} derives it with the metric
perturbations retained. Spherical symmetry leaves no radiative gravitational
mode; those perturbations are reconstructed from the constraints.

\begin{theorem}[Coupled radial stability]
\label{thm:radial-stability}
Let a spherical Einstein--elastic equilibrium have a regular vacuum cavity,
two free faces at $0<a<b$, and smooth material coefficients with
$F,w,\mathcal A>0$ on $[a,b]$. If
\begin{equation}
 \inf_{[a,b]} e^{2\Phi-2\Lambda}\frac{\mathcal V}{w}\ge\gamma>0,
 \label{eq:radial-gap}
\end{equation}
then every radial eigenfrequency obeys $\sigma^2\ge\gamma$. Smooth radial
perturbations with the free-traction conditions have a conserved coercive
quadratic energy.
\end{theorem}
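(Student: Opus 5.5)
The plan is to treat the mode problem \eqref{eq:radial-system} as a Sturm--Liouville eigenvalue problem with natural (traction) boundary conditions. We then build the energy from the associated quadratic form. Eliminate $\mathfrak t$ to get
\[
 -\bigl(P(\zeta'-Y\zeta)\bigr)'-YP(\zeta'-Y\zeta)-Q\zeta=\sigma^2W\zeta .
\]
Multiply by $\bar\zeta$ and integrate over $[a,b]$. The boundary term $-[\mathfrak t\bar\zeta]_a^b$ vanishes exactly because $\mathfrak t(a)=\mathfrak t(b)=0$. These are the Lagrangian conditions $\Delta p_r=0$, so no assumption on $\zeta$ at the faces is needed. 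After integrating by parts, the first and second terms combine into
\[
 \int P\bigl(\zeta'-Y\zeta\bigr)\overline{\bigl(\zeta'-Y\zeta\bigr)}\,dr\ge0 .
\]
One checks this using $\mathfrak t'\bar\zeta+Y\mathfrak t\bar\zeta=(\mathfrak t\bar\zeta)'-\mathfrak t(\bar\zeta'-Y\bar\zeta)$. This gives
\[
 \sigma^2\int W|\zeta|^2=\int P|\zeta'-Y\zeta|^2-\int Q|\zeta|^2 .
\]
Since $P,W$ are real and positive, $\sigma^2$ is real.

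The key step is the pointwise comparison $-Q\ge\gamma W$. By definition, $-Q/W=e^{2\Phi-2\Lambda}\mathcal V/w$, and \eqref{eq:radial-gap} bounds this ratio below by $\gamma$. Dropping the nonnegative $P$ term then yields $\sigma^2\ge\gamma$. Regularity at the endpoints matters here: the coefficients are smooth on the closed interval and $F>0$, $\mathcal A>0$. The operator is therefore a regular Sturm--Liouville operator, self-adjoint in $L^2(W\,dr)$ with a discrete real spectrum, so the Rayleigh bound covers every eigenfrequency.

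For the time-dependent statement, use $\zeta(t,r)$ satisfying $W\ddot\zeta=(\text{spatial operator})\zeta$ with $\mathfrak t=0$ at both faces. Define
\[
 E=\tfrac12\int\bigl(W\dot\zeta^2+P(\zeta'-Y\zeta)^2-Q\zeta^2\bigr)dr .
\]
Differentiating and integrating by parts leaves only $[\mathfrak t\dot\zeta]_a^b=0$, so $E$ is conserved. Coercivity follows from $-Q\ge\gamma W$ and $P\ge P_{\min}>0$. In fact $E$ controls $\|\dot\zeta\|^2_{L^2}+\|\zeta\|^2_{L^2}$, and, using $|\zeta'|^2\le2|\zeta'-Y\zeta|^2+2Y^2|\zeta|^2$ with $Y$ bounded, it also controls $\|\zeta'\|^2_{L^2}$. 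Thus $E$ is equivalent to the $H^1\times L^2$ norm.

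The main obstacle is verifying that \eqref{eq:radial-system}, with its coefficients in \eqref{eq:radial-coefficients}, really is the full coupled system, with the metric perturbations eliminated through \eqref{eq:radial-constraints} and the face conditions correctly becoming $\mathfrak t=0$. I would take that reduction from Appendix~\ref{app:radial} as given. The substantive content of the proof is then the symmetric integration-by-parts identity and the pointwise bound.
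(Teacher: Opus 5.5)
Your proposal is correct and follows the paper's argument. It uses the Rayleigh identity obtained by multiplying the mode equation by $\zeta$ and killing the boundary term with $\mathfrak t(a)=\mathfrak t(b)=0$, the pointwise bound $-Q/W=e^{2\Phi-2\Lambda}\mathcal V/w\ge\gamma$, and the energy $E_2$, which is conserved because its flux $[\mathfrak t\dot\zeta]_a^b$ vanishes and is coercive through bounded $Y$. The only cosmetic difference is that the paper substitutes $\zeta=e^{I}y$ with $I=\int_a^rY$ to obtain a Neumann Sturm--Liouville problem, whereas you work with $\zeta$ directly; there the first-order terms cancel, leaving $-(P\zeta')'+[(PY)'+PY^2-Q]\zeta$ with Robin conditions $\zeta'=Y\zeta$, so your regular Sturm--Liouville claim holds.
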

\begin{proof}
Put $\zeta=e^I y$, where $I(r)=\int_a^rY(s)\,ds$. Then
$-(Pe^{2I}y')'-Qe^{2I}y=\sigma^2We^{2I}y$, with $y'(a)=y'(b)=0$.
This is a regular Sturm--Liouville problem with positive leading coefficient
and weight, so its spectrum is real and discrete. Multiplication by $\zeta$ and integration gives
\begin{equation}
 \sigma^2\int_a^b W\zeta^2\,dr
 =\int_a^b\left[P(\zeta'-Y\zeta)^2-Q\zeta^2\right]dr
 \ge\gamma\int_a^bW\zeta^2\,dr.
 \label{eq:radial-rayleigh}
\end{equation}
The same integration for the time-dependent equation conserves
\begin{equation}
 E_2=\frac12\int_a^b
       \left[W\dot\zeta^2+P(\zeta'-Y\zeta)^2-Q\zeta^2\right]dr.
 \label{eq:radial-energy}
\end{equation}
Its boundary flux is $[\mathfrak t\dot\zeta]_a^b=0$. Positivity controls the
velocity and displacement; bounded $Y$ and positive $P$ also control one
spatial derivative.
\end{proof}

At zero coupling and relaxation,
$\mathcal V/w=12\kappa\mu_0/[m_0(\kappa+4\mu_0/3)r^2]>0$.
Continuity on the compact material interval therefore makes the analytic
branch radially stable for sufficiently small $G>0$.
For the numerical example, the sampled minimum of
$e^{2\Phi-2\Lambda}\mathcal V/w$ is positive, of order
$1.8\times10^{-3}L_0^{-2}$, illustrating the sufficient criterion.
Nonradial perturbations and rotating backgrounds require separate analyses.

\subsection{A nonlinear energy estimate}

In spherical comoving coordinates, let $V$ be the proper-time derivative of
areal radius and write the geometrized mass as $m=G\mathfrak m$. The exact
mass constraint defines the energy functional
\begin{equation}
 \mathfrak m_R=4\pi r^2r_R\,
 \rho\!\left(\frac{\sqrt{1+V^2-2G\mathfrak m/r}}{r_R},\frac Rr,\frac Rr\right),
 \quad \mathfrak m(A)=0,\qquad
 \mathcal E_G[r,V]=\mathfrak m(B).
 \label{eq:nonlinear-mass-functional}
\end{equation}
Here $\mathcal E_G=M/G$ is the physical ADM energy and $M$ the geometrized
ADM mass ($c=1$). The expression has a smooth $G\to0$ limit. It is conserved by
smooth spherical Einstein--elastic evolution with vacuum cavity and zero
outer traction.

\begin{proposition}[Conditional nonlinear radial control]
\label{prop:nonlinear-energy}
Fix the material annulus and law~\eqref{eq:elastic-law}. For an equilibrium
satisfying Theorem~\ref{thm:radial-stability}, there are
$c,C>0$ and a neighborhood $\mathcal U$ in $W^{1,\infty}\times L^\infty$
of $(r_*,0)$ such that
\begin{equation}
 c\|(r-r_*,V)\|_{\mathcal H}^2
 \le\mathcal E_G[r,V]-\mathcal E_G[r_*,0]
 \le C\|(r-r_*,V)\|_{\mathcal H}^2,
 \label{eq:nonlinear-coercivity}
\end{equation}
where $\|(u,v)\|_{\mathcal H}^2=\int_A^B(R^2u_R^2+2u^2+R^2v^2)\,dR$.
Every smooth spherical solution remaining in $\mathcal U$ therefore obeys
$\|(r(t)-r_*,V(t))\|_{\mathcal H}^2
\le(C/c)\|(r(0)-r_*,V(0))\|_{\mathcal H}^2$.
\end{proposition}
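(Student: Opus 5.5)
The plan is to expand $\mathcal E_G$ to second order about $(r_*,0)$. The quadratic part will be identified with the coupled radial energy $E_2$ of Theorem~\ref{thm:radial-stability}, and the cubic remainder will be absorbed using the $W^{1,\infty}\times L^\infty$ smallness that defines $\mathcal U$.

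\textbf{Step 1: first variation.} Write $r=r_*+u$ and $V=v$. Integrating the mass constraint~\eqref{eq:nonlinear-mass-functional} outward from $\mathfrak m(A)=0$ makes $\mathcal E_G$ a smooth functional of $(u,u_R,v)$ near equilibrium. This uses $F>0$, invertibility of the strain map, and $r_R$ bounded below. Its first variation at $(r_*,0)$ vanishes:
\begin{itemize}
\item $v$ enters only through $V^2$, so the $v$-derivative is zero;
\item the $u$-variation, after integrating by parts in $R$, reproduces the static equations~\eqref{eq:elastic-ode};
\item the boundary terms at $R=A,B$ are the radial tractions, which vanish by Eq.~\eqref{eq:elastic-bc}.
\end{itemize}
This is the statement that equilibria are critical points of the ADM energy at fixed material content, consistent with the first law of Appendix~\ref{app:first-law} at $\varpi=0$.

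\textbf{Step 2: second variation.} The Hessian splits into a kinetic and a potential part.
\begin{itemize}
\item The kinetic part, from the $V^2$ dependence through $\partial_{n_r}\rho$, is $\tfrac12\int W$-weighted $v^2$ with a positive weight, comparable to $\int R^2v^2\,dR$ since $w>0$.
\item For the potential part I would use Eq.~\eqref{eq:radial-constraints} ($\delta m=-4\pi G r^2 w\xi$, $\Delta\Lambda=-\Phi'\xi$) to identify the $u$-Hessian with $\int[P(\zeta'-Y\zeta)^2-Q\zeta^2]$ after the substitutions $\zeta=r^2e^{-\Phi}\xi$ and $\xi=\sqrt F\,u$. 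Equivalently, the Hessian is the quadratic form whose Euler--Lagrange system is~\eqref{eq:radial-system} with natural (free-traction) boundary conditions, so no boundary condition needs to be imposed on the test functions.
\end{itemize}
The gap~\eqref{eq:radial-gap} gives $-Q\ge\gamma W$. Together with $P>0$ and bounded $Y$, writing $\zeta'=(\zeta'-Y\zeta)+Y\zeta$ shows the form controls $\int(\zeta'^2+\zeta^2)$. Since all conversion factors between $\zeta$ and $u$ are smooth and positive on the compact interval $[A,B]$, the Hessian is two-sided comparable to $\|(u,v)\|_{\mathcal H}^2$. The weights $R^2$ and $2$ there are harmless equivalent choices.

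\textbf{Step 3: the remainder.} Taylor's theorem with integral remainder bounds the error in $\mathcal E_G-\mathcal E_G[r_*,0]-\tfrac12\mathrm{Hess}$ by
\[
\omega\bigl(\|u\|_{W^{1,\infty}}+\|v\|_{L^\infty}\bigr)\,\|(u,v)\|_{\mathcal H}^2 .
\]
Here $\omega$ is a modulus of continuity of the Hessian. It exists because the integrand is smooth in its pointwise arguments $(u,u_R,v)$, and the nonlocal dependence through $\mathfrak m$ is an integral operator bounded on $L^2$. This is where the $W^{1,\infty}$ topology is needed: the integrand's second derivatives must be evaluated at nearby pointwise states. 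I expect this to be the main obstacle, because the functional is nonlocal and nonlinear simultaneously. I would handle it by treating $\mathfrak m$ as the solution of a linear-in-$\mathfrak m$ ODE with smooth coefficients, so that its first and second variations are Volterra integrals bounded by $L^2$ norms of $(u,u_R,v)$. Shrinking $\mathcal U$ until $\omega\le\tfrac12$ of the Hessian's lower constant then yields~\eqref{eq:nonlinear-coercivity}.

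\textbf{Step 4: the trajectory bound.} The final inequality follows directly from conservation of $\mathcal E_G$ along smooth solutions that stay in $\mathcal U$:
\[
c\|(r(t)-r_*,V(t))\|_{\mathcal H}^2\le\mathcal E_G(t)-\mathcal E_G[r_*,0]=\mathcal E_G(0)-\mathcal E_G[r_*,0]\le C\|(r(0)-r_*,V(0))\|_{\mathcal H}^2 .
\]
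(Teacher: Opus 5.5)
Your proposal is correct and follows essentially the paper's route. The paper uses stationarity with free traction as natural boundary conditions, identifies the Hessian with the radial Rayleigh form plus the kinetic term $4\pi\int_a^b Wr^4v^2\,dr$, obtains coercivity from the gap~\eqref{eq:radial-gap}, gets continuity of the Hessian in quadratic-form norm from the variational equations of the mass ODE (Gronwall and Cauchy--Schwarz), and concludes with Taylor's integral remainder and conservation of $\mathcal E_G$.

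Two small corrections. First, $\xi$ is the areal-radius displacement at fixed label, so $\xi=u$ and $\zeta=r^2e^{-\Phi}u$, with no factor $\sqrt F$. Second, the mass ODE itself is nonlinear in $\mathfrak m$; only its first and second variational equations are linear, and that is all your Volterra/Gronwall argument needs.
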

\begin{proof}
The mass functional is stationary at an equilibrium with free faces.
Its Hessian, for $\zeta=r^2e^{-\Phi}u$, is
\begin{equation}
 D^2\mathcal E_G[(u,v),(u,v)]
 =4\pi\int_a^b\left[P(\zeta'-Y\zeta)^2-Q\zeta^2+Wr^4v^2\right]dr.
 \label{eq:mass-hessian}
\end{equation}
The derivation is given in Appendix~\ref{app:radial}. The positive radial gap
makes this form equivalent to $\|(u,v)\|_{\mathcal H}^2$.
Smooth coefficients in the mass ODE, with $r,r_R$ and
$1+V^2-2G\mathfrak m/r$ bounded away from zero, give continuity of its Hessian
in this quadratic-form norm on a sufficiently small $\mathcal U$.
Taylor's formula gives Eq.~\eqref{eq:nonlinear-coercivity}; conservation gives
the evolution estimate.
\end{proof}

The neighborhood condition is essential. The energy controls one derivative
of the displacement in $L^2$, but not the $L^\infty$ strain or higher
derivatives needed for global smooth evolution. This estimate therefore does
not exclude shock formation or loss of regularity. General local
Einstein--elastic evolution requires compatible initial data at both
interfaces, including the higher-order gravitational and traction conditions
of Ref.~\cite{andersson2014}; local existence is not a stability theorem.

\begin{samepage}
\begin{proposition}[Compatible nontrivial radial data]
\label{prop:compatible-data}
Let a smooth spherical equilibrium have $r_R,F,w,\partial_{n_r}p_r>0$
up to both free faces. Every nonzero smooth radial velocity profile $v$
supported away from the faces generates a smooth family of spherical
Einstein--elastic initial data with
\begin{equation}
 V_\epsilon=\epsilon v,\qquad r_\epsilon=r_*+O(\epsilon^2),\qquad
 M_\epsilon=M_*+2\pi G\epsilon^2
     \int_a^b Wr^4v^2\,dr+o(\epsilon^2).
 \label{eq:compatible-data-mass}
\end{equation}
The data have the same material content, satisfy both Einstein constraints,
and agree with local static solutions near both interfaces. Their physical
traction and gravitational matching conditions are therefore compatible to
every order. Open constitutive and causal inequalities persist for small
$|\epsilon|$.
\end{proposition}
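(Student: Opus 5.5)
Spherical initial data on a comoving slice are the pair $(r(R),V(R))$ with the material labels fixed. Because the shift vanishes in comoving spherical gauge, the momentum constraint reduces to a relation between the extrinsic curvature and the radial flux. The Hamiltonian constraint is the mass equation \eqref{eq:nonlinear-mass-functional}, integrated outward from $\mathfrak m(A)=0$. I will therefore prescribe $V_\epsilon=\epsilon v$ and solve for $r_\epsilon$, keeping $r_\epsilon$ and the equilibrium stress unchanged near both faces. Since $v$ vanishes on neighborhoods $[A,A+\delta]$ and $[B-\delta,B]$, the data there coincide with a static solution in which $V=0$. The inner piece is exactly the equilibrium: $\mathfrak m=0$ at $A$, and the same ODE holds. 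The outer piece is a static shell with the perturbed enclosed mass, which is also a local static solution. All traction and matching compatibility conditions of every order hold automatically, because locally the data equal a static solution, whose time derivatives vanish to all orders.

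The construction is as follows. In the support of $v$, the only constraint coupling is $\mathfrak m_R=4\pi r^2r_R\rho(\sqrt{1+V^2-2G\mathfrak m/r}/r_R,R/r,R/r)$. The field $r$ enters because the outer region must again be an equilibrium with zero traction at $B$. Choose $r_\epsilon=r_*+\epsilon^2 s$, where $s$ is determined by a shooting argument like that in Theorem~\ref{thm:elastic-existence}. Integrate the static equilibrium ODE \eqref{eq:elastic-ode} from $A$ up to $A+\delta$, keeping $r=r_*$ there. Across the support of $v$, prescribe $r_\epsilon$ arbitrarily, for instance $r_\epsilon=r_*$, and compute $\mathfrak m_\epsilon$ from the mass ODE. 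On $[B-\delta,B]$, solve the static equations \eqref{eq:elastic-ode} with the new initial data for $(r,r_R,\mathfrak m)$. This requires $\partial_{n_r}p_r>0$ to express $\nu_R$. The remaining outer traction residual $\mathcal R(\text{parameter},\epsilon)=0$ must then be solved. I would use a one-parameter correction to $r$ inside the support region, or equivalently adjust the areal radius at $B-\delta$. The implicit function theorem applies if the shooting derivative is nonzero.

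That nondegeneracy is the main obstacle. The derivative equals the traction at $B$ of the linearized static solution on $[B-\delta,B]$, and a Sturm--Liouville argument shows it is nonzero. The linearized static operator on that subinterval is the $\sigma=0$ version of \eqref{eq:radial-system}, namely $\mathfrak t=P(\zeta'-Y\zeta)$, $\mathfrak t'=-Y\mathfrak t-Q\zeta$. The initial data for the solution I need are $\zeta\neq0$ and $\mathfrak t=0$. If I cannot control the sign of $Q$, I will instead choose $\delta$ small. Then the solution starting with $\zeta=1$, $\mathfrak t=0$ at $B-\delta$ has $\mathfrak t(B)=O(\delta)$ but not generically zero. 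So as an alternative I would exploit the freedom to vary $r_R$ at $B-\delta$, which is a two-parameter family, and pick the direction with $\mathfrak t\neq0$ at $B-\delta$. Then $\mathfrak t(B)\neq0$ for small $\delta$ by continuity. Smoothness in $\epsilon$ follows because $V$ enters only through $V^2$, so the residual is smooth in $\epsilon^2$ and its correction is $O(\epsilon^2)$.

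For the mass expansion, use $\mathcal E_G=M/G$ and Proposition~\ref{prop:nonlinear-energy}. The functional is stationary at the equilibrium, and $r_\epsilon-r_*=O(\epsilon^2)$, so the Hessian \eqref{eq:mass-hessian} gives $M_\epsilon-M_*=\tfrac12 G\cdot4\pi\epsilon^2\int Wr^4v^2\,dr+o(\epsilon^2)$. The $u$-terms contribute at most $O(\epsilon^3)$ from cross terms and $O(\epsilon^4)$ from the quadratic part. The material content is fixed by construction, because the labels $A\le R\le B$ are unchanged. Finally, strict DEC, $F>0$, and strict characteristic inequalities are open conditions on a compact interval, so they persist for small $|\epsilon|$.
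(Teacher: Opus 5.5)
Your construction has a genuine gap: it is not smooth at the outer junction $R=B-\delta$. To the left of that point the embedding is prescribed ($r_*$ plus your correction), while the mass comes from the constraint, $\mathfrak m_\epsilon=m_*+O(\epsilon^2)$, since the $V^2$ term raises it. This pair is not a static solution, so its static force residual $\mathcal F=p_r'+w\Phi'-2q/r$ is in general nonzero there, of order $\epsilon^2$. To the right you integrate \eqref{eq:elastic-ode}, so $\mathcal F=0$. You match $r$, $r_R$ and $\mathfrak m$, hence also $n_r=\Gamma/r_R$, $\eta_R$ and $\Phi'$. Every term of $\mathcal F$ except $\partial_{n_r}p_r\,\partial_R n_r$ is therefore continuous across the junction. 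Consequently $\partial_R n_r$, equivalently $r_{RR}$, jumps by $O(\epsilon^2)$, and higher derivatives also fail to match. Your single shooting parameter is used up by the outer traction and cannot remove this jump. Moving the junction only moves the kink, because any point junction between a prescribed region and a forward-integrated static region has the same defect. The data are $C^1$ but not smooth. They therefore do not prove the statement and do not lie in the $H^s$ class ($s\ge4$) used in Corollary~\ref{cor:radial-short-time}.

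The missing idea is a parametrization that never integrates the static force balance forward through the data. The paper instead integrates the static equations backward from $r(B)=b$, $m(B)=\mathsf M$, $p_r(B)=0$. Using $\partial_{n_r}p_r>0$, this gives a smooth family of traction-free collars $r^{\mathsf M}_{\rm out}$. These are spliced in by a smooth cutoff, $r_{\mathsf M}=r_*+\chi(r^{\mathsf M}_{\rm out}-r_*)$, with $\operatorname{supp}\chi$ in the collar and disjoint from $\operatorname{supp}v$. Only the Hamiltonian constraint is imposed, by integrating the mass ODE from $\mathfrak m(A)=0$. The embedding is smooth by construction, and the only condition left is the scalar equation $H(\mathsf M,\epsilon)=G\mathcal E_G[r_{\mathsf M},\epsilon v]-\mathsf M=0$.

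Its nondegeneracy comes for free. At equilibrium $H_{\mathsf M}=-1$, because the first variation \eqref{eq:mass-first-variation} vanishes for every displacement, including motion of the endpoints: the traction conditions are natural boundary conditions. Where $\chi=1$, the constraint mass and the collar mass solve the same first-order ODE with the same terminal value, so the data are exactly static there. Compatibility to all orders follows, and $\mathsf M'(0)=0$ together with the kinetic Hessian gives \eqref{eq:compatible-data-mass}.

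Three smaller points. Even in your parametrization, nondegeneracy was never a real obstacle. The map from junction data $(\zeta,\mathfrak t)(B-\delta)$ to $\mathfrak t(B)$ is a nonzero linear functional because the fundamental matrix is invertible. By contrast, your remark that $\zeta=1$, $\mathfrak t=0$ gives $\mathfrak t(B)$ ``not generically zero'' proves nothing.

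For the mass expansion you need only three facts from Appendix~\ref{app:radial}: stationarity, evenness in $V$, and the kinetic Hessian. Proposition~\ref{prop:nonlinear-energy} itself assumes the radial gap, which is not a hypothesis here.

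Finally, state the extrinsic curvature explicitly: $K^R{}_R=-V_R/r_R$ and $K^\theta{}_\theta=K^\phi{}_\phi=-V/r$ on $\gamma=r_R^2\Gamma^{-2}dR^2+r^2d\Omega^2$. With this choice the momentum constraint holds identically, which is stronger than saying it merely ``reduces to a relation''.
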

\end{samepage}
\begin{proof}
Appendix~\ref{app:radial} constructs the data by matching a static outer
collar through a scalar mass equation. Its derivative in the unknown outer
mass is $-1$ at equilibrium. The implicit function theorem and the mass
Hessian give~\eqref{eq:compatible-data-mass}. The explicitly reconstructed
three-metric and extrinsic curvature satisfy both constraints.
\end{proof}
These data supply examples for the following estimate. Static collars are
needed only to construct compatible initial data; the evolution may reflect
from either free face.

\begin{corollary}[Radial control through boundary reflections]
\label{cor:radial-short-time}
Fix a sufficiently weakly gravitating equilibrium, an integer $s\ge4$,
and the time normalization $\alpha(B)=1$. Let $z=(r,n_r,V,m)$.
For every finite $T>0$, there is $\epsilon_T>0$ such that smooth spherical
initial data satisfying both constraints and all interface compatibility
conditions, with $\|z(0)-z_*\|_{H^s}\le\epsilon<\epsilon_T$, have a unique
radial evolution on $[0,T]$, smooth up to each face from either side. Throughout this interval,
\begin{equation}
 \|z(t)-z_*\|_{H^s}\le C_0\epsilon e^{Ct}.
 \label{eq:radial-short-time}
\end{equation}
The constants depend on the background, $s$, and a fixed admissible
neighborhood, not on $\epsilon$. The data of
Proposition~\ref{prop:compatible-data} belong to this class.
\end{corollary}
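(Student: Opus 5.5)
The argument is a standard local-in-time well-posedness result for a quasilinear hyperbolic free-boundary problem in one space dimension, combined with continuous dependence on data over a fixed finite interval. I would work in comoving material coordinates $R\in[A,B]$, so both free faces sit at fixed labels $R=A,B$. In these coordinates the unknowns $z=(r,n_r,V,m)$ obey a first-order system. Its principal part is the radial elastic wave system for $(n_r,V)$, with longitudinal speed $\sqrt{\partial_{n_r}p_r/w}$, which is real and strictly subluminal by Theorem~\ref{thm:elastic-existence} and the bounds of Appendix~\ref{app:material-bounds}. It is coupled to transport/ODE equations in $R$ for $r$ and the constraint variables $m$ and $\Phi$, integrated outward from $m(A)=0$ with the normalization $\alpha(B)=1$. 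Spherical symmetry gives no radiative gravitational degree of freedom, so the metric is reconstructed from the constraints at each time; this was already used in Eq.~\eqref{eq:radial-system}. The reduced problem is therefore a $2\times2$ strictly hyperbolic system on a fixed interval, with nonlocal lower-order terms that are smooth in $z$. Its boundary conditions at each end are the traction conditions $p_r=0$. Each end has exactly one incoming characteristic, and $p_r=0$ fixes it because $\partial_{n_r}p_r>0$. This is a Kreiss--Lopatinskii (uniform) boundary condition, and it is maximally dissipative for the symmetrizer built from the acoustic matrices.

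First I would symmetrize the system using the energy density of Theorem~\ref{thm:radial-stability} or its nonlinear analogue, the mass functional of Proposition~\ref{prop:nonlinear-energy}. Then I would derive $H^s$ energy estimates by commuting with $\partial_t$, not with $\partial_R$, since time derivatives preserve the boundary conditions. Spatial derivatives are recovered from the equations, using the invertibility of the principal symbol in $R$ that follows from $\partial_{n_r}p_r>0$. The compatibility conditions ensure that $\partial_t^jz(0)$ satisfies the traction conditions up to order $s-1$. The requirement $s\ge4$ gives $H^{s-1}\subset C^2$ in one dimension, with room to control the coefficients. The nonlocal terms, namely $m$ and $\Phi$ as integrals of $z$ and the lapse entering as a factor, are bounded in $H^s$ by Moser estimates. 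This yields
\[
 \frac{d}{dt}\mathcal E_s\le C(\|z-z_*\|_{W^{1,\infty}})\,\mathcal E_s+\hbox{(terms vanishing at $z=z_*$)}.
\]
Writing the difference from the equilibrium, which is itself a solution, gives a linear-in-perturbation Gronwall bound.

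Next I would build the solution by the standard iteration for linearized boundary problems. Convergence in a low norm and boundedness in a high norm give local existence and uniqueness. A continuation argument then extends to $[0,T]$: as long as $\|z-z_*\|_{H^s}$ stays below a fixed admissible radius $\delta$, the equations remain hyperbolic, DEC holds, and the characteristic gaps persist. Gronwall then gives Eq.~\eqref{eq:radial-short-time}. Choosing $\epsilon_T=\delta e^{-CT}/C_0$ closes the bootstrap. The data of Proposition~\ref{prop:compatible-data} are smooth, satisfy the constraints, and are compatible to all orders, so they lie in this class for small $\epsilon$. Smoothness up to each face from either side follows because the exterior and the cavity are determined by the constraints as Schwarzschild and flat regions respectively.

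The main obstacle I expect is the boundary part of the higher-order estimate: showing that the time-differentiated traction condition stays uniformly dissipative, and that recovering normal derivatives near the moving faces loses no regularity once the metric coupling enters the boundary terms. I would handle it by writing the boundary flux of each $\partial_t^j$ energy as $\mathfrak t_j\,\partial_t^jV$, as in Theorem~\ref{thm:radial-stability}. Here $\mathfrak t_j$ is $\partial_t^j p_r$ up to lower-order commutators. The flux vanishes at leading order, and the commutator remainders are controlled by trace inequalities in $H^{s}$. If that fails, I would fall back on citing the general Einstein--elastic local theory of Ref.~\cite{andersson2014}, restricted to spherical symmetry, for existence. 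Only the uniform-in-$\epsilon$ Gronwall constant would then need to be proved directly.
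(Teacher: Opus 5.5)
Your overall architecture matches the paper's: energies built only from time derivatives, spatial derivatives recovered from the equations, Gronwall, and continuation with $\epsilon_T$ chosen so that $C_0\epsilon_Te^{CT}$ stays in the admissible neighborhood. The gap is in the step you flag as the main obstacle, and the remedy you propose does not close it.

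With $n_r$ kept as the acoustic unknown, the face flux of the $j$th energy for the $(n_r,V)$ symmetrizer is proportional to $(\partial_{n_r}p_r\,\partial_t^jn_r)\,\partial_t^jV$. The differentiated traction condition gives only $\partial_{n_r}p_r\,\partial_t^jn_r=-\partial_\eta p_r\,\partial_t^j\eta+(\hbox{products of lower time derivatives})$, which is generally nonzero. At $j=s$ this remainder multiplies the face trace of $\partial_t^sV$. That quantity is only $L^2$ in $R$, so its trace is bounded neither by $E_s$ nor by $\|z\|_{H^s}$; no trace inequality applies at that regularity.

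The unperturbed condition has exactly zero flux, so this symmetrizer also gives no boundary-trace control with which to absorb the term. Maximal dissipativity handles homogeneous conditions, not the inhomogeneous top-order boundary terms your differentiation produces. You would need a strictly dissipative (Kreiss-type) symmetrizer or a separate hidden-regularity trace estimate, and neither is in the plan. The fallback to Ref.~\cite{andersson2014} does not help, because the uniform Gronwall constant needs this same estimate.

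The paper avoids the problem by changing unknowns. Since $\partial_{n_r}p_r>0$, write $n_r=N(p,\eta)$ and evolve $(p,V)$ as in Eq.~\eqref{eq:pressure-evolution}, with symmetrizer $\operatorname{diag}(a_p^{-1},b_p^{-1})$. The faces then carry the linear homogeneous condition $p=0$, so every $P_j=\partial_t^j(p-p_*)$ vanishes there. The flux $\alpha P_jV_j$ is therefore identically zero at every order. The $\eta$-dependence moves into the interior source $f_p$, and all commutators become interior terms handled by one-dimensional product estimates. Your remark that $\mathfrak t_j$ is $\partial_t^jp_r$ ``up to lower order'' is exactly what must be made exact.

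A related top-order trace hides in the lapse, which you treat as a Moser-bounded integral of $z$. In comoving coordinates $(\log\alpha)_R$ contains $-(p_r)_R/w$, so $\alpha$ is not an integral of the state. Integrating from $B$ with $\alpha(B)=1$ produces a face term. Its $s$th time derivative enters the top commutator $\partial_t^s\alpha\,V_R$, and it is a top-order trace unless the traction condition is used there. The paper writes $(\log\alpha+h)_R$ as a function of the state alone, Eq.~\eqref{eq:radial-lapse-no-loss}. It then uses $p(B)=0$ to make $h(B)$ a smooth function of $r(B)$ only. Since $r_t(B)=V(B)$, $\partial_t^j\alpha$ then needs only the trace of $\partial_t^{j-1}V$.

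Finally, the paper builds iterates that need not satisfy the constraints, using the auxiliary energy $\widetilde E_s$ with full $H^s$ norms of $r$ and $m$. Only after the limit does it propagate the constraints, through the homogeneous equations~\eqref{eq:radial-constraint-errors}. If you instead reconstruct $r$ and $m$ from the constraints at each time, you still owe the consistency of $r_R=\Gamma/n_r$ with $r_t=\alpha V$, which is that same propagation statement. With these ingredients in place, the rest of your argument goes through, including the matching to the flat cavity and the Schwarzschild exterior through $p=0$.
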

\begin{proof}
Positive $\partial_{n_r}p_r$ allows radial pressure to replace radial density.
In material coordinates both boundary conditions become $p_r=0$.
Time differentiation preserves them, so every differentiated acoustic
energy has zero boundary flux. The field equations and propagated constraints
recover spatial derivatives without loss. Appendix~\ref{app:radial} gives
the resulting estimate $|\dot E_s|\le CE_s$, with
$E_s\asymp\|z-z_*\|_{H^s}^2$. Local construction, Gronwall, and continuation
then prove the statement: choose $\epsilon_T$ so that
$C_0\epsilon_T e^{CT}$ stays inside the fixed neighborhood.
\end{proof}
This controls arbitrarily many reflections on a prescribed finite interval.
The allowed amplitude depends on $T$; no fixed nonzero neighborhood is proved
to remain smooth for all time. In particular, the estimate gives neither
shock exclusion nor a lifetime bound for a specified disturbance.

\subsection{Nonlinear compression and wave breaking}
\label{sec:planar-steepening}

In a nonlinear elastic wave, the propagation speed depends on the local
state. Neighboring characteristics can approach one another and steepen a
compressive profile while its amplitude stays small. A planar limit of
the same material law makes this mechanism explicit before we restore
self-gravity and the two free faces.
At $G=0$, take material labels $(X(t,x),y,z)$, so that
$n_1=n=\sqrt{X_x^2-X_t^2}$, $n_2=n_3=1$, and $v=-X_t/X_x$.
The transverse equations vanish, and the longitudinal law is
\begin{equation}
 \rho_\parallel(n)=m_0n+\frac\kappa2(n-1)^2
               +\frac{\mu_0}{6}(n^3-2n+n^{-1}),\quad
 p_\parallel=n\rho_\parallel'-\rho_\parallel,\quad
 c^2(n)=\frac{n\rho_\parallel''}{\rho_\parallel'}.
 \label{eq:planar-law}
\end{equation}
Particle and momentum conservation diagonalize as
\begin{equation}
 R_\pm=\operatorname{artanh}v\pm\int_1^n\frac{c(s)}s\,ds,\qquad
 (\partial_t+\lambda_\pm\partial_x)R_\pm=0,
 \quad \lambda_\pm=\frac{v\pm c}{1\pm vc}.
 \label{eq:planar-invariants}
\end{equation}
These combinations are constant along their respective acoustic
characteristics and are called Riemann invariants~\cite{abbrescia2023shocks}.
For $R_-=0$, $v(n)=\tanh\int_1^n c(s)/s\,ds$ and
$\lambda_+'(1)=3c_0(1-c_0^2)/2=1/\sqrt3$ for the selected moduli.
A smooth density profile $N(\xi)$ close to one, equal to one outside a
finite interval, therefore has the exact solution
\begin{equation}
 x=\xi+t\lambda_+(N(\xi)),\qquad n=N(\xi),\qquad
 n_x=\frac{N'(\xi)}{1+t\lambda_+'(N(\xi))N'(\xi)}.
 \label{eq:planar-characteristics}
\end{equation}
If $a_*:=\min_\xi[\lambda_+'(N(\xi))N'(\xi)]<0$, the map remains
invertible for $t<-1/a_*$ and its first Jacobian zero produces an unbounded
gradient. The state values are transported unchanged. Particle conservation
reconstructs $X$ from $X_x=n/\sqrt{1-v^2}$ and $X_t=-vX_x$ until that time.
The following result retains self-gravity and the finite shell's two faces.

\begin{theorem}[Wave breaking inside a finite shell]
\label{thm:shell-steepening}
Fix $G>0$ and a smooth horizon-free spherical equilibrium of the selected elastic
law whose principal densities lie strictly inside
$[0.99,1.01]^3$. For every state-amplitude tolerance $\delta>0$ and time
$\tau>0$, there are smooth radial initial data with the same material
annulus, satisfying both Einstein constraints and all interface
compatibility conditions, whose classical evolution has a finite maximal
time $0<T_*<\tau$. The data are within $\delta$ of equilibrium in velocity,
principal densities, and metric coefficients in the uniform norm, using
material labels to compare the moving domains. Up to $T_*$ the
state remains in a fixed admissible neighborhood, but
\begin{equation}
 \limsup_{t\uparrow T_*}
 \|\partial_r(n_r,v)(t)\|_{L^\infty}=\infty.
 \label{eq:shell-gradient-breakdown}
\end{equation}
The singularity occurs in the material interior before a signal from the
pulse reaches either face. The exterior mass is allowed to change with
the data. No smallness of their higher Sobolev norms is asserted.
\end{theorem}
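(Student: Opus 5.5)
The plan is to localize the planar simple-wave mechanism of Section~\ref{sec:planar-steepening} inside the shell. By finite propagation speed, the gravitational and face effects cannot reach the pulse before it breaks, and at that point their influence is only a bounded lower-order source. The plan has three steps.

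\emph{Step 1: characteristic form.} Write the spherical Einstein--elastic evolution in comoving material coordinates. Positive $\partial_{n_r}p_r$ lets us use $(n_r,v)$ as unknowns, as in Corollary~\ref{cor:radial-short-time}. In these variables the radial system is a $2\times2$ strictly hyperbolic system with speeds $\lambda_\pm$ measured in proper radial distance per proper time. We then define Riemann-type variables $R_\pm=\operatorname{artanh}v\pm\int_1^{n_r}c(s,n_t)/s\,ds$ and get
\[
(\partial_t+\alpha\lambda_\pm\partial_\ell)R_\pm=S_\pm ,
\]
where the sources $S_\pm$ collect the spherical, tangential-strain and gravitational terms. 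These terms involve $\Phi'$, $m/r$, $1/r$ and the dependence of $c$ and $p_r$ on $n_t=R/r$, and none of them contains derivatives of $(n_r,v)$. The metric quantities $m,\Phi,\Lambda$ are recovered from the constraints as spatial integrals, so they stay uniformly $C^1$ as long as the state stays in the admissible box, whatever the gradients do. The tangential density $n_t=R/r$ is an integral of $1/n_r$ and is likewise Lipschitz with a bounded constant. Consequently, on a fixed neighborhood of the state box we have $|S_\pm|\le C_1$, and the gradient of $S_\pm$ is bounded by $C_1(1+|\partial R_\pm|)$, with $C_1$ independent of the pulse.

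\emph{Step 2: data.} Fix an interior material radius $R_0$ at distance $d>0$ from both faces. Take the equilibrium profile and add, on an interval of width $h$ around $R_0$, a perturbation of amplitude $\eta\ll\delta$. The perturbation is chosen so that $R_-$ keeps its equilibrium value and $R_+$ has a compressive slope of size $\eta/h$ with $\lambda_+'\,\partial R_+<0$; this is possible because $\partial_n\lambda_+\approx1/\sqrt3\neq0$ near relaxation. To make the data satisfy both constraints and all compatibility conditions, we invoke the collar construction of Proposition~\ref{prop:compatible-data}. Since that proposition is stated for velocity profiles, we apply the same implicit-function construction to a strain-plus-velocity perturbation. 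The data then agree with local static solutions near both faces, and the exterior mass adjusts.

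\emph{Step 3: Riccati blow-up.} Differentiate the $R_+$ equation along $+$ characteristics. The gradient $y=\partial_\ell R_+$ satisfies a Riccati inequality
\[
\dot y\le -c_*y^2+C_2|y|+C_3 ,
\]
where $c_*>0$ comes from genuine nonlinearity. The cross term from $\partial R_-$ is controlled because $R_-$ stays $O(\eta)$ in $C^1$ only up to a bounded multiple, by its own linear gradient equation. Starting from $y(0)\sim-\eta/h$, comparison gives blow-up at a time $T_*\lesssim h/(c_*\eta)$. Choosing $h$ small makes $T_*<\tau$ and also smaller than $d/\max\lambda$. During $[0,T_*)$ the state moves by at most $O(\eta+C_1T_*)<\delta$, so it remains in the fixed admissible box. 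A continuation criterion shows that a $C^1$ solution extends while the gradient stays bounded. Hence $T_*$ is the maximal time and \eqref{eq:shell-gradient-breakdown} holds, and the domain of dependence shows the singularity is interior, with the faces still static.

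The main obstacle is Step~3's control of the coupled gradient system. The difficulty is showing that the $R_-$ gradient and the nonlocal metric terms stay bounded up to blow-up, uniformly in $h$. I would handle this with a bootstrap on $\sup|\partial R_-|$ and on $\int|\partial R_+|\,d\ell$ (total variation, bounded by $O(\eta)$ via compressibility of characteristics), in the spirit of John's blow-up argument.
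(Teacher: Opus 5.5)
Your overall architecture matches the paper's: characteristic variables with gradient-free sources, a Riccati inequality along outgoing characteristics from genuine nonlinearity, a short compressive pulse made compatible by a static outer collar, and finite propagation speed to keep the faces static. But the Riccati step does not go through as you set it up. In the comoving coordinates you chose, the claim that the metric stays uniformly $C^1$ ``whatever the gradients do'' is false for the lapse. The material worldlines are accelerated by the pressure gradient, $(\log\alpha)_R=-(p_r)_R/w+2qr_R/(wr)$, so $\partial_R\alpha$ blows up together with $\partial_R n_r$. When you differentiate $(\partial_t+\alpha\lambda_+\partial_\ell)R_+=S_+$, the term $y\,\lambda_+\partial_\ell\alpha$ is therefore quadratic in the gradient, not lower order. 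It is exactly this term that supplies the $-c^2$ in the correct steepening coefficient. With $\nu=n_r$ and coordinate speed $\alpha\nu c$, one finds $\partial_R(\alpha\nu c)=\alpha c\,\mathfrak g\,\partial_R\nu$ plus terms bounded by the state, where $\mathfrak g=1-c^2+\nu c_\nu/c$. A $c_*$ computed from $\lambda_+$ alone, with the lapse treated as a bounded $C^1$ coefficient, is not the right constant, so your inequality rests on a false bound. The paper avoids this by using polar areal slicing, in which $m_r$, $\eta_r$, $a_t/(\alpha a)$ and $(\log\alpha)_r=L$ are all functions of the state alone. It differentiates only in $r$ and never takes a time derivative of the nonlocal lapse. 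It then verifies $\partial_{U_+}\ell_+=(1-v^2)\mathfrak g/[2a(1+vc)^2]>0$ with $\mathfrak g>22/25$ on the strain box.

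The second gap is the one you flag yourself, and your interim justification for it is wrong. For $z=\partial R_-$, differentiating the $R_-$ equation gives, along minus characteristics, an equation that is quadratic in $z$. It is also forced by terms proportional to $y$, coming from $\partial_{R_+}\lambda_-$ and $\partial_{R_+}S_-$, so it is not a linear gradient equation. Moreover, your $C_2$ contains $(\partial_{R_-}\lambda_+)z$, so the Riccati inequality is only as good as a bound on $z$ that is uniform in the pulse width. A John-type transversal estimate of $\int|y|\,dt$ could in principle supply this, but you have not carried it out. The paper's key device makes it unnecessary. Since $d=\ell_+-\ell_->0$, take state functions $h,k$ with $h_{U_-}=(\ell_+)_{U_-}/d$ and $k_{U_-}=-e^h(f_+)_{U_-}/d$, and set $y=e^h\partial_rU_++k$. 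The identity $(\partial_t+\lambda_+\partial_r)U_-=\alpha f_-+\alpha d\,\partial_rU_-$ then cancels both $\partial_rU_-$ terms exactly. What remains is $(\partial_t+\lambda_+\partial_r)y=-\alpha e^{-h}(\ell_+)_{U_+}(y-k)^2+\alpha B_1(y-k)+\alpha C_1$, whose coefficients depend only on state values. Because $\partial_rU_-$ never needs to be bounded, the data can be a pure velocity pulse $v=\varepsilon f((R-R_0)/\ell)$ at fixed embedding $r=r_*$, even though $\partial_rU_-$ is then also of order $\varepsilon/\ell$. For that pulse $\partial_rU_+=\gamma_v^2(1-cv)v_R/r_R+O(1)$, and the unmodified collar matching gives $\mathsf M-M_*=O(\varepsilon^2\ell)$. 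You can therefore drop the requirement that $R_-$ keep its equilibrium values, and with it the strain-plus-velocity extension of the collar construction. That extension would otherwise need extra matching, because the strain change shifts the embedding beyond the pulse.
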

\begin{proof}
Appendix~\ref{app:shell-steepening} derives the exact radial characteristic
system and removes the interaction term from its differentiated outgoing
equation. The resulting weighted gradient satisfies
\begin{equation}
 \frac{dy}{dt}=-a_2y^2+a_1y+a_0,\qquad
 a_2\ge a_*>0,\quad |a_1|+|a_0|\le C,
 \label{eq:shell-riccati}
\end{equation}
along an outgoing acoustic characteristic. The coefficient bounds depend
on the state neighborhood, not on its gradients. Choose an interior
velocity pulse of amplitude $\varepsilon$ and label width $\ell$.
Solving the polar-slice mass constraint and adjusting a static outer
collar makes the data compatible; its adjustment is $O(\varepsilon^2\ell)$.
At a falling edge, $y(0)\le-c\varepsilon/\ell+O(1)$.

For $K>1+4C/a_*$, $y(0)\le-K$ implies
$y'\le-a_*y^2/2$ and preserves $y\le-K$. Hence
$-1/y(t)\le1/K-a_*t/2$, which precludes smooth continuation to
$2/(a_*K)$. The undifferentiated characteristic equations have bounded
sources, so their states remain in the chosen neighborhood for a time
independent of $\ell$. Take $\varepsilon$ small and then $\ell$ small
enough that this upper bound precedes $\tau$ and the travel time to either
face. The local continuation criterion then forces
Eq.~\eqref{eq:shell-gradient-breakdown}. The static collars remain regular
on this interval.
\end{proof}

The result applies in particular whenever the spherical example is
enclosed in this strain neighborhood. It supplies no numerical lifetime
for a specified disturbance. The ideal elastic action has no microscopic
length scale, viscosity, or plastic response: arbitrarily short pulses
test that continuum model. A physical prediction after wave breaking
requires additional material physics.

\section{Coupled axial energy}
\label{sec:axial-energy}

Odd-parity perturbations twist the material and couple to gravitational
radiation. Their energy must include the field in the cavity and exterior
as well as the elastic motion.
The axial equations of relativistic elasticity~\cite{karlovini2007} follow
from a quadratic action that makes this energy explicit.
First use an axisymmetric representative, and put
$\mathfrak f=r^2\sin^2\theta$. Write the perturbed metric and material
azimuth as
\begin{equation}
 g=\widehat g+\mathfrak f(d\phi+A_a dx^a)^2,\qquad
 \phi_{\rm mat}=\phi+\vartheta,\qquad
 \mathcal K_a=\partial_a\vartheta-A_a,\quad \mathcal F=dA,
 \label{eq:axial-fields}
\end{equation}
where $a\in\{t,r,\theta\}$ and
$\widehat g=-e^{2\Phi}dt^2+e^{2\Lambda}dr^2+r^2d\theta^2$ is the background
quotient metric. The infinitesimal axial gauge acts as
$(A,\vartheta)\mapsto(A+d\lambda,\vartheta+\lambda)$.
Let $w_t=\rho+p_t$, and let $c_r,c_\theta$ be the shear speeds for azimuthal
polarization and propagation along the orthonormal radial and polar directions.
Define
$C^{ij}=c_r^2 e_r^i e_r^j+c_\theta^2 e_\theta^i e_\theta^j$ and
$\mathcal S^{ab}=-u^au^b+C^{ab}$ in the material. Products containing $w_t$
vanish in vacuum.

The quadratic action, with the background field equations imposed, is
\begin{equation}
 S_2=-\int\sqrt{-g}\left[
 \frac{\mathfrak f}{64\pi G}\mathcal F_{ab}\mathcal F^{ab}
 +\frac{w_t\mathfrak f}{2}\mathcal S^{ab}\mathcal K_a\mathcal K_b\right]d^4x.
 \label{eq:axial-action}
\end{equation}
The first term follows from reduction along the spacelike axial Killing
field. For the second, the local material gradient changes by
$\delta F^3{}_i=n_t\sqrt{\mathfrak f}\,\mathcal K_i$, and its time derivative
by $n_t\sqrt{\mathfrak f}\,\mathcal K_0$. The action Hessian and
$n_t^2\mathsf W_{33}=w_t$ give the displayed coefficient. This includes the
background stress as well as the shear modulus. Variation gives
\begin{equation}
 \nabla_b(\mathfrak f\mathcal F^{ab})
   =16\pi G w_t\mathfrak f\mathcal S^{ab}\mathcal K_b,
 \qquad
 \nabla_a(w_t\mathfrak f\mathcal S^{ab}\mathcal K_b)=0.
 \label{eq:axial-equations}
\end{equation}
These agree with the covariant axial system of Ref.~\cite{karlovini2007}.

\begin{proposition}[Nonnegative coupled axial energy]
\label{prop:axial-energy}
Consider a static spherical elastic shell satisfying Einstein's equations,
with a regular vacuum cavity,
a horizon-free asymptotically flat exterior, and $w_t,c_r^2,c_\theta^2>0$
throughout the material. Smooth finite-energy axisymmetric axial perturbations satisfying
the linearized constraints and vacuum matching have the nonnegative energy
\begin{equation}
 E_{\rm ax}=\int_\Sigma\!\alpha\left[
 \frac{\mathfrak f}{32\pi G}\left(|\mathcal E|^2+\tfrac12|\mathcal B|^2\right)
 +\frac{w_t\mathfrak f}{2}
   \left(\mathcal K_0^2+C^{ij}\mathcal K_i\mathcal K_j\right)\right]dV_\gamma,
 \label{eq:axial-energy}
\end{equation}
where $\alpha=e^\Phi$, $\mathcal E_i=\mathcal F_{ti}/\alpha$,
$\mathcal B_{ij}=\mathcal F_{ij}$, and $\mathcal K_0=\mathcal K_t/\alpha$.
It is conserved when the flux at spatial infinity vanishes. If the null
radiation limit exists and satisfies the outgoing falloff below, its flux
at future null infinity is nonnegative.
There is no exponentially growing finite-energy mode with nonzero
$(\mathcal F,\mathcal K|_{\rm material})$.
\end{proposition}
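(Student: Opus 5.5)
The plan is to derive the energy as the Noether charge of the quadratic action~\eqref{eq:axial-action} for the static Killing field $\partial_t$. The first step is to build the canonical stress--energy current $\mathcal T^a{}_b$ of $S_2$, viewed as a field theory for the pair $(A_a,\vartheta)$ on the fixed background. This current is gauge invariant, because $S_2$ depends only on $\mathcal F$ and $\mathcal K$. The background is static and the coefficients $\mathfrak f$, $w_t$, $\mathcal S^{ab}$ are $t$-independent, so the current $J^a=-\mathcal T^a{}_b(\partial_t)^b$ is divergence-free wherever the equations~\eqref{eq:axial-equations} hold. Its flux through $\Sigma=\{t=\mathrm{const}\}$ is then evaluated with the orthonormal splitting $\mathcal E_i=\mathcal F_{ti}/\alpha$, $\mathcal B_{ij}=\mathcal F_{ij}$, $\mathcal K_0=\mathcal K_t/\alpha$.

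For the electromagnetic-type term, the standard Maxwell calculation gives the density $\alpha\mathfrak f(|\mathcal E|^2+\tfrac12|\mathcal B|^2)/(32\pi G)$, where the $\tfrac12$ compensates the double counting of $\mathcal B_{ij}$. For the matter term, $-\tfrac12 w_t\mathfrak f\,\mathcal S^{ab}\mathcal K_a\mathcal K_b$ with $\mathcal S^{tt}=-u^tu^t$ produces the Legendre transform $\tfrac12 w_t\mathfrak f(\mathcal K_0^2+C^{ij}\mathcal K_i\mathcal K_j)$. This identifies $E_{\rm ax}$ with~\eqref{eq:axial-energy}, up to terms proportional to the constraint. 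Gauge-dependent pieces of the canonical current have the form $A_t\times$(Gauss constraint) plus a total divergence. I would discard the first using the linearized constraint, which is the $t$-component of the first equation in~\eqref{eq:axial-equations}. The divergence is handled by Stokes' theorem.

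Nonnegativity is immediate from $\mathfrak f\ge0$, $\alpha>0$, $w_t>0$ and $c_r^2,c_\theta^2>0$. Conservation follows by integrating $\nabla_aJ^a=0$ over a slab between two slices. The interior surface contributions need care at three places. At the axis they vanish by regularity, since $\mathfrak f$ vanishes there. At the regular center they vanish by smoothness of the perturbation. At the two material faces, the flux jump of $\mathfrak f\mathcal F^{ab}$ cancels by the vacuum matching, i.e. continuity of the perturbed induced metric and extrinsic curvature. The material flux $w_t\mathfrak f\mathcal S^{rb}\mathcal K_b\,\partial_t\vartheta$ vanishes by the free-traction condition, exactly as in the axial traction. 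Only the sphere at infinity remains. For outgoing falloff, with $\mathcal F\sim r^{-1}$ radiation fields and $\mathcal E\approx\star\mathcal B$, the flux reduces to $\tfrac{1}{32\pi G}\int\mathfrak f|\mathcal E|^2$ times $\alpha$, which is nonnegative. I therefore expect $dE_{\rm ax}/dt=-(\text{flux})\le0$.

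For the mode statement, take a mode $e^{st}$ with $\operatorname{Re}s>0$ and finite energy. Energy conservation or decrease then contradicts the growth $E_{\rm ax}(t)=e^{2\operatorname{Re}s\,t}E_{\rm ax}(0)$, unless $E_{\rm ax}=0$. For the real solution one takes the real part and averages over phase, or uses the complexified Hermitian energy. For a growing mode the flux through a finite large sphere is also controlled, since finite energy forces decay at spacelike infinity, so no radiation loss interferes. Vanishing energy forces $\mathcal F=0$ everywhere and $\mathcal K=0$ in the material, contrary to the hypothesis of nonzero $(\mathcal F,\mathcal K|_{\rm material})$.

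I expect the main obstacle to be the face and constraint bookkeeping. One must show that the canonical current differs from the displayed gauge-invariant density only by constraint terms and by a divergence whose boundary terms cancel at the moving faces. Here the axial displacement is tangential, so the faces do not move at linear order in this sector, which I would use to simplify the argument. Finally, non-axisymmetric harmonics would be reduced to this case by rotational invariance of the background and the standard decoupling of axial multipoles.
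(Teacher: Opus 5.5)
Your proposal is correct and follows essentially the paper's route: $E_{\rm ax}$ is the Legendre transform of the quadratic axial action, the leftover $\Pi^i\partial_iA_t+pA_t$ is removed with the Gauss constraint $\partial_i\Pi^i=p$, the face fluxes cancel by the torsional traction $w_tc_r^2\mathcal K_{\hat r}=0$ together with the metric matching, and the Hermitian energy rules out growing modes. One correction is that the canonical current is not itself gauge invariant (your second paragraph implicitly concedes this), and the boundary term $\oint\Pi^iA_t s_i$ that your Stokes step leaves at infinity is disposed of only by the asymptotically nonrotating boundary gauge, which the paper imposes explicitly.
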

\begin{proof}
The momenta of $A_i$ and $\vartheta$ are
$\Pi^i=\sqrt\gamma\,\mathfrak f\mathcal E^i/(16\pi G)$ and
$p=\sqrt\gamma\,w_t\mathfrak f\mathcal K_0$.
The Legendre transform of Eq.~\eqref{eq:axial-action} is the density in
Eq.~\eqref{eq:axial-energy}, plus $\Pi^i\partial_iA_t+pA_t$.
Integration by parts removes the latter terms by the Gauss constraint
$\partial_i\Pi^i=p$ and the asymptotically nonrotating boundary gauge.
All remaining coefficients are positive away from the regular axis.

At either material face, the torsional traction is
$w_t c_r^2\mathcal K_{\hat r}=0$. The metric matching conditions give
continuous tangential $A$ in a matching gauge and continuous normal flux
$\mathfrak f n_b\mathcal F^{ab}$. Thus the material energy flux vanishes
and the gravitational flux cancels across each interface. Regularity removes
axis and center terms. The remaining outward flux density is
\begin{equation}
 -\alpha^2\left[\frac{\mathfrak f}{16\pi G}
       \mathcal B^{ij}\mathcal E_j+w_t\mathfrak f C^{ij}\mathcal K_j\mathcal K_0
       \right]s_i.
 \label{eq:axial-flux}
\end{equation}
Here $s_i$ is the outward unit conormal in the background spatial metric.
Time translation invariance gives the energy balance. At fixed retarded
time, impose
\[
 \mathcal E_{\hat\theta}=O(r^{-2}),\qquad
 \mathcal B_{\hat r\hat\theta}=-\mathcal E_{\hat\theta}+o(r^{-2}),
\]
with regular angular profiles and an integrable radiation limit. The
material term is absent there, and the limiting gravitational flux is
nonnegative. For a complex mode, use the Hermitian extension of the energy;
exponential growth of a nonzero finite-energy field contradicts conservation.
Zero energy sets $\mathcal F=0$ and the material $\mathcal K=0$, leaving gauge
and constant label rotations. The argument uses no harmonic cutoff.
\end{proof}

Spherical symmetry removes the restriction to axisymmetric perturbations.
For normalized scalar harmonics $Y_{\ell m}$ on the unit sphere, define
$L_\ell=\ell(\ell+1)$, $\lambda_\ell=L_\ell-2$, and
$S_A^{\ell m}=\epsilon_A{}^B D_BY_{\ell m}/\sqrt{L_\ell}$ for $\ell\ge1$.
The vector and tensor normalizations are~\cite{martel2005}
\begin{equation}
 \int_{S^2}|S^{\ell m}|^2d\Omega=1,\qquad
 2\int_{S^2}|D_{(A}S_{B)}^{\ell m}|^2d\Omega=\lambda_\ell.
 \label{eq:axial-harmonic-norms}
\end{equation}
In Regge--Wheeler gauge, write
$\delta g_{aA}=r^2a_a(t,r)S_A^{\ell m}$ and the angular label displacement
as $q(t,r)S^{A}_{\ell m}$, with $a\in\{t,r\}$; the odd angular metric
perturbation vanishes. Put $e=\dot a_r-a_t'$. The corresponding modal energy is
\begin{align}
 E_{\ell m}=\int_0^\infty\!\alpha e^\Lambda r^4\Bigg\{
 &\frac{1}{32\pi G}\left[
 \frac{e^{-2\Lambda}}{\alpha^2}|e|^2+
 \frac{\lambda_\ell}{r^2}
   \left(\frac{|a_t|^2}{\alpha^2}+e^{-2\Lambda}|a_r|^2\right)\right]
 \nonumber\\[-1mm]
 &+\frac{w_t}{2}\left[
 \frac{|\dot q-a_t|^2}{\alpha^2}
 +c_r^2e^{-2\Lambda}|q'-a_r|^2
 +\frac{c_\theta^2\lambda_\ell}{r^2}|q|^2\right]\Bigg\}\,dr.
 \label{eq:axial-modal-energy}
\end{align}
The material terms are integrated only over the shell. For $\ell=1$ the
odd tensor harmonic vanishes; the remaining axial gauge is quotiented.

\begin{corollary}[All odd-parity harmonics]
\label{cor:axial-all-m}
Under the background hypotheses of Proposition~\ref{prop:axial-energy},
smooth odd-parity solutions with finite
$E_{\rm odd}=\sum_{\ell\ge1}\sum_{m=-\ell}^{\ell}E_{\ell m}$
obey the same energy and flux conclusions, provided the stated boundary
and radiation conditions hold with finite total flux. In particular,
there is no exponentially growing finite-energy physical odd-parity mode.
\end{corollary}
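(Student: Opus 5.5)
The plan is to reduce Corollary~\ref{cor:axial-all-m} to a modewise version of Proposition~\ref{prop:axial-energy}, and then sum over harmonics. On the static spherical background the odd-parity equations commute with the rotation group. Expanding $\delta g_{aA}$ and the angular label displacement in the axial vector harmonics $S_A^{\ell m}$ therefore decouples them into independent radial systems for $(a_t,a_r,q)$, one for each $(\ell,m)$. For each $m$ and fixed $\ell$, a rotation of the sphere carries $S^{\ell m}_A$ into a combination of the $S^{\ell m'}_A$ with the same $\ell$. The radial system and its energy therefore do not depend on $m$, and I may work with the $m=0$ (axisymmetric) representative.

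The first step is to identify $E_{\ell m}$ in Eq.~\eqref{eq:axial-modal-energy} as the angular integral of the density in Eq.~\eqref{eq:axial-energy}, for the general, not necessarily axisymmetric, covariant form of that density. Concretely:
\begin{itemize}
\item $\mathcal F_{tr}$ supplies the $|e|^2$ term through $\int|S|^2d\Omega=1$.
\item The components $\mathcal F_{aA}$ contain $D_{(A}S_{B)}$ in Regge--Wheeler gauge, and the normalization~\eqref{eq:axial-harmonic-norms} supplies the factor $\lambda_\ell/r^2$ multiplying $|a_t|^2$ and $|a_r|^2$.
\item The material terms produce $|\dot q-a_t|^2$ and $|q'-a_r|^2$, and the angular shear produces $c_\theta^2\lambda_\ell|q|^2/r^2$.
\end{itemize}
Orthogonality of the harmonics kills cross terms between different $(\ell,m)$. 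Hence $E_{\rm odd}=\sum E_{\ell m}$ is the full energy, and each term has nonnegative coefficients because $w_t,c_r^2,c_\theta^2>0$ and $\lambda_\ell\ge0$. For $\ell=1$ we have $\lambda_1=0$. The energy then reduces to the $e$ and kinetic/radial-shear terms, and the residual gauge $(a,q)\mapsto(a+d\lambda,q+\lambda)$ is quotiented exactly as at the end of the axisymmetric proof.

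The second step repeats, for each mode, the flux computation in the proof of Proposition~\ref{prop:axial-energy}:
\begin{itemize}
\item Zero torsional traction $w_tc_r^2(q'-a_r)=0$ holds at each face.
\item The gravitational fluxes across both faces cancel by metric matching.
\item Regularity holds at $r=0$.
\item The outgoing falloff at infinity gives a nonnegative radiated flux.
\end{itemize}
Summing the modal balances gives $dE_{\rm odd}/dt=-\mathcal P\le0$, where $\mathcal P$ is the total flux. The finiteness assumptions on $E_{\rm odd}$ and on the total flux justify interchanging the sum with differentiation and with the limit at infinity, via dominated or monotone convergence applied to the nonnegative partial sums.

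The main obstacle is the final claim that no physical mode grows exponentially. A growing mode $e^{st}$ with $\operatorname{Re}s>0$ lives in a single $(\ell,m)$. Its Hermitian modal energy would grow like $e^{2\operatorname{Re}s\,t}$, while the energy balance forces it to be nonincreasing; the energy must therefore vanish. Zero $E_{\ell m}$ sets $e=0$ and, for $\ell\ge2$, $a_t=a_r=0$ and $q=0$ in the material, so the mode is trivial. For $\ell=1$ it forces $e=0$ and $\dot q=a_t$, $q'=a_r$, which is pure gauge, that is, a label rotation. The delicate point I would check first is the $\ell\ge2$ Regge--Wheeler gauge. The terms $\lambda_\ell|a_t|^2$ must not be spoiled by integrations by parts in the Legendre transform, and the Gauss constraint term $\Pi^iA_t$ must still cancel modewise with the asymptotically nonrotating condition, which is automatic here because it constrains only $\ell=1$.
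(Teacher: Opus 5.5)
Your proposal is correct and follows the paper's proof essentially step for step. You compute the $m=0$ modal energy by substituting $(A_t,A_r,\vartheta)=(a_t,a_r,q)S^\phi$ into the axisymmetric energy, then use rotation invariance and orthogonality to make the quadratic action diagonal in $(\ell,m)$ with $m$-independent coefficients, transfer the modewise flux identity, sum using $\lambda_\ell\ge0$ and monotone convergence, and exclude growing modes up to the $\ell=1$ gauge and label zero modes. Your extra check that the Gauss-law boundary term causes no trouble for $\ell\ge2$ covers a point the paper leaves implicit.
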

\begin{proof}
The unit-sphere identities $D_AS^A=0$ and
$D^BD_BS_A=(1-L_\ell)S_A$ give Eq.~\eqref{eq:axial-harmonic-norms}
by integration by parts, using $\operatorname{Ric}_{AB}=\Omega_{AB}$.
For $m=0$, put $Z=S^\phi$, so that
$(A_t,A_r,\vartheta)=(a_t,a_r,q)Z$ and $A_\theta=0$.
Then $\int\sin^2\theta\,|Z|^2d\Omega=1$ and
$\int\sin^2\theta\,|Z'|^2d\Omega=\lambda_\ell$; substitution in
Eq.~\eqref{eq:axial-energy} gives Eq.~\eqref{eq:axial-modal-energy}.
The background action, constraints, and both interfaces are rotation
invariant. Orthogonality of vector and tensor harmonics therefore makes
the quadratic action diagonal in $(\ell,m)$ with these same coefficients
for every $m$. This also transfers the modewise flux identity.
Since $\lambda_\ell=(\ell-1)(\ell+2)\ge0$, every summand is nonnegative.
Conservation passes to the convergent sum, and monotone convergence applies
to the nonnegative outgoing flux. A nonzero growing mode would make at
least one of these conserved nonnegative modal energies grow. Completeness
of the harmonic expansion leaves only the gauge and label zero modes.
\end{proof}

The constitutive bound~\eqref{eq:material-box-bounds} supplies the material
hypotheses whenever the spherical equilibrium remains in the strain box. The $\ell=1$ sector includes changes in angular momentum and has no
radiative gravitational mode; a comparison at fixed angular momentum removes
that stationary variation. There is no axial $\ell=0$ mode. The estimate
retains the regular cavity and the unbounded exterior. It does not address
even-parity modes, a rotating background, or nonlinear axial evolution,
and supplies no positive frequency gap or decay rate.

In the exterior vacuum, the radiative variable obeys the Regge--Wheeler equation.
For $\ell\ge2$, put $\Psi=r^3e$ in the vacuum part of the modal action.
Its two Euler--Lagrange equations give
\begin{equation}
 a_t=-\frac{F(r\Psi)'}{\lambda_\ell r^2},\qquad
 a_r=-\frac{\dot\Psi}{\lambda_\ell Fr},\qquad
 \ddot\Psi-\partial_{r_*}^2\Psi+V_\ell\Psi=0,
 \quad \frac{dr_*}{dr}=F^{-1},
 \label{eq:axial-vacuum-reconstruction}
\end{equation}
where $V_\ell=F[\ell(\ell+1)/r^2-6M/r^3]$.
Substitution in the energy and one integration by parts give the usual
Regge--Wheeler energy, up to a positive normalization, for compact exterior
data. Thus the radiative mode has been retained. The reconstruction divides
by $\lambda_\ell$ and does not apply to the dipole.

\section{Radial constraints and the constrained mass}
\label{app:radial}

All variations in this appendix preserve the material annulus, its reference
metric, and its particle content. Primes denote areal-radius derivatives on
the background; $R$ remains the particle label. Put
$\mathcal C=n_r\partial_{n_r}p_t$. Commutation of the mixed derivatives of
$\rho(n_r,n_t,n_t)$ gives
\begin{equation}
 \mathcal B=2\mathcal C-2q.
 \label{eq:elastic-reciprocity}
\end{equation}
This identity is needed for self-adjoint radial dynamics; independently
prescribed anisotropic pressure perturbations need not satisfy it.

In polar areal coordinates, material kinematics gives
$\Delta n_r/n_r=-\xi'-\Delta\Lambda$ and
$\Delta n_t/n_t=-\xi/r$. Before eliminating the metric,
$\delta\Lambda=\delta m/(rF)$. Differentiating the mass constraint and using
$p_r'=-w\Phi'+2q/r$ yields
\begin{equation}
 D_m'+\frac{4\pi G r w}{F}D_m=0,\qquad
 D_m=\delta m+4\pi G r^2w\xi.
 \label{eq:mass-constraint-propagation}
\end{equation}
Vacuum matching at the moving inner face gives $\Delta m(a)=0$.
Since $p_r(a)=0$, this is $D_m(a)=0$. Hence $D_m=0$ throughout,
which proves Eq.~\eqref{eq:radial-constraints}. It also gives
$\Delta m=-4\pi G r^2p_r\xi$, so $\Delta m(b)=0$ at the outer face.
The exterior mass has no linear perturbation.

The remaining radial Einstein constraint is
\begin{equation}
 \delta\Phi'=\frac{4\pi G r}{F}\delta p_r
 -\frac{4\pi G w}{F}(1+2r\Phi')\xi.
 \label{eq:lapse-perturbation}
\end{equation}
Normalization at infinity fixes its integration constant. Radial momentum
conservation, including its inertial term, reads
\begin{equation}
 w e^{2\Lambda-2\Phi}\ddot\xi+\delta p_r'
 +(\delta\rho+\delta p_r)\Phi'+w\delta\Phi'
 -\frac2r(\delta p_t-\delta p_r)=0.
 \label{eq:full-radial-force}
\end{equation}
Let $\Pi=\Delta p_r$. Substituting the constitutive variations, the
constraints, and Eq.~\eqref{eq:elastic-reciprocity} into the last four terms
gives exactly
\begin{equation}
 \delta\mathcal F=\Pi'+(\Lambda'+2\Phi'+Y)\Pi+\mathcal V\xi,
 \qquad \Pi=-e^{-\Lambda-2\Phi}\mathfrak t.
 \label{eq:radial-force-reduction}
\end{equation}
Here $\mathcal F=p_r'+w\Phi'-2q/r$ is the static force residual.
Equation~\eqref{eq:radial-system} follows, with the elastic pulsation
coefficients of Ref.~\cite{karlovini2004}; its anisotropy variable is $q/3$
in our notation. The derivation retains both metric perturbations.

For the nonlinear energy, use comoving spherical coordinates. Then
$n_r=e^{-\lambda}$ and
$e^{-2\lambda}r_R^2=1+V^2-2m/r$. The exact Einstein equations give
$m_R=4\pi G r^2\rho r_R$ and
$u(m)=-4\pi G r^2p_r V$~\cite{karlovini2004}.
They imply Eq.~\eqref{eq:nonlinear-mass-functional} and conservation at the
free outer surface. The inner vacuum mass remains zero.

To vary the endpoint mass, write its geometrized ODE as $m_R=f(R,r,r_R,V,m)$.
At $V=0$, direct differentiation gives
\begin{equation}
 f_{r_R}=-4\pi G r^2p_r,\qquad
 f_m=-\frac{4\pi G r r_R w}{F}.
 \label{eq:mass-ode-derivatives}
\end{equation}
The adjoint weight solves $\chi_R=-f_m\chi$, $\chi(B)=1$.
At equilibrium it is $\chi=e^{\Phi+\Lambda}$, because
$(\Phi+\Lambda)'=4\pi G r w/F$ and $\Phi(b)+\Lambda(b)=0$.
For a material displacement $u(R)$, integration by parts gives
\begin{equation}
 D\mathcal E_G[u,0]
 =-4\pi[\chi r^2p_ru]_A^B
 +4\pi\int_A^B\chi r^2r_R\mathcal F u\,dR=0.
 \label{eq:mass-first-variation}
\end{equation}
Thus both free-traction conditions are natural boundary conditions of the
mass functional.

Differentiate Eq.~\eqref{eq:mass-first-variation} at equilibrium. Terms
multiplying $p_r$ at the faces or $\mathcal F$ in the bulk vanish.
With $\zeta=r^2e^{-\Phi}u$, the remaining boundary term is
$4\pi[\zeta\mathfrak t]_a^b$. Equation~\eqref{eq:radial-force-reduction} reduces
the volume term to
$-4\pi\int_a^b\zeta(\mathfrak t'+Y\mathfrak t+Q\zeta)\,dr$.
Its integration by parts cancels that boundary term and leaves
$4\pi\int[P(\zeta'-Y\zeta)^2-Q\zeta^2]dr$.
This calculation allows arbitrary endpoint displacements and does not
assume the linearized traction conditions on the variations.

The mass ODE is even in $V$, so mixed displacement--velocity terms vanish
at $V=0$. Its velocity Hessian is
$f_{VV}=4\pi G r^2r_Rw/F$. Multiplication by $\chi/G$ and conversion to
$dr$ give the kinetic term $4\pi Wr^4v^2$ in
Eq.~\eqref{eq:mass-hessian}. At $G=0$ and relaxation the full
Hessian reduces to
\begin{equation}
 4\pi\int_A^B R^2\left[
 \kappa(u_R+2u/R)^2+\frac{4\mu_0}{3}(u_R-u/R)^2+m_0v^2\right]dR.
 \label{eq:relaxed-mass-hessian}
\end{equation}
The elastic part is bounded below by
$c_0(u_R^2+2u^2/R^2)$ for any
$0<c_0\le\min(3\kappa,2\mu_0)$.

Finally, choose a convex $W^{1,\infty}\times L^\infty$ neighborhood on which
$r,r_R$ and $1+V^2-2G\mathfrak m/r$ have positive lower bounds.
The mass ODE and its first two variational equations then have uniformly
bounded smooth coefficients on the compact material interval. Gronwall's
inequality bounds the first mass variation in $L^\infty$ by the
$H^1\times L^2$ variation norm. The second variational equation contains
only products of these first variations and of $(u,u_R,v)$.
Cauchy--Schwarz therefore bounds its endpoint by the squared norm.
The same estimates for differences of coefficients give continuity of the
Hessian in quadratic-form norm as the base state varies in this neighborhood.
The equilibrium Hessian is coercive by Eq.~\eqref{eq:radial-gap}; shrinking
the neighborhood preserves coercivity. Taylor's integral remainder proves
Eq.~\eqref{eq:nonlinear-coercivity}. This argument supplies no estimate of
higher derivatives of an evolving solution.

\subsection{Construction of compatible radial initial data}
Write $\Gamma^2=1+V^2-2m/r$. On an initial material slice set
\begin{equation}
 \gamma=\frac{r_R^2}{\Gamma^2}dR^2+r^2d\Omega^2,\qquad
 K^R{}_R=-\frac{V_R}{r_R},\qquad
 K^\theta{}_\theta=K^\phi{}_\phi=-\frac Vr.
 \label{eq:radial-initial-geometry}
\end{equation}
The normal is the material velocity, so the momentum density vanishes.
Direct calculation gives
\begin{equation}
 D_j(K^j{}_i-\delta^j_i K)=0,\qquad
 {}^{(3)}R+K^2-K_{ij}K^{ij}=\frac{4m_R}{r^2r_R}.
 \label{eq:radial-initial-constraints}
\end{equation}
Thus the mass ODE~\eqref{eq:nonlinear-mass-functional}, with
$n_r=\Gamma/r_R$ and $n_t=R/r$, solves the Hamiltonian constraint as well.

Fix the equilibrium outer radius $b$ and let $\mathsf M$ vary near $M_*$.
Solve the static material ODE backward on an outer collar, with terminal
data $r(B)=b$, $m(B)=\mathsf M$, and $p_r(B)=0$. Positive
$\partial_{n_r}p_r$ determines the terminal radial density and guarantees
a smooth family of collar solutions $r^{\mathsf M}_{\rm out}$.
Choose a smooth cutoff $\chi$ supported in that collar and equal to one
on a smaller collar adjoining $B$, with support disjoint from that of $v$.
Define
\begin{equation}
 r_{\mathsf M}=r_*+\chi(r^{\mathsf M}_{\rm out}-r_*),\qquad
 H(\mathsf M,\epsilon)=G\mathcal E_G[r_{\mathsf M},\epsilon v]-\mathsf M.
 \label{eq:collar-mass-matching}
\end{equation}
At $(M_*,0)$, $H=0$, $H_{\mathsf M}=-1$, and $H_\epsilon=0$:
the derivatives of $\mathcal E_G$ vanish at the free equilibrium for
every displacement and velocity. Hence $H(\mathsf M(\epsilon),\epsilon)=0$
has a unique local smooth solution with $\mathsf M'(0)=0$.
The second derivative is the kinetic Hessian in~\eqref{eq:mass-hessian},
which proves~\eqref{eq:compatible-data-mass}.

Integrate the mass constraint from the inner face. Near that face its
data are the original static solution. On the outer collar, both this
mass function and the local static one obey the same first-order ODE
with the same terminal mass, so uniqueness makes them equal there.
The reconstructed initial data are therefore static near each face and
match flat and Schwarzschild vacuum, respectively. Local static extensions
give all time derivatives required for interface compatibility; lapse
normalizations on the two collars may differ by constant time rescalings.
An initial harmonic gauge can be chosen for these geometrically compatible
data. This construction does not hold the exterior mass fixed while changing
the interior energy.

The exact radial evolution also displays the missing derivative control.
With $D_t=\alpha^{-1}\partial_t$, $\nu=n_r$, $\eta=R/r$, and
$q=p_t-p_r$, its comoving equations are
\begin{equation}
 \begin{gathered}
 D_t r=V,\qquad D_t\nu=-\frac{\nu^2}{\Gamma}V_R,\qquad
 D_t m=-4\pi G r^2p_rV,\\
 D_tV=-\frac{\Gamma\nu}{w}(p_r)_R
       +\frac{2\Gamma^2q}{wr}-\frac{m}{r^2}-4\pi G rp_r,\\
 (\log\alpha)_R=-\frac{(p_r)_R}{w}+
                         \frac{2q r_R}{wr}.
 \end{gathered}
 \label{eq:nonlinear-radial-evolution}
\end{equation}
The identity $r_R=\Gamma/\nu$ and the mass constraint are propagated.
Direct spacetime curvature gives
$G^R{}_R=-2D_tV/r-2m/r^3+2\Gamma\nu(\log\alpha)_R/r$;
the mixed Einstein equation gives $D_t\nu=-\nu V_R/r_R$.
These identities recover the displayed evolution from the radial Einstein
equation and stress conservation. With $r_R>0$, the angular equation follows
from the radial contracted Bianchi identity. Thus no independent angular
Einstein equation is omitted.
For the pair $(\nu,V)$ the principal matrix and a symmetrizer are
\begin{equation}
 A=\alpha\begin{pmatrix}0&\nu^2/\Gamma\\
     \Gamma\nu\partial_\nu p_r/w&0\end{pmatrix},\qquad
 S=\operatorname{diag}\!\left(\frac{\Gamma^2\partial_\nu p_r}{\nu w},1\right).
 \label{eq:radial-evolution-symbol}
\end{equation}
Both $S$ and the squared coordinate characteristic speed
$\alpha^2\nu^3\partial_\nu p_r/w$ are positive on the admissible branch.
The local proper radial distance is $d\ell=dR/\nu$ and proper time is
$d\tau=\alpha\,dt$, so the physical squared speed is
$\nu\partial_\nu p_r/w$, in agreement with the material acoustic test.
Differentiating the traction condition once gives
$\mathcal A V_R/r_R+\mathcal B V/r=0$ at each face.
Higher derivatives of this condition are supplied by the static collars
at the initial time. A nonlinear continuation estimate must control the
coefficients and differentiated equations in~\eqref{eq:nonlinear-radial-evolution};
the conserved endpoint mass alone does not provide that estimate.

For the boundary estimate, first eliminate $r_R$ and $\eta_R$ from the
force equation using $r_R=\Gamma/\nu$ and
$\eta_R=1/r-\eta\Gamma/(\nu r)$. The lapse introduces no spatial derivative
loss. Indeed, define
$h(\nu,\eta)=\int_1^\nu (\partial_a p_r)/(\rho+p_r)\,da$.
Then
\begin{equation}
 (\log\alpha+h)_R=
 \left(h_\eta-\frac{\partial_\eta p_r}{w}\right)
 \left(\frac1r-\frac{\eta\Gamma}{\nu r}\right)
 +\frac{2q\Gamma}{\nu wr}.
 \label{eq:radial-lapse-no-loss}
\end{equation}
The right side contains only the state. Integration from $B$, with
$\alpha(B)=1$, makes $z\mapsto\alpha[z]$ locally Lipschitz in $H^s$ on
an admissible neighborhood, including its boundary trace.

The pressure formulation below preserves the constraints, including
off-constraint errors. Define $C_1=r_R-\Gamma/\nu$ and
$C_2=m_R-4\pi G r^2\rho r_R$. Direct differentiation, using the integrated
lapse and $[D_t,\partial_R]= (\log\alpha)_R D_t$, gives
\begin{align}
 D_tC_1&=\frac{\eta V}{r}\left(h_\eta-\frac{\partial_\eta p_r}{w}\right)C_1,\nonumber\\
 D_tC_2&=4\pi G r^2\left[\frac{w\nu}{\Gamma}V_R+
 \frac{V}{r}(2q+\eta\partial_\eta p_r-\eta w h_\eta)\right]C_1.
 \label{eq:radial-constraint-errors}
\end{align}
These homogeneous equations keep initially vanishing constraints zero.

\subsection{An estimate that includes reflections}
Use $p=p_r$ as a variable, writing $\nu=N(p,\eta)$ by
$\partial_\nu p_r>0$. Subscripts $t$ and $R$ on $p$ in this subsection
denote derivatives. Put
\begin{equation}
 a_p=\frac{\nu^2\partial_\nu p_r}{\Gamma},\qquad
 b_p=\frac{\Gamma\nu}{w},\qquad
 f_p=-\frac{\eta\partial_\eta p_r}{r}V,\qquad
 f_V=\frac{2\Gamma^2q}{wr}-\frac{m}{r^2}-4\pi G rp.
 \label{eq:pressure-coefficients}
\end{equation}
Here the material derivative $\partial_\eta p_r$ is taken at fixed $\nu$.
The exact evolution becomes
\begin{align}
 p_t+\alpha a_p V_R&=\alpha f_p,&
 V_t+\alpha b_p p_R&=\alpha f_V,\nonumber\\
 r_t&=\alpha V,&m_t&=-4\pi G\alpha r^2pV,
 \qquad p(t,A)=p(t,B)=0.
 \label{eq:pressure-evolution}
\end{align}
The acoustic symmetrizer is $\operatorname{diag}(a_p^{-1},b_p^{-1})$;
its flux is $\alpha PV$ for pressure and velocity variations $(P,V)$.
There is one incoming characteristic at each face, with speeds
$\pm\alpha\sqrt{a_pb_p}$. The condition $P=0$ relates the two
characteristic amplitudes by reflection and has zero energy flux. No
boundary condition is needed for the two ordinary differential equations.

Let $P_j=\partial_t^j(p-p_*)$ and $V_j=\partial_t^j V$.
Every $P_j$ vanishes at both faces. Multiplication of the time-differentiated
acoustic equations by $(P_j/a_p,V_j/b_p)$ therefore leaves no boundary
term, at any order. Take
\begin{equation}
 E_s=\frac12\sum_{j=0}^s\int_A^B
       \left(\frac{P_j^2}{a_p}+\frac{V_j^2}{b_p}\right)dR
       +\frac12\|(r-r_*,m-m_*)\|_{L^2}^2.
 \label{eq:radial-higher-energy}
\end{equation}
On a bounded admissible $H^s$ neighborhood this energy is equivalent to
$\|z-z_*\|_{H^s}^2$. To see why no derivatives are missing, use
\begin{equation}
 \begin{aligned}
 V_R&=\frac{f_p-p_t/\alpha}{a_p},&
 p_R&=\frac{f_V-V_t/\alpha}{b_p},\\
 r_R&=\frac{\Gamma}{N(p,\eta)},&
 m_R&=4\pi G r^2\rho\frac{\Gamma}{N(p,\eta)}.
 \end{aligned}
 \label{eq:radial-derivative-recovery}
\end{equation}
After subtracting the static equations, induction recovers each mixed
space--time derivative of total order at most $s$ from the pure time
energies. The last two equations recover one spatial derivative of $r,m$
from lower-order state derivatives. Conversely, the evolution recovers
pure time derivatives from the spatial $H^s$ norm.

The lapse boundary trace needs attention in this induction. At $B$,
$p=0$ implies $\nu=N(0,B/r(B))$. Hence the endpoint value of $h$ in
Eq.~\eqref{eq:radial-lapse-no-loss} is a smooth function of $r(B)$ alone.
Since $r_t(B)=V(B)$, its $j$th time derivative requires at most the trace
of $\partial_t^{j-1}V$. Its $H^1$ norm is controlled by
Eq.~\eqref{eq:radial-derivative-recovery} at order $j$.
The integral part of the lapse is bounded in $L^2$ by its integrand's
$L^2$ norm. Thus $\partial_t^j\alpha$ is controlled at the same order,
without a trace of $\partial_t^j p$ or $\partial_t^j V$.

The one-dimensional product estimates now bound every commutator.
The top terms have the form $\partial_t^j a_p\,V_R$,
$\partial_t^j b_p\,p_R$, or $\partial_t^j\alpha$ times a first
state derivative; that first derivative is bounded in $L^\infty$ for
$s\ge4$. The remaining products have lower total order and the same
bound. The equilibrium residual vanishes, so all forcing terms in the
difference equations contain a perturbation. Differentiation of the
weights and of the last two terms in~\eqref{eq:radial-higher-energy}
therefore gives
\begin{equation}
 |\dot E_s|\le C E_s.
 \label{eq:radial-reflection-estimate}
\end{equation}
The constants are uniform in a fixed bounded neighborhood with
$r,\nu,\Gamma,w,\partial_\nu p_r$ bounded away from zero.

The construction must also estimate fields that do not yet satisfy the
constraints. For this purpose replace the last term of
Eq.~\eqref{eq:radial-higher-energy} by
$\tfrac12\|(r-r_*,m-m_*)\|_{H^s}^2$, and call the resulting energy
$\widetilde E_s$. The first two equations of
Eq.~\eqref{eq:radial-derivative-recovery} recover derivatives of $p,V$
without using either constraint. Together with the retained $H^s$ norms
of $r,m$, they give
$\widetilde E_s\asymp\|(p-p_*,V,r-r_*,m-m_*)\|_{H^s}^2$.
The integrated lapse is a smooth state functional on this space. In
particular, its endpoint value involves $r(B)$ through $p(B)=0$;
its differentiated traces are controlled by the acoustic recovery equations
as above. The $H^s$ product estimate applied to
$r_t=\alpha V$ and $m_t=-4\pi G\alpha r^2pV$ bounds the derivative of
the added norms by $C\widetilde E_s$. Thus the energy estimate closes
without imposing constraints on intermediate fields.

Freeze the coefficients in~\eqref{eq:pressure-evolution}, retain the
homogeneous pressure conditions, and solve the acoustic characteristics
and the $r,m$ equations with compatible initial time jets. The preceding
estimate bounds the iterates in a fixed $H^s$ neighborhood. The difference
system has the same zero pressure trace; the state-to-lapse map is locally
Lipschitz in $H^{s-1}$, and the product estimate at that order gives
contraction for sufficiently short time. This constructs the radial solution
and permits continuation while its $H^s$ norm stays bounded and its state
remains admissible. The assumed interface compatibility removes corner
terms. Only after taking the limit do
Eqs.~\eqref{eq:radial-constraint-errors} propagate the initially vanishing
constraints. Their use then reduces $\widetilde E_s$ to the equivalent
constrained energy $E_s$. At either face $p=0$ keeps the vacuum mass constant; the
induced metric and extrinsic curvature match the regular cavity and the
Schwarzschild exterior at the moving face. The initial geometric
compatibility assumed in Corollary~\ref{cor:radial-short-time} is the one
in Ref.~\cite{andersson2014}. The argument does not require static collars
after the initial slice. Gronwall and Sobolev embedding close the bootstrap
on any fixed finite interval at sufficiently small amplitude.

\section{Interior wave breaking with self-gravity}
\label{app:shell-steepening}

\subsection{A radial system with bounded sources}

Use polar areal coordinates while $F=1-2m/r>0$,
\begin{equation}
 ds^2=-\alpha^2dt^2+a^2dr^2+r^2d\Omega^2,\qquad a=F^{-1/2},
 \qquad \alpha\to1\quad(r\to\infty).
 \label{eq:polar-evolution-metric}
\end{equation}
Here $a(t,r)$ is a metric coefficient, not the inner face radius.
Let $v=\tanh\vartheta$ be the radial velocity measured by the slice
normal, $\gamma_v=(1-v^2)^{-1/2}$, $\nu=n_r$, $\eta=R/r$,
$p=p_r$, $q=p_t-p$, and $w=\rho+p$. The normal-frame source components are
\begin{equation}
 E=w\gamma_v^2-p,\qquad j=w\gamma_v^2v,\qquad
 P=w\gamma_v^2v^2+p.
 \label{eq:polar-sources}
\end{equation}
The radial Einstein equations and the material-label identity give
\begin{align}
 m_r&=4\pi G r^2E,&m_t&=-4\pi G r^2\alpha j/a,
 & (\log\alpha)_r&=a^2(m/r^2+4\pi G rP)=:L,\nonumber\\
 \eta_r&=(a\nu\gamma_v-\eta)/r,&
 \eta_t&=-\alpha\nu\gamma_v v/r,&
 a_t/(\alpha a)&=-4\pi G ra j=:B_g.
 \label{eq:polar-constraints-evolution}
\end{align}
In particular, the first derivatives of $m,\eta$ and the spatial derivative
of the lapse depend on state values only. 

Define $D=\alpha^{-1}\partial_t+v a^{-1}\partial_r$ and
$S=v\alpha^{-1}\partial_t+a^{-1}\partial_r$. Particle conservation and
the radial projection of stress conservation become
\begin{align}
 D\log\nu+S\vartheta&=F_1:=-vL/a-B_g,\nonumber\\
 D\vartheta+c^2S\log\nu&=F_2:=-L/a-vB_g+\frac{2q}{war}
                  -\frac{p_\eta}{w}\frac{\nu/\gamma_v-\eta/a}{r},
 \qquad c^2=\frac{\nu p_\nu}{w}.
 \label{eq:polar-acoustic-system}
\end{align}
Indeed, $D\eta=-v\eta/(ar)$ and
$S\eta=(\nu/\gamma_v-\eta/a)/r$; these remove the tangential-density
and spherical-divergence terms from the principal part. Hyperelasticity
makes the energy equation a consequence of these kinematics and the
stress law. Equations~\eqref{eq:polar-constraints-evolution} and
\eqref{eq:polar-acoustic-system} retain the full radial Einstein--elastic
system, including the tangential stress.

Set
\begin{equation}
 I(\nu,\eta)=\int_1^\nu\frac{c(s,\eta)}s\,ds,\quad
 U_\pm=\vartheta\pm I(\nu,\eta),\quad
 \lambda_\pm=\alpha\ell_\pm,\quad
 \ell_\pm=\frac{v\pm c}{a(1\pm vc)}.
 \label{eq:shell-characteristic-variables}
\end{equation}
Positive sound speed makes this a smooth change of the two acoustic
variables at fixed $\eta$. Direct substitution gives
\begin{equation}
 (\partial_t+\lambda_\pm\partial_r)U_\pm=\alpha f_\pm,
 \qquad
 f_\pm=\frac{F_2\pm cF_1\pm I_\eta(D\eta\pm cS\eta)}{1\pm vc}.
 \label{eq:shell-characteristic-sources}
\end{equation}
The sources are smooth functions of $(U_+,U_-,\eta,m,r)$ on any compact
admissible state set bounded away from $r=0$ and $F=0$.

The outgoing characteristic speed changes strictly with the outgoing
wave amplitude throughout the strain box. This property, called genuine
nonlinearity, is what permits compression to steepen.
At fixed $\eta$, write $\rho_\nu=\partial_\nu\rho$ and similarly for
higher derivatives. The explicit law gives
\begin{equation}
 \frac{\rho_{\nu\nu}}{m_0}
 =\frac{6\nu+12\eta^4+2\eta^4/\nu^3}{60}>\frac14,
 \qquad
 \frac{\nu\rho_{\nu\nu\nu}}{m_0}
 =\frac{\nu-\eta^4/\nu^3}{10},\qquad
 |\nu\rho_{\nu\nu\nu}|<\frac{m_0}{100}.
 \label{eq:shell-density-derivatives}
\end{equation}
For example, bound $\eta^4/\nu^3$ between
$(99/100)^4/(101/100)^3$ and $(101/100)^4/(99/100)^3$;
substitution gives both displayed bounds. Since
$c^2=\nu\rho_{\nu\nu}/\rho_\nu\le2/5$,
\begin{equation}
 \mathfrak g:=1-c^2+\frac{\nu c_\nu}{c}
 =\frac32(1-c^2)+\frac{\nu\rho_{\nu\nu\nu}}{2\rho_{\nu\nu}}
 >\frac{22}{25},\qquad
 \partial_{U_+}\ell_+
 =\frac{(1-v^2)\mathfrak g}{2a(1+vc)^2}>0.
 \label{eq:shell-genuine-nonlinearity}
\end{equation}
The derivative holds $U_-,\eta,m,r$ fixed. At $G=0$, $\nu=\eta=1$ and $v=0$, it is $1/2$.
This positive coefficient drives the steepening of a falling profile.

\subsection{Removing the other characteristic gradient}

We give the reduction explicitly because bounded sources by themselves
would not justify a scalar blowup comparison. Write $u=U_+$, $w_-=U_-$,
$z=(\eta,m,r)$ and $d=\ell_+-\ell_->0$. Equations
\eqref{eq:polar-constraints-evolution} determine smooth functions
$g(z,u,w_-)=z_r$ and $b(z,u,w_-)=z_t/\alpha$, with the last components
$1$ and $0$. Thus
$\alpha^{-1}(\partial_t+\lambda_+\partial_r)z=b+\ell_+g=:g_+$.
All derivatives below in $u,w_-,z$ are ordinary state derivatives.
Choose smooth primitives on a small rectangular state neighborhood,
\begin{equation}
 h_{w_-}=\frac{(\ell_+)_{w_-}}d,\qquad
 k_{w_-}=-\frac{e^h(f_+)_{w_-}}d,
 \qquad y=e^h\partial_r u+k.
 \label{eq:shell-gradient-transform}
\end{equation}
Integration in $w_-$ at fixed $(u,z)$ defines $h$ and then $k$; their
integration constants may be zero. Neither depends on the lapse.
Put
\begin{align}
 B_0&=(f_+)_u-(\ell_+)_z\cdot g-L\ell_+,&
 C_0&=(f_+)_z\cdot g+Lf_+,\nonumber\\
 B_1&=B_0+h_u f_++h_{w_-}f_-+h_z\cdot g_+,&
 C_1&=e^h C_0+k_u f_++k_{w_-}f_-+k_z\cdot g_+.
 \label{eq:shell-riccati-coefficients}
\end{align}
Differentiating Eq.~\eqref{eq:shell-characteristic-sources} now gives exactly
\begin{equation}
 (\partial_t+\lambda_+\partial_r)y
 =-\alpha e^{-h}(\ell_+)_u(y-k)^2+\alpha B_1(y-k)+\alpha C_1.
 \label{eq:shell-exact-riccati}
\end{equation}
The $u_r$ equation contains the terms
$-\alpha(\ell_+)_{w_-}u_r(w_-)_r$ and
$\alpha(f_+)_{w_-}(w_-)_r$. The identity
\[
 (\partial_t+\lambda_+\partial_r)w_-=\alpha f_-+\alpha d(w_-)_r
\]
shows how the two primitives in~\eqref{eq:shell-gradient-transform} cancel them
in turn. No derivative of $\alpha$ in time is taken. This avoids estimating
a time derivative of a nonlocal lapse integral by an unbounded strain
gradient.

On a compact state neighborhood, the lapse is bounded above and below by
integrating $L$ through the finite shell and matching to the exterior.
Equation~\eqref{eq:shell-genuine-nonlinearity} therefore bounds the
quadratic coefficient away from zero; all other coefficients are bounded
independently of spatial gradients. Expansion of
Eq.~\eqref{eq:shell-exact-riccati} gives~\eqref{eq:shell-riccati}.
This is the usual characteristic steepening argument
\cite{abbrescia2023shocks}, with the geometric and elastic source terms
retained explicitly.

\subsection{Compatible pulses and continuation}

A polar initial slice differs from the comoving slice in
Eq.~\eqref{eq:nonlinear-mass-functional}. Prescribe an increasing embedding
$r(R)$ and a velocity $v(R)$ supported away from the faces. The polar
Hamiltonian constraint is the regular scalar ODE
\begin{equation}
 \begin{gathered}
 m_R=4\pi G r^2r_R\left(\frac{\rho+p}{1-v^2}-p\right),
 \qquad m(A)=0,\\
 \nu=\frac{\sqrt{1-2m/r}\sqrt{1-v^2}}{r_R},\qquad
 \eta=R/r.
 \end{gathered}
 \label{eq:polar-initial-mass}
\end{equation}
The spatial metric is $a^2dr^2+r^2d\Omega^2$. With our sign convention,
$K^r{}_r=4\pi G ra j$ and $K^\theta{}_\theta=K^\phi{}_\phi=0$ solve
the momentum constraint; $K^2-K_{ij}K^{ij}=0$, so the Hamiltonian
constraint is precisely~\eqref{eq:polar-initial-mass}.

Use the same static outer collar $r^{\mathsf M}_{\rm out}$ and cutoff
as in~\eqref{eq:collar-mass-matching}, taking
$r=r_*+\chi(r^{\mathsf M}_{\rm out}-r_*)$ and
$v(R)=\varepsilon f((R-R_0)/\ell)$, where $f$ is a fixed smooth compact
bump with a falling edge. Choose its support disjoint from the collar.
At $v=0$, Eq.~\eqref{eq:polar-initial-mass} reduces to the static mass
functional. Its first variation at equilibrium vanishes for every
embedding variation; hence the derivative of
$m(B;\mathsf M,v)-\mathsf M$ with respect to $\mathsf M$ is $-1$.
The implicit function theorem supplies the matching mass. The source
change is $O(v^2)$ on an interval of length $O(\ell)$, so ordinary ODE
dependence gives $\mathsf M-M_*=O(\varepsilon^2\ell)$ uniformly in the
pulse width. Equivalently, the endpoint map is smooth in the uniform
velocity norm; it uses no derivatives of $v$. The matching solution is
static near both faces by ODE uniqueness, and therefore satisfies the
geometric and traction compatibility conditions to every order.

Inside the pulse, $r=r_*$, while $m_R$ and $\eta_R$ remain bounded.
Differentiating the density in~\eqref{eq:polar-initial-mass} yields
\begin{equation}
 \partial_r U_+
 =\frac{\gamma_v^2(1-cv)}{r_R}\,v_R+O(1).
 \label{eq:shell-initial-compression}
\end{equation}
The remainder is bounded independently of $\ell$. The positive coefficient
of $v_R$ proves the initial compression asserted in the theorem.

For completeness, the continuation argument uses state bounds before it
uses derivative bounds. Equations~\eqref{eq:shell-characteristic-sources}
along their two characteristics, and the bounded equations for $z$, keep
the values in a fixed compact neighborhood for a short time independent
of $\ell$. Choose a wider interior interval around the pulse. Its distance
from either face and the bounded light speed $\alpha/a$ give a positive
travel time independent of $\ell$. The complementary initial regions
have uniform smooth bounds; near each face they are exactly static until
an incoming signal arrives. The distant pulse can change the lapse there
by a spatially constant factor depending on time. This is a time
reparametrization of the static collar and produces no material forcing.

Smooth local evolution follows from the radial hyperbolic system and the
compatible free-face construction. If the state stays admissible and
$\|\partial_r(\nu,v)\|_\infty$ stays bounded on a finite time interval,
differentiating the system gives linear estimates for each higher Sobolev
norm with coefficients controlled by these first derivatives. The radial
constraints recover the metric and label derivatives without loss; the
free-face energy has zero traction work as in
Appendix~\ref{app:radial}. Thus the solution continues unless a first state
derivative becomes unbounded. Apply the Riccati comparison before the
uniform state and travel times. Any earlier breakdown has the same
continuation obstruction, and the uniformly regular complementary regions
place it in the material interior. This proves
Theorem~\ref{thm:shell-steepening}. Entropy-admissible weak evolution
beyond $T_*$ is a separate problem.

\section*{Data availability}
The numerical examples use no external observational or experimental data.
The constitutive law, field equations, boundary conditions, parameters,
and derivations needed to reproduce them are given in the article.

\acknowledgments
This work is financially supported by VinUniversity under the Environmental
Intelligence (CEI) Grant (No. VUNI.CEI.FS 0009).

\end{document}